\documentclass[
  reprint,
  floatfix,
  superscriptaddress,
  amsmath,amssymb,
  sort,
]{revtex4-2}

\usepackage{amsthm}
\usepackage{graphicx}
\usepackage{xcolor}
\usepackage{bm}
\usepackage{dcolumn}
\usepackage{booktabs}
\usepackage{float}
\usepackage{algorithm}
\usepackage{algpseudocode}
\usepackage{tikz}
\usepackage{newtxtext,newtxmath}
\usepackage{braket}
\usepackage[colorlinks=true,linkcolor=blue,citecolor=blue,urlcolor=blue]{hyperref}
\usepackage{makecell}
\usepackage{array}

\newtheorem{theorem}{Theorem}
\newtheorem{corollary}{Corollary}
\newtheorem{lemma}{Lemma}

\usepackage{chapterbib}

\makeatletter
\let\bibliographystyle\bibliographystyle@latex
\let\write@bibliographystyle\relax
\let\auto@bib\@empty
\let\thebibliography\NAT@thebibliography
\let\endthebibliography\endNAT@thebibliography
\let\NAT@bibsetnum\NATx@bibsetnum
\newcommand{\bibliographycontrols}{%
  \immediate\write\@auxout{\string\citation{REVTEX42Control,apsrev42Control}}%
}
\makeatother

\begin{document}
\title{\texorpdfstring{Harnessing problem structure for end-to-end quantum speed-ups
}{Harnessing problem structure for end-to-end quantum speed-ups
}}

\author{Qifan Jiang}
\thanks{These authors contributed equally to this work.}
\affiliation{College of Computer Science, and ZJU-Ningbo Global Innovation Center, Zhejiang University, Hangzhou, China}
\author{Xiao-Ming Zhang}
\thanks{These authors contributed equally to this work.}
\affiliation{School of Physics, South China Normal University, Guangzhou 510006, China}
\author{Debin Xiang}
\thanks{These authors contributed equally to this work.}
\affiliation{College of Computer Science, and ZJU-Ningbo Global Innovation Center, Zhejiang University, Hangzhou, China}
\author{Xiao Yuan}
\email{xiaoyuan@pku.edu.cn}
\affiliation{Center on Frontiers of Computing Studies, School of Computer Science, Peking University, Beijing 100871, China}
\author{Liqiang Lu}
\email{liqianglu@zju.edu.cn}
\affiliation{College of Computer Science, and ZJU-Ningbo Global Innovation Center, Zhejiang University, Hangzhou, China}
\author{Jianwei Yin}
\email{zjuyjw@cs.zju.edu.cn}
\affiliation{College of Computer Science, and ZJU-Ningbo Global Innovation Center, Zhejiang University, Hangzhou, China}
\begin{abstract}

Quantum algorithms can offer substantial computational speed-ups, yet these advantages may disappear once the cost of structure-agnostic classical data encoding is taken into account. Real-world problem instances, however, often possess rich internal structure. This raises a fundamental question: can such structure be harnessed to make quantum speed-ups survive end-to-end cost accounting? Here we show that it can. We introduce a general framework for structure-aware quantum data encoding that compiles compact recursive descriptions of problem structure into efficient state-preparation circuits, translating structural information directly into reduced encoding complexity. For the uncapacitated facility-location problem, structure-agnostic encoding admits a classical dequantization that eliminates the quadratic quantum speed-up, whereas exploiting the underlying problem structure restores this advantage even when state-preparation costs are included. More broadly, the same framework prepares structured quantum states for combinatorial optimization that capture non-trivial relations among constraints, including group balance, conflicts and synergistic rewards, extending previously reported super-polynomial quantum advantages to broader classes of objectives. These results establish exploitable problem structure as a computational resource for quantum algorithms and provide a systematic route towards realizing quantum advantage in data-intensive problems.

\end{abstract}

\maketitle

Quantum computing offers the prospect of substantial speed-ups for computationally challenging problems~\cite{shor1994algorithms,grover1996fast,jordan2025decoded}. Yet an algorithmic speed-up does not by itself guarantee an end-to-end quantum advantage. When problem instances are accessed or encoded without exploiting their underlying structure, the associated overhead can erase the gains of the subsequent quantum computation. Grover search, for example, achieves a quadratic reduction in query complexity for unstructured search~\cite{grover1996fast}, but this advantage can disappear once the full cost of realizing the problem oracle is taken into account~\cite{stoudenmire2024opening}. Similar concerns arise in quantum machine learning, where loading classical data into quantum states can offset the computational gains promised by quantum processing~\cite{biamonte2017quantum}. These limitations point to a central question: how can the intrinsic structure of real-world problems be exploited so that quantum speed-ups survive end-to-end complexity accounting?

Shor’s factoring algorithm provides a paradigmatic example, exploiting algebraic structure to achieve a superpolynomial speed-up over the best known classical algorithms~\cite{shor1994algorithms}. In real-world applications, however, structure is far more diverse: it may arise from constraints, correlations, symmetries, hierarchical organization or other relations among the degrees of freedom~\cite{bartschi2019deterministic,bartschi2022short,gleinig2021efficient,mozafari2022decision,araujo2024lowrank,malz2024mps,zhang2026bits}. 
Such structure can take different forms and affect the solution space in different ways. It remains unclear whether a common framework can exploit these structures to prepare quantum states efficiently and achieve computational speed-ups.

We address this question with a recursive framework for preparing quantum states over the feasible subspace induced by problem structure. Rather than enumerating exponentially many feasible solutions, the framework captures their structure in a compact recursive representation that is compiled directly into a state-preparation circuit. We quantify the complexity of the description by its \textit{recursive expansion dimension} $d$, which controls the size of the resulting circuit. When $d=O(1)$, the structural description remains compact throughout the recursion and yields an ancilla-free preparation circuit of polynomial depth. This condition is satisfied by several physically and algorithmically relevant structures, including fixed-Hamming-weight constraints~\cite{bartschi2019deterministic,bartschi2022short}, Rydberg-blockade chains~\cite{bernien2017probing,ebadi2022quantum}, and certain bounded-width divergence constraints~\cite{pardo2023resource,sharma2024gauss}. The same local-update principle can also be adapted to sparse states~\cite{gleinig2021efficient,li2024nearly,mao2024toward}, decision-diagram representations~\cite{mozafari2022decision}, and hierarchical low-rank states~\cite{araujo2024lowrank,malz2024mps}. Our framework therefore provides a universal route from compact structural descriptions to efficient quantum state preparation across otherwise distinct problem classes.

For hard-constrained problems, structure-aware encoding can restore quantum speed-ups that disappear when problem structure is ignored. We illustrate this with the uncapacitated facility-location problem, a canonical combinatorial-optimization problem~\cite{melo2009facility}. In a structure-agnostic formulation, the apparent quadratic speed-up offered by Grover search can be eliminated through tensor-network dequantization~\cite{stoudenmire2024opening}. By incorporating the underlying problem structure directly into the quantum encoding, our framework recovers this quadratic advantage at the end-to-end level. Assuming the strong exponential time hypothesis, we further prove that the resulting algorithm achieves an asymptotic quadratic speed-up over deterministic exact classical algorithms~\cite{impagliazzo2001complexity,cygan2016cnfsat}. This example shows that how problem structure is represented can determine whether an algorithmic quantum speed-up survives as a genuine computational advantage.

The same connection extends from hard feasibility constraints to soft structural preferences. Standard decoded quantum interferometry (DQI) initializes a superposition of Dicke states, corresponding to objectives determined only by aggregate scores. Our framework enables more expressive initial states that encode relationships among constraints, including group balance, pairwise conflicts, precedence relations, and synergy rewards. As a representative example, we consider a balance preference between two groups and show that the previously reported DQI speed-up extends to this structure-aware objective under the same assumption used for the classical comparison. These results broaden the role of structured state preparation from restricting the feasible search space to encoding non-trivial relationships within the objective itself, providing a systematic route from problem structure to quantum computational advantage.

\begin{figure*}[t]
\centering
\includegraphics[width=0.72\textwidth]{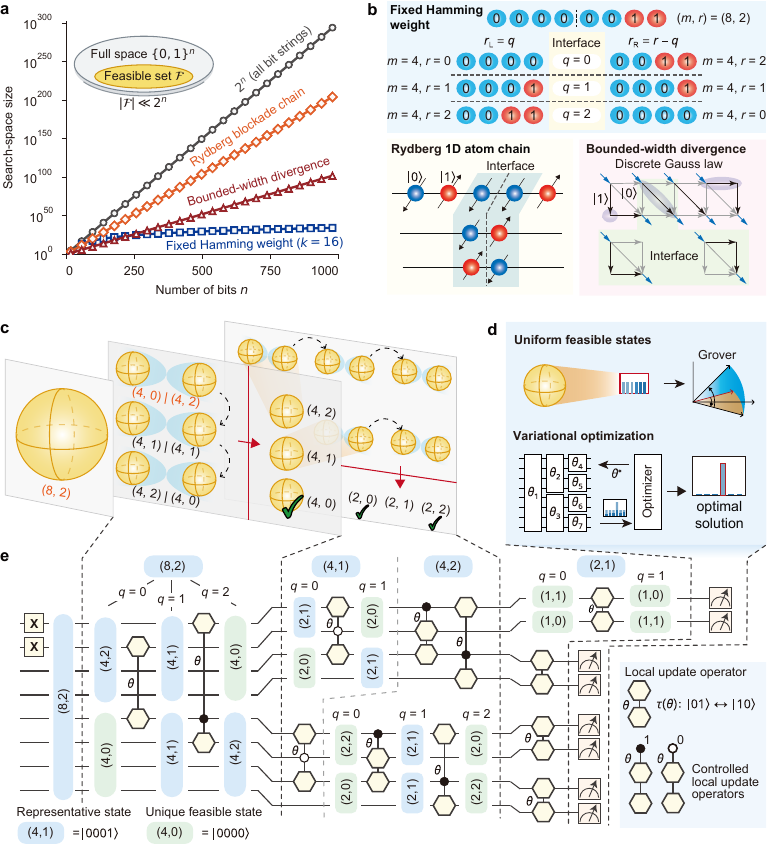}
\caption{\textbf{Recursive interfaces guide state preparation without enumeration.}
\textbf{a,} Exact support sizes for the full \(n\)-bit search space, fixed Hamming weight with \(k=16\) ones, a Rydberg blockade chain, and the bounded-width divergence ladder used as our benchmark. The structured feasible sets may be exponentially smaller than the full search space.
\textbf{b,} An interface stores the information needed to decide whether two child blocks can be joined. For fixed Hamming weight, \((m,r)\) labels a recursive block containing \(m\) bits and requiring exactly \(r\) ones. A split labeled \(q\) gives child weights \(q\) and \(r-q\). For a Rydberg blockade chain, compatibility depends on the occupations beside the split. For bounded-width divergence constraints, it depends on the net flows at boundary vertices.
\textbf{c,} Fixed-Hamming-weight recursion for \((n,k)=(8,2)\). Gold spheres represent the feasible sets of recursive blocks, and a vertical bar joins two child sets. Each block begins from one feasible bit string, called its representative. At the root, local updates transfer amplitude from the \(q=0\) choice to the \(q=1,2\) choices. They then repeat in child blocks containing more than one feasible string. Red arrows merge blocks with the same \((m,r)\) into reusable templates. Green checks mark blocks containing one feasible string.
\textbf{d,} Local updates prepare the uniform feasible state used in Grover search. Changing the local update angles produces weighted feasible states, including any prescribed feasible basis state.
\textbf{e,} Circuit realization of \textbf{c}. Blue and green labels mark nonsingleton and singleton block types. Dashed connectors map repeated block types to reusable modules. Each connected hexagon pair denotes a local update \(\tau(\theta)\), with \(\tau(\theta)|01\rangle=\cos(\theta)|01\rangle+\sin(\theta)|10\rangle\) when its controls are satisfied. Filled and open controls activate the update when the corresponding qubit is in \(\lvert1\rangle\) and \(\lvert0\rangle\), respectively. The angles determine the target probabilities over feasible strings.}
\label{fig:ncs-framework}
\end{figure*}

\vspace{.6cm}
\noindent\textbf{{State preparation in structured feasible subspaces}}

\noindent Many constrained problems have exponentially large search spaces, while their feasible solutions form a much smaller subset governed by compact structure. Harnessing this structure quantum mechanically requires preparing states over the feasible solutions without explicitly enumerating them, as such enumeration can itself be exponentially costly. We show that a compact recursive description of the constraints can instead be compiled directly into an efficient state-preparation circuit, such that the preparation cost is governed by the complexity of the underlying structure rather than by the number of feasible solutions.

Formally, a feasible set $\mathcal F$ contains the bit strings $x$ of length \(n\) that satisfy the constraints. Although $\mathcal F$ may be exponentially smaller than the full search space of the $2^n$ bit strings, it can still contain exponentially many elements and therefore be impractical to enumerate (Fig.~\ref{fig:ncs-framework}a). 
Our construction prepares states of the form
\begin{align}
|\psi_\pi\rangle=\sum_{x\in\mathcal F}\sqrt{\pi(x)}\,|x\rangle ,
\label{eq:sfs-target-state}
\end{align}
where $\pi(x)$ defines a probability distribution over the feasible strings $x\in\mathcal F$. By construction, all infeasible strings have zero amplitude.
Our approach builds on recursive decomposition and local updates, as detailed in Supplementary Sections I and II. Below, we illustrate the underlying idea using fixed-Hamming-weight states, while the same framework applies more broadly to general structured problems.

We first divide the variables into two groups of approximately equal size, called blocks, and describe the feasible strings in each block by their constraints. For fixed Hamming weight, \((n,k)\) denotes an \(n\)-bit block containing exactly \(k\) ones. For example, \((8,2)\) admits three splits into two four-bit blocks,
\[
(8,2)\longrightarrow
\begin{cases}
(4,0),\ (4,2),\\
(4,1),\ (4,1),\\
(4,2),\ (4,0).
\end{cases}
\]
The interface records the number of ones in the left block (equivalently the right block). 
Any strings satisfying these respective counts can be joined (Fig.~\ref{fig:ncs-framework}b).

To construct the circuit, we choose one feasible string as the representative of each constrained block according to a fixed rule. We place all ones at the right end. Local updates transfer amplitude between the representative basis states for the allowed splits. The circuit starts from the root representative \(|00000011\rangle=|0000\rangle|0011\rangle\), corresponding to the first split. Two local rotations introduce \(|0001\rangle|0001\rangle\) and \(|0011\rangle|0000\rangle\), corresponding to the second and third splits, respectively (Fig.~\ref{fig:ncs-framework}c). The same expansion continues within each child block until its constraints allow only one string. Choosing the probabilities at each split in proportion to the corresponding feasible counts gives a uniform superposition over all feasible strings. Blocks with the same length and required weight reuse the same circuit template.

The preparation cost is therefore governed by the recursive block structure and how many local updates are applied, rather than by the number of feasible strings. The recursive expansion dimension $d$ describes how the number of required local expansions scales with block size. A balanced recursion with an efficiently constructible update schedule then gives an ancilla-free preparation depth $O(n^{d+1})$, where $d=O(1)$ suffices for polynomial depth. Formal definitions and the resource theorem are given in Methods.

The same construction also applies beyond fixed Hamming weight. For a Rydberg blockade chain, where neighbouring sites cannot both be occupied~\cite{bernien2017probing,ebadi2022quantum}, the interface is the pair of occupations adjacent to the split. For bounded-width divergence constraints, the discrete Gauss law reduces the interface to the net flows at the split boundary~\cite{pardo2023resource,sharma2024gauss}. Both families have $d=O(1)$ (Fig.~\ref{fig:ncs-framework}b).

The local updates are ancilla-free, and updates on disjoint blocks at the same level of the tree can run in parallel (Fig.~\ref{fig:ncs-framework}e). Their angles determine how probability is divided among the allowed interface choices. Choosing these angles from the feasible counts gives the uniform superposition over $\mathcal F$, the initial state required by Grover-style search. Choosing a single branch with unit probability prepares any prescribed feasible basis state, while treating the angles as free parameters gives a variational ansatz that remains within the feasible subspace (Fig.~\ref{fig:ncs-framework}d). The same compiled structure can therefore support constrained search, structured sampling, and constrained variational optimization.

\begin{table*}[t]
\caption{\textbf{Resource scalings for state preparation in six structured families.}}
\label{tab:ncs-scope}
\centering
\begin{minipage}{\textwidth}
\centering
\footnotesize
\setlength{\tabcolsep}{3.5pt}
\renewcommand{\arraystretch}{1.22}
\begin{tabular*}{\textwidth}{@{\extracolsep{\fill}}lllll@{}}
\toprule
Family &
\makecell[l]{Constraint or state description} &
\makecell[l]{Recursive information} &
Depth &
Gates \\
\midrule
Fixed Hamming weight &
\(\sum_{i=1}^{n}x_i=k\) &
left weight \(q\) &
\(O(k\log(n/k))\) &
\(O(kn)\) \\
Rydberg blockade chain &
\(x_i+x_{i+1}\le1,\ i=1,\ldots,n-1\) &
boundary occupations &
\(O(\log n)\) &
\(O(n)\) \\
Bounded-width divergence &
\(\sum_{e\in\delta^+(v)}x_e-\sum_{e\in\delta^-(v)}x_e=b_v,\ v\in V\) &
    boundary flow vector &
\(O(\log n)\) &
\(O(n)\) \\
\addlinespace[2pt]
Sparse support &
\(\Omega\subseteq\{0,1\}^{n},\ |\Omega|=S\) &
support subset &
$O(Sn)$ &
$O(Sn)$ \\
Decision diagram &
\makecell[l]{\(n\)-bit strings with values divisible by \(3\)} &
prefix remainder modulo \(3\) &
\(O(n)\) &
\(O(n)\) \\
Hierarchical low-rank &
\makecell[l]{balanced recursive Schmidt tree, \(r=O(1)\)} &
Schmidt sector &
$O(n)$ &
$O(n\log n)$ \\
\bottomrule
\end{tabular*}
\par\vspace{2pt}
\raggedright\footnotesize
The listed depth and gate-count scalings are derived from explicit ancilla-free preparation circuits. The first three rows describe constraint-defined feasible sets covered by the general construction. The last three rows use specialized constructions based on the same local-update principle. For fixed Hamming weight, we take \(1\le k\le n/2\) without loss of generality by bit complementation. The feasible sets at \(k=0\) and \(k=n\) each contain a single bit string and require no recursive preparation. For bounded-width divergence, \(V\) is the vertex set, \(\delta^+(v)\) and \(\delta^-(v)\) are the outgoing and incoming edge sets at \(v\), and \(b_v\) is the prescribed divergence. The divergence bounds apply to constructible bounded-degree families with bounded-width recursive decompositions, constant-size local trades and controls that isolate the required updates. For sparse support, polynomial depth requires \(S=\operatorname{poly}(n)\). For hierarchical low-rank states, \(r\) denotes the bond rank, and the Schmidt-sector states on each child block are assumed to have pairwise disjoint computational-basis supports.
\par
\end{minipage}
\end{table*}

\begin{figure*}[t]
\centering
\includegraphics[width=\textwidth]{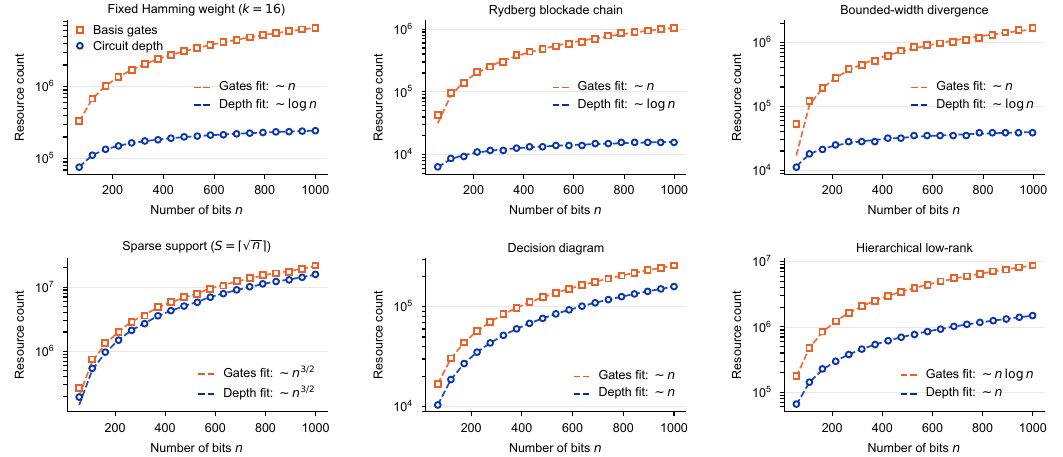}
\caption{\textbf{Circuit benchmarks are consistent with the analytical resource scalings.}
Orange squares show gate counts and blue circles show circuit depths after decomposition into the \(\{R_z,SX,X,\mathrm{CX}\}\) basis, where \(SX\) is the square-root-of-\(X\) gate. The benchmarks use \(k=16\), the bounded-width divergence ladder, \(S=\lceil\sqrt n\rceil\), the width-three divisibility-by-three decision diagram and an \(r=2\) even-parity Schmidt tree. Dashed curves show least-squares fits using forms fixed by the analytical bounds. The legends display the leading fitted terms, and complete models and coefficients are provided in Source Data.}
\label{fig:ncs-scope-scaling}
\end{figure*}

\vspace{.6cm}
 \noindent\textbf{Examples and resource scaling}

\noindent
Our framework enables polynomial-depth state preparation across diverse forms of structured data. We consider six representative families, each specified by a compact description, with analytical depth and gate-count scalings summarized in Table~\ref{tab:ncs-scope}. Explicit circuit constructions for systems of up to $n\approx1000$ confirm these scalings (Fig.~\ref{fig:ncs-scope-scaling}). Despite their distinct constraint structures, all six families are governed by the same structural quantity: the expansion dimension $d$, which remains bounded in each case and determines the resulting circuit complexity. This common dependence allows otherwise disparate forms of structure to be treated within a single state-preparation framework.

All three constraint-defined families have \(d=O(1)\) and therefore polynomial preparation depth. The value of \(d\) reflects how many residual conditions and compatible splits must be handled within a block. For fixed Hamming weight with \(k=\Theta(n)\), \(\Theta(m)\) residual weights can occur at block size \(m\). Expanding them requires \(\Theta(m^2)\) local updates and gives \(d=2\). A specialized ordering of the local updates gives preparation depth \(O(k\log(n/k))\), matching the best previously known scaling among constructions that do not use idle data qubits as temporary workspace~\cite{bartschi2022short}. 
Rydberg blockade chains have \(d=0\) because the numbers of residual boundary conditions and compatible labels remain constant. The same is true for the bounded-width divergence families specified in Table~\ref{tab:ncs-scope}. In both families, each update acts on only a constant number of qubits, so balanced recursion gives logarithmic depth.

The same local-update principle also supports states specified directly rather than through constraint systems. Specialized versions apply to sparse supports, decision diagrams, and hierarchical low-rank decompositions. A support list identifies the bit strings with nonzero amplitude. A decision diagram merges prefixes with the same allowed continuations, while a hierarchical low-rank description keeps only a bounded number of terms across each split. All three state families considered here admit polynomial-depth preparation. Supplementary Section IV gives the constructions and resource bounds.

We next show how the state preparation algorithm enables quantum speed-ups for practical problems.

\begin{figure*}[!t]
\centering
\includegraphics[width=0.86\textwidth]{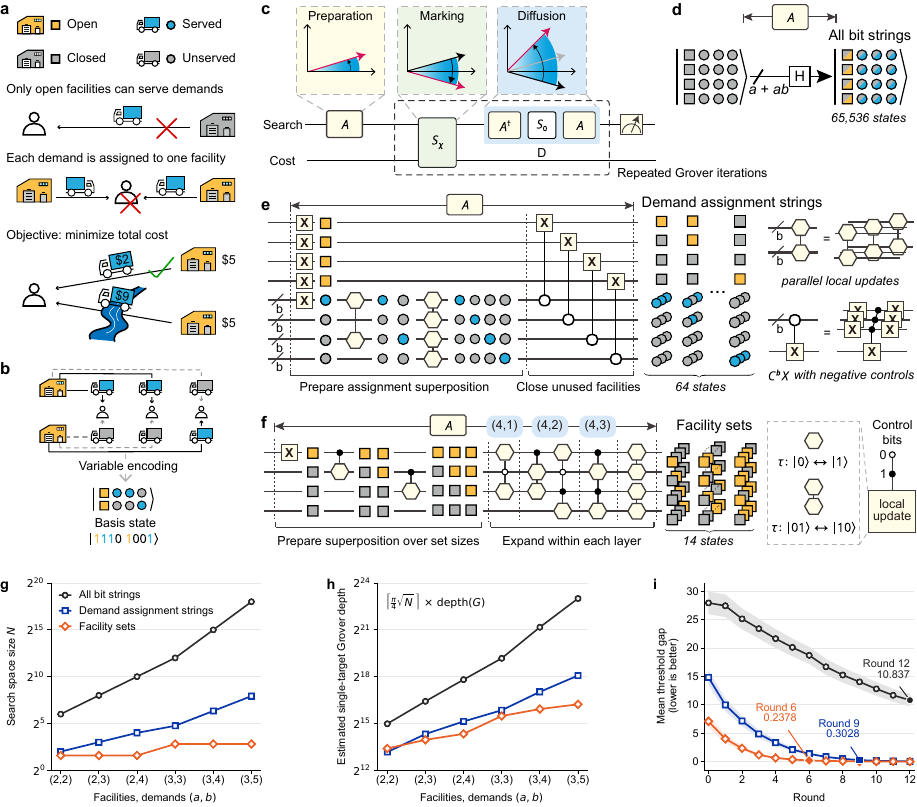}
\caption{\textbf{Facility-set search reduces the uncapacitated facility location problem (UFLP) search space.}
\textbf{a,} UFLP with \(a\) facilities and \(b\) demands. Each demand is assigned to one open facility, and the objective is to minimize total opening and service cost.
\textbf{b,} Binary encoding with one opening bit and \(b\) assignment bits per facility, giving \(a+ab\) bits. In the displayed basis state, each demand is assigned to one open facility and every open facility is used.
\textbf{c,} Grover-style search. \(A\) prepares the search state, \(S_\chi\) marks feasible solutions below the current cost threshold, and \(D=AS_0A^\dagger\) performs diffusion, with \(S_0\) the zero-state reflection. Each iteration applies \(S_\chi\) followed by \(D\), giving \(G=DS_\chi\).
\textbf{d,} Hadamard gates prepare a uniform superposition over the full search space of \(2^{a+ab}=65{,}536\) bit strings for the \((a,b)=(4,3)\) example. This space includes infeasible bit strings and redundant opening patterns.
\textbf{e,} Structured feasible subspace (SFS) preparation produces a uniform superposition over 64 demand assignment strings. Parallel local updates prepare assignments with all facilities open. A \(C^bX\) gate with negative controls closes a facility when all its assignment bits are zero.
\textbf{f,} SFS prepares a uniform superposition over the 14 facility sets with \(1\leq r\leq\min(a,b)=3\). Labels \((4,r)\) give the number of facilities and set size. The \(\tau\) gates are local updates. Filled and open controls activate an update when the corresponding qubit is in \(\lvert 1\rangle\) and \(\lvert 0\rangle\), respectively.
\textbf{g,} Exact sizes \(N\) of the three search spaces across six UFLP sizes.
\textbf{h,} Single-target Grover depth estimate \(\lceil(\pi/4)\sqrt N\rceil\operatorname{depth}(G)\), excluding the initial application of \(A\).
\textbf{i,} Noiseless Grover adaptive search. Curves show the mean difference between the current threshold and exact minimum across 50 instances, each averaged over 64 trajectories. Shading gives pointwise 95\% Student's \(t\) confidence intervals, truncated at zero. The horizontal axis shows the number of adaptive rounds. Panel \textbf{h} also averages 50 instances.}
\label{fig:ncs-uflp-scaling}
\end{figure*}

\vspace{.6cm}
\noindent\textbf{{Quantum speed-up for facility-set search}}\\
\noindent
Uncapacitated facility location (UFLP) is a standard benchmark for constrained optimization~\cite{melo2009facility}. A solution chooses which facilities are open and assigns every demand to one open facility while minimizing the total opening and service cost (Fig.~\ref{fig:ncs-uflp-scaling}a). For $a$ facilities and $b$ demands, the conventional binary encoding assigns one opening bit to each facility and one assignment bit to each facility and demand pair, producing a register of $a+ab$ bits (Fig.~\ref{fig:ncs-uflp-scaling}b). We additionally impose an opening limit $K$, requiring that at most $K$ facilities be open. Setting $K=a$ recovers standard UFLP, which is the setting used in Fig.~\ref{fig:ncs-uflp-scaling}.

Classically, with nonnegative costs, an optimum is determined entirely by which facilities are open, because each demand is then assigned to its cheapest open facility. Under the strong exponential time hypothesis (SETH), opening-limited UFLP requires \(a^{K-o(1)}\) deterministic exact time in the worst case for every fixed \(K\geq2\)~\cite{patrascu2010possibility}. For standard UFLP at \(K=a\), SETH rules out any deterministic exact algorithm running in \(O^*((2-\epsilon)^a)\) time for fixed \(0<\epsilon<1\), where \(O^*(\cdot)\) suppresses factors polynomial in the input size~\cite{impagliazzo2001complexity,cygan2016cnfsat}. These bounds set the classical targets for quantum search.

Grover-based quantum minimum finding repeatedly marks and amplifies basis states below a cost threshold within the chosen search space (Fig.~\ref{fig:ncs-uflp-scaling}c). Hadamard preparation spans the full search space of \(2^{a+ab}\) bit strings (Fig.~\ref{fig:ncs-uflp-scaling}d). Thus, Grover search needs approximately \(2^{(a+ab)/2}\) iterations, exceeding the classical baseline for every \(b\geq2\). 

Searching in the full space is far from optimal: many bit strings are infeasible because a demand is not assigned exactly once or is assigned to a closed facility, while others are redundant because they leave unused facilities open. One possible improvement is to restrict the search to demand assignment strings. Each string assigns every demand exactly once and closes every unused facility, so the space contains exactly \(a^b\) strings. Grover search over these strings needs on the order of \(a^{b/2}\) iterations. However, despite the exponential reduction, when \(b>2a/\log_2 a\), its \(O^*(a^{b/2})\) running time remains above the \(O^*(2^a)\) classical baseline for standard UFLP, whose leading exponent is tight under SETH.

The structure has not yet been fully taken into account by the method above. 
To beat the classical baseline, we observe that for any \textit{non-empty} facility set \(Y\), an optimal assignment can be found efficiently by sending each demand to a least-cost facility in \(Y\).
Thus, instead of searching demand assignments, we search for an optimal non-empty facility set \(Y\). 
Accordingly, the objective value for \(Y\) is its opening cost plus the minimum assignment cost, and can be computed efficiently in a quantum register.
In this approach, the number of legal candidates is much smaller. Closing an unused facility will not increase the cost, so the target optimum solution has at most \(m=\min(K,b)\) open facilities. Accordingly, there are \(N_{a,b,K}=\sum_{r=1}^{m}\binom{a}{r}\) candidates in total. Quantum minimum finding therefore requires \(O(\sqrt{N_{a,b,K}})\) circuit applications~\cite{durr1996quantum}. 
For standard UFLP at \(K=a\), the running time is \(O^*(2^{a/2})\), giving an asymptotic quadratic speed-up over the \(O^*(2^a)\) classical scaling. We state in Theorem~\ref{thm:uflp-search-advantage} the standard-UFLP result, and give further comparisons in Supplementary Section V.

Our SFS construction prepares the uniform quantum states required by both search methods.
The circuit for demand assignment strings first prepares a superposition over assignments with all facilities open, then closes every facility that serves no demand, so the resulting basis states assign each demand once and leave no unused facility open (Fig.~\ref{fig:ncs-uflp-scaling}e). The circuit for facility sets instead superposes the allowed set sizes \(1\leq r\leq\min(K,b)\) and expands each fixed-size layer (Fig.~\ref{fig:ncs-uflp-scaling}f). Both state preparation methods add only polynomial overhead.

Small-instance calculations show how the smaller search spaces change the required resources. The space of facility sets is the smallest of the three at every tested size (Fig.~\ref{fig:ncs-uflp-scaling}g). Its Grover iterate is deeper than the iterate for demand assignment strings at \((a,b)=(2,2)\), but the smaller search space gives a lower single-target depth estimate from \((2,3)\) onward (Fig.~\ref{fig:ncs-uflp-scaling}h).

We also compare the three search spaces using noiseless simulations of Grover adaptive search~\cite{gilliam2021grover} numerically. Under the initialization and round-based protocol defined in Methods, search over facility sets has the smallest mean threshold gap at the reported rounds across 50 independently generated instances. Its mean gap reaches \(0.2378\) by round 6. Search over demand assignment strings reaches \(0.3028\) by round 9, while search over all bit strings remains at \(10.837\) at round 12 (Fig.~\ref{fig:ncs-uflp-scaling}i). The curves therefore show faster convergence across adaptive rounds for the smaller search spaces.

\begin{figure*}[!t]
\centering
\includegraphics[width=0.90\textwidth]{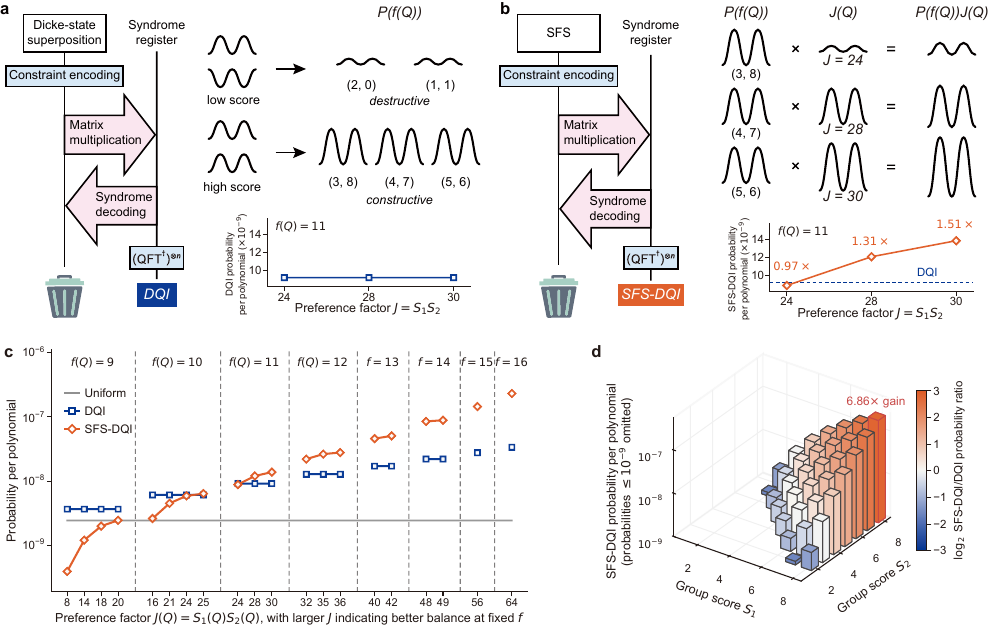}
\caption{\textbf{Structured initial states favor greater group balance among equal-score polynomials.}
\textbf{a,} Optimal polynomial intersection (OPI) scores a polynomial \(Q\) by the number \(f(Q)\) of evaluation constraints it satisfies. The group scores \(S_1(Q)\) and \(S_2(Q)\) define \(J(Q)=S_1(Q)S_2(Q)\), which measures balance at fixed total score. Standard decoded quantum interferometry (DQI) starts from a weighted superposition of Dicke states. Each Dicke state is uniform over constraint subsets of a fixed size, and the weights define the score-weighting function \(P\). These weights are chosen to raise the expected score. The circuit produces amplitudes proportional to \(P(f(Q))\) and therefore cannot distinguish equal-score polynomials. The box \((\mathrm{QFT}^{\dagger})^{\otimes n}\) denotes the inverse \(q\)-dimensional quantum Fourier transform applied independently to the \(n\) syndrome qudits, where \(q\) is the field size and \(n\) is the number of polynomial coefficients. The trash icon marks the error register cleared after decoding. The lower plot shows the DQI probability per polynomial at \(f(Q)=11\) for \(J(Q)=24,28,30\).
\textbf{b,} Our extension combines structured feasible subspace preparation with DQI (SFS-DQI). It replaces the initial state with one that records how many selected constraints come from each group. The same \(P\) is used. The remaining DQI circuit, including the inverse Fourier transforms and decoding, is unchanged. The output amplitudes are proportional to \(P(f(Q))J(Q)\). The lower plot shows the corresponding SFS-DQI probabilities. Orange labels give their ratios to the DQI probabilities in \textbf{a}.
\textbf{c,} Probabilities per polynomial within fixed score layers. DQI is constant within each layer, whereas SFS-DQI favors larger \(J(Q)\). The gray line is uniform over all polynomials.
\textbf{d,} SFS-DQI probabilities across \((S_1,S_2)\) profiles. Bar height gives the mean SFS-DQI output probability per polynomial. Color shows the base-two logarithm of the mean SFS-DQI/DQI probability ratio across instances. Profiles with mean SFS-DQI probability at most \(10^{-9}\) are omitted. All means give equal weight to 50 independently generated instances.}
\label{fig:ncs-dqi-reweighting}
\end{figure*}

\vspace{.6cm}
\noindent\textbf{{Speed up combinatorial optimization}}\\
\noindent
In many combinatorial optimization problems, candidates are ranked by a scalar score, such as the number of satisfied clauses in MAX-SAT. This score does not capture how the satisfied constraints are distributed. For example, if the constraints are partitioned into two groups, one may prefer solutions that satisfy both groups evenly over solutions concentrated in only one group with the same total score. We use group balance to demonstrate how such structural preferences can be incorporated into decoded quantum interferometry (DQI)~\cite{jordan2025decoded}. Other preferences, including conflicts, precedence relations and synergy rewards, are discussed in Supplementary Section VI E.

We demonstrate this approach for optimal polynomial intersection (OPI), in which each candidate is a low-degree polynomial \(Q\) with score \(f(Q)\), the number of evaluation constraints it satisfies. Standard DQI starts from a symmetric state that treats all subsets containing the same number of constraints alike. The weights of this state define a score-weighting function \(P\) and are chosen to raise the expected score (Fig.~\ref{fig:ncs-dqi-reweighting}a). Our structured initial state instead distinguishes the two groups. It adds a group-balance factor \(J(Q)\) to the output amplitude (Fig.~\ref{fig:ncs-dqi-reweighting}b). Changing only the initial state gives
\begin{equation}
P(f(Q))\longrightarrow P(f(Q))J(Q).
\label{eq:ncs-score-to-preference}
\end{equation}
\(P\) retains the chosen total-score weighting, while \(J\) distinguishes equal-score polynomials. We call this construction SFS-DQI. All later DQI operations remain unchanged.

We partition the constraints into two prescribed nonempty groups \(G_1\) and \(G_2\). Let \(S_g(Q)\) count the constraints in \(G_g\) satisfied by \(Q\). Then \(f(Q)=S_1(Q)+S_2(Q)\) and \(J(Q)=S_1(Q)S_2(Q)\). At fixed \(f(Q)\), \(J(Q)\) is larger when \(S_1(Q)\) and \(S_2(Q)\) are more balanced. We call the standard OPI score task with this added preference grouped OPI. Multivariate DQI can encode bounded-degree polynomials of group scores and has been optimized for linearly weighted sums~\cite{bu2026multivariate}. The initial-state circuit presented there prepares the groups independently, using ancilla registers to store their Hamming weights. Grouped OPI instead adds a nonlinear balance preference among equal-score solutions. Our ancilla-free SFS circuit directly prepares the required initial state, including correlations between the two group weights.

Because output probabilities are squared amplitudes, SFS-DQI reweights equal-score polynomials in proportion to \(J(Q)^2\) (Fig.~\ref{fig:ncs-dqi-reweighting}c). In the grouped OPI family studied here, every evaluation point has the same number of allowed values. Under the degree condition in Methods, we prove that this reweighting strictly increases the expected total score relative to score-only DQI using the same nonzero \(P\). Supplementary Section VI gives the exact distribution and proof.

To isolate structural reweighting, the numerical example uses the same \(P\) of degree one for both distributions. At \(f(Q)=11\), the mean SFS-DQI-to-DQI probability ratio increases from \(0.97\times\) at \(J(Q)=24\) to \(1.51\times\) at \(J(Q)=30\) (Fig.~\ref{fig:ncs-dqi-reweighting}a,b). Across the group-score profiles, the largest enhancement occurs at \((S_1,S_2)=(8,8)\), where the ratio reaches \(6.86\times\) (Fig.~\ref{fig:ncs-dqi-reweighting}d). With this matched \(P\), the structured initial state raises the expected total score from \(9.56\) to \(10.77\).

\vspace{.6cm}
\noindent\textbf{{Conclusion and outlook}}\\

\noindent Our framework compiles compact structural descriptions into quantum state-preparation circuits without enumerating feasible solutions. Hard constraints restrict the search space, while soft preferences reshape output probabilities. For standard UFLP, this yields a quadratic speed-up in the leading exponent over deterministic exact classical algorithms under SETH. For grouped OPI, structured preparation favors group balance and increases the expected score for the same score-weighting function. The reported superpolynomial speed-up over known classical algorithms extends to this task under the standard DQI comparison assumption. Further work could automate recursive decomposition, reduce preparation costs through specialized compilation, and extend the construction to other problem families and variational quantum algorithms.

Beyond its immediate applications, our results highlight a broader principle: the structure of classical data can determine whether a quantum speed-up survives end to end. Grover search provides a quadratic advantage in the black-box setting~\cite{grover1996fast}, yet this advantage can disappear when the underlying oracle admits an efficient classical representation~\cite{stoudenmire2024opening}. Here we show that incorporating nontrivial feasible-space structure into the quantum encoding can restore this advantage. This perspective is consistent with a growing body of dequantization results. Quantum linear-system algorithms, for example, can achieve exponential speed-ups when classical data are sufficiently structured to permit efficient quantum access~\cite{harrow2009quantum}, whereas quantum-inspired classical algorithms substantially narrow this advantage for more general data-access models~\cite{chia2022sampling}. Together, these results suggest that quantum advantage should not be viewed as a property of an algorithm alone, but as emerging from the interplay between problem structure, data representation, and quantum computation. Developing systematic ways to identify, represent, and exploit such structure may therefore be essential for determining where meaningful quantum advantages can arise. More broadly, structure-aware quantum encoding provides a route towards designing quantum algorithms around the organization of real-world data, rather than treating data loading as an external cost or an idealized primitive.

\renewcommand{\bibsection}{\par\vspace{.6cm}\noindent\textbf{\large References}\par\nobreak\vspace{.3cm}}
\bibliographycontrols
\bibliographystyle{apsrev4-2}
\bibliography{refs}

\clearpage

\vspace{.6cm}
\noindent\textbf{\large{Methods}}\\
\noindent

\noindent\textbf{\large{Target states in feasible subspaces}}\\
\noindent
We first specify the states that the preparation circuit must produce. We model the hard constraints used in the main text as linear equalities and inequalities. Let $x\equiv x_1x_2\cdots x_n\in\{0,1\}^n$ denote an $n$-bit string. The feasible set is
\begin{equation}
\mathcal{F}\equiv \{x\in\{0,1\}^n \mid Ax=b,\ Gx\le h\}.\label{eq:fs}
\end{equation}
Here \(A\) and \(G\) are integer matrices with \(n\) columns, while \(b\) and \(h\) are integer vectors of the corresponding row dimensions. Each \(x\in\mathcal F\) labels a computational basis state \(|x\rangle\), and the corresponding feasible subspace is
\begin{align*}
\mathcal{H}_{\mathcal{F}} \equiv \mathrm{span}\{|x\rangle \mid x\in\mathcal{F}\}.
\end{align*}
Unlike a sparse support list, \(\mathcal F\) can be exponentially large. The goal is therefore to prepare quantum states supported entirely in \(\mathcal H_{\mathcal F}\) without enumerating the feasible bit strings.

For a target distribution \(\pi\) over \(\mathcal F\) defined by split probabilities along the recursion, the target state is \(|\psi_{\pi}\rangle \equiv \sum_{x\in\mathcal F}\sqrt{\pi(x)}\,|x\rangle\). This class includes the uniform distribution and any prescribed feasible basis state. The uniform choice \(\pi(x)=1/|\mathcal F|\) is a natural initialization for Grover-style search and related algorithms.

\vspace{.6cm}
\noindent\textbf{\large{Recursive interfaces between variable blocks}}\\
\noindent
A direct list of feasible strings can be exponentially long. We instead use a classical recursive description as the blueprint for state preparation. The feasible set \(\mathcal F\) in Eq.~\eqref{eq:fs} is the root block. A block is specified by \((I,\beta)\), where \(I\subseteq[n]\) denotes its variables and \(\beta\) is their residual constraint. Here \([n]:=\{1,\ldots,n\}\). We write \(F(I,\beta)\) for the set of bit strings on \(I\) satisfying \(\beta\). At the root, \(I=[n]\) and \(\beta\) is the global constraint system \(Ax=b\) and \(Gx\le h\). Whether a block is terminal may depend on \(\beta\), and every nonempty terminal feasible set contains exactly one bit string. Whenever a block is nonterminal, its prescribed nearly balanced split \(I=I_L\sqcup I_R\) depends only on \(I\). At each split, the interface records only the information needed to decide which two child blocks are compatible. An interface label \(\alpha\) selects a compatible pair of child residual constraints \((\beta_L^{(\alpha)},\beta_R^{(\alpha)})\), and the parent feasible set decomposes as
\begin{equation}
F(I,\beta)
=
\bigsqcup_{\alpha\in\mathcal A(I,\beta)}
\bigl(
F(I_L,\beta_L^{(\alpha)})\times F(I_R,\beta_R^{(\alpha)})
\bigr).
\label{eq:recursive-partition}
\end{equation}
Here \(\mathcal A(I,\beta)\) is the set of interface labels compatible with the block \((I,\beta)\) and its prescribed split. Recursively applying Eq.~\eqref{eq:recursive-partition} generates the decomposition tree.

We call the description constructible when a local classical rule processes one encoded block in time polynomial in its input and output lengths. It identifies empty and terminal blocks, returns a representative for each nonempty block and otherwise outputs the complete ordered list of nonempty interface labels and canonical child records. All records have polynomial-length encodings, and the first label reproduces the parent representative. Canonical records can be compared and deduplicated, and a feasible string can be assigned to its unique label at each split, in polynomial time.

\begin{lemma}[Efficient recursive decomposition]
\label{lem:structural-recursion}
A recursive decomposition satisfying \(\max\{|I_L|,|I_R|\}\le \rho |I|\) whenever a block is split for some constant \(\rho<1\), and Eq.~\eqref{eq:recursive-partition} as a disjoint cover of \(F(I,\beta)\), has height \(O(\log n)\). Each feasible string \(x\in\mathcal F\) induces a unique interface label at every nonterminal block.
\end{lemma}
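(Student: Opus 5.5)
The lemma makes two claims, and I will prove them separately: a height bound from the balance condition, and uniqueness of labels from the disjointness in Eq.~\eqref{eq:recursive-partition}.

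\emph{Height.} Consider any root-to-leaf path $(I_0,\beta_0),(I_1,\beta_1),\ldots$ in the decomposition tree, with $I_0=[n]$. Each step passes from a nonterminal block to one of its children. The balance hypothesis then gives $|I_{j+1}|\le\rho|I_j|$, and induction yields $|I_j|\le\rho^j n$. A block on the path has $|I_j|\ge1$ provided variable sets are nonempty. This holds because a block with $I=\emptyset$ has at most one bit string (the empty one) and is therefore terminal. So a nonterminal block at depth $j$ requires $\rho^j n\ge 1$, i.e.\ $j\le\log n/\log(1/\rho)$. Every path therefore has at most $\lfloor\log_{1/\rho} n\rfloor+1$ splits. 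Since $\rho<1$ is a constant, the height is $O(\log n)$.

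If one worries about a split producing an empty child, the balance condition still strictly shrinks the nonempty child, so the argument is unchanged. Empty-index children are terminal singletons or empty sets, so they add at most one level.

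\emph{Unique labels.} Fix $x\in\mathcal F$. I walk down the tree and maintain the invariant that the restriction $x|_I$ lies in $F(I,\beta)$ for the current block. At the root this holds since $F([n],\beta_0)=\mathcal F$. At a nonterminal block with $x|_I\in F(I,\beta)$, Eq.~\eqref{eq:recursive-partition} says $F(I,\beta)$ is the disjoint union over $\alpha\in\mathcal A(I,\beta)$ of the products $F(I_L,\beta_L^{(\alpha)})\times F(I_R,\beta_R^{(\alpha)})$. Here a string on $I$ is identified with the pair $(x|_{I_L},x|_{I_R})$ via $I=I_L\sqcup I_R$.
\begin{itemize}
\item Cover: there exists some $\alpha$ with $x|_{I_L}\in F(I_L,\beta_L^{(\alpha)})$ and $x|_{I_R}\in F(I_R,\beta_R^{(\alpha)})$.
\item Disjointness: this $\alpha$ is unique.
\end{itemize}
Both children along this label then satisfy the invariant, and the walk continues. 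Hence every nonterminal block visited by $x$ receives exactly one induced label. At blocks not visited by $x$, that is, those with $x|_I\notin F(I,\beta)$, no label is induced. The statement should be read as referring to the blocks on $x$'s path. I would make that reading explicit, since it is the only subtle point.

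\emph{Main obstacle.} The hard step is interpretive rather than technical, and it has two parts. First, disjointness must be understood as disjointness of the product sets indexed by distinct $\alpha$, not merely of the labels. Second, "every nonterminal block" must mean every such block reached by $x$ in the tree. Once these readings are fixed, both parts reduce to the direct inductions above.
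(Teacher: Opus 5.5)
Your proof is correct and follows essentially the same route as the paper's. For the height, geometric shrinkage $|I_j|\le\rho^j n$ along a root-to-leaf path gives $O(\log n)$. For uniqueness, you walk $x$ down the tree using cover plus disjointness of Eq.~\eqref{eq:recursive-partition} to get one label at every block reached by $x$, which is exactly how the paper reads ``every nonterminal block.''

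One small correction: an empty variable set is not automatically terminal just because it holds at most one string, since the paper only asserts terminal $\Rightarrow$ singleton. You do not need that step, however. The condition $\rho<1$ already forbids an empty child of a nonempty block: if $I_L=\varnothing$ then $|I_R|=|I|>\rho|I|$. So every block in the tree is nonempty, and your $\rho^j n\ge 1$ bound applies directly.
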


\noindent The proof is given in Supplementary Section I.

\vspace{.6cm}
\noindent\textbf{\large{Compression through repeated block types}}\\
\noindent
Balanced recursion can produce the same constrained block many times. Building a separate expansion module for every occurrence would duplicate work. We therefore merge blocks that are equivalent up to qubit relabeling. They have the same size \(|I|\), the same form of residual constraint, and the same interface labels. Their child blocks also have the same structure. This compresses the recursion tree into a directed acyclic graph (DAG)~\cite{bryant1986graph,mozafari2022decision}.

We call the resulting DAG the compressed recursion graph and use it to organize the preparation circuit. Each constrained block \((I,\beta)\) is assigned a representative bit string
\begin{equation}
c_{I,\beta}
\in
F(I,\beta).
\label{eq:block-representative}
\end{equation}
Representatives are chosen consistently with the stored block isomorphisms and recursive splits. Each interface label in Eq.~\eqref{eq:recursive-partition} specifies an allowed pair of child blocks and the corresponding product representative \(c^{(\alpha)}=c_{I_L,\beta_L^{(\alpha)}}\sqcup c_{I_R,\beta_R^{(\alpha)}}\). The labels of every nonterminal block are ordered as \(\alpha_1,\ldots,\alpha_{|\mathcal A(I,\beta)|}\), with \(c_{I,\beta}=c^{(\alpha_1)}\).

An ordered expansion of \((I,\beta)\) uses \(|\mathcal A(I,\beta)|-1\) local updates to open the remaining product representatives from \(c_{I,\beta}\). Applying the same construction recursively to the child blocks yields the full feasible set. The disjoint cover in Lemma~\ref{lem:structural-recursion} ensures that each feasible string appears in exactly one branch of the recursion.

The same variable set \(I\) may occur with different residual constraints. We collect them as
\begin{equation}
\mathcal R(I)
=
\left\{\beta\ \middle|\
\begin{aligned}
&(I,\beta)\text{ occurs in the}\\[-1mm]
&\text{recursive decomposition}
\end{aligned}
\right\}.
\label{eq:residual-constraint-set}
\end{equation}
For this count, we set \(|\mathcal A(I,\beta)|=1\) when \((I,\beta)\) is terminal, since no local update is required. The block expansion count is
\begin{equation}
\Lambda(I)
=
\sum_{\beta\in\mathcal R(I)}
\bigl(|\mathcal A(I,\beta)|-1\bigr).
\label{eq:block-expansion-count}
\end{equation}
Throughout, \(d\) denotes the recursive expansion dimension. It is the smallest nonnegative integer for which \(\Lambda(I)=O(|I|^d)\) holds uniformly over all variable sets and instances in the problem family.

\vspace{.6cm}
\noindent\textbf{\large{Amplitude transfer between compatible blocks}}\\
\noindent
The compressed recursion graph now provides a direct preparation schedule. The circuit starts in the root representative basis state \(\lvert c_{[n],(Ax=b,\,Gx\le h)}\rangle\) defined by Eq.~\eqref{eq:block-representative}. It expands each block into the representative basis states associated with its interface labels and proceeds from the root to the leaves. At a given recursion layer, updates associated with disjoint variable sets \(I\) act on disjoint qubits and can run in parallel.

For a fixed block, the elementary step is a two-level local update between representative basis states. If \(c^{(\alpha)}\) and \(c^{(\alpha')}\) label the representative basis states associated with two interface labels \(\alpha\) and \(\alpha'\), we write
\begin{equation}
\tau(\theta)\,|c^{(\alpha)}\rangle
=
\cos(\theta)\,|c^{(\alpha)}\rangle
+
\sin(\theta)\,|c^{(\alpha')}\rangle.
  \label{eq:cos-sin}
\end{equation}
Eq.~\eqref{eq:cos-sin} adds one new representative basis state to those already present. Chaining the ordered representative pairs generates the full family \(\{|c^{(\alpha)}\rangle\}\) for that block. For representative strings \(x\) and \(x'\), the update acts on the bits where they differ and is controlled on their common bits. This isolates the pair and gives an ancilla-free update with \(O(|I|)\) depth and gate count. Supplementary Section I gives the circuit.

A target distribution \(\pi\) is defined by split probabilities, which determine the local update angles. For each block \((I,\beta)\), let \(\pi_{I,\beta}\) denote the induced probability distribution on \(F(I,\beta)\). Whenever a block is split, these distributions obey
\begin{equation}
\pi_{I,\beta}
=
\sum_{\alpha\in\mathcal A(I,\beta)}
p_{I,\beta}(\alpha)\,
\bigl(
\pi_{I_L,\beta_L^{(\alpha)}}\times \pi_{I_R,\beta_R^{(\alpha)}}
\bigr),
\label{eq:recursive-distribution}
\end{equation}
where \(p_{I,\beta}(\alpha)\ge 0\) and \(\sum_{\alpha\in\mathcal A(I,\beta)} p_{I,\beta}(\alpha)=1\). Here \(\pi_{I_L,\beta_L^{(\alpha)}}\times \pi_{I_R,\beta_R^{(\alpha)}}\) denotes the product distribution on \(F(I_L,\beta_L^{(\alpha)})\times F(I_R,\beta_R^{(\alpha)})\). Once the interface label \(\alpha\) is fixed, the two child blocks are distributed independently, and each continues recursively according to the same rule.

The split probabilities are assigned after the compressed recursion graph is constructed. Neither these probabilities nor the resulting angles enter block isomorphism.

For a block \((I,\beta)\), first omit any trailing suffix of zero probability labels. Write the retained ordered probabilities as \(p_{I,\beta}(\alpha_1),\ldots,p_{I,\beta}(\alpha_s)\) and let \(Q_j=\sum_{t=j}^s p_{I,\beta}(\alpha_t)\). The angle of update \(j\) then satisfies
\begin{equation}
\theta_j
=
\arcsin\sqrt{\frac{Q_{j+1}}{Q_j}},
\qquad
j=1,\ldots,s-1.
\label{eq:angle-mass}
\end{equation}
The theorem covers target distributions whose ordered updates on each variable set can be merged in polynomial time into one block-local schedule that preserves every chain. If a local basis state occurs in more than one residual block, every occurrence undergoes the same two-level action with the same angle at each step, or every occurrence is left unchanged. Once a branch reaches a terminal block, all later updates within that block must leave its basis state unchanged.

For the uniform feasible state, the split probabilities are evaluated bottom up from the same recursion. For a prescribed feasible basis state, its unique branch at each split receives probability one. These choices give the following result.

\begin{theorem}[Efficient layerwise state preparation]\label{thm:layerwise-preparation}
Given a balanced and constructible recursive decomposition of \(\mathcal F\) with recursive expansion dimension \(d\), let \(\pi\) be a target distribution of the form in Eq.~\eqref{eq:recursive-distribution} whose ordered updates admit this schedule. The target state \(|\psi_\pi\rangle=\sum_{x\in\mathcal F}\sqrt{\pi(x)}\,|x\rangle\) can be prepared exactly in the ideal gate model with arbitrary single-qubit rotations by an ancilla-free layerwise circuit with depth \(O(n^{d+1})\) and gate count \(O(n^{d+1}+n\log n)\).
\end{theorem}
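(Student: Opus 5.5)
The proof proceeds in two parts: correctness, then resource counting. Correctness is by induction along the compressed recursion graph, from the root down to the leaves. We use Lemma~\ref{lem:structural-recursion}: the tree has height $O(\log n)$, and every $x\in\mathcal F$ determines a unique interface label at each nonterminal block. We maintain the following invariant. After the updates of recursion layer $\ell$, the register is in the state
\begin{equation*}
\sum_{\text{frontier assignments}} \sqrt{\textstyle\prod p}\,\bigotimes_{(I,\beta)} |c_{I,\beta}\rangle .
\end{equation*}
Here the sum runs over joint choices of labels down to depth $\ell$. Each frontier block sits in its representative, and the amplitude is the square root of the product of the split probabilities along the chosen branches.

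The inductive step expands a single block $(I,\beta)$. It starts from $|c_{I,\beta}\rangle=|c^{(\alpha_1)}\rangle$ and applies the chained two-level updates $\tau(\theta_j)$ of Eq.~\eqref{eq:cos-sin} with the angles of Eq.~\eqref{eq:angle-mass}. Telescoping the products $\cos\theta_j\prod_{i<j}\sin\theta_i$ gives the amplitude $\sqrt{p(\alpha_j)}$ on $|c^{(\alpha_j)}\rangle$. The product-representative convention $c^{(\alpha)}=c_{I_L,\beta_L^{(\alpha)}}\sqcup c_{I_R,\beta_R^{(\alpha)}}$ then restores the invariant at the children.

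Each update acts on the differing bits and is controlled on the common bits. Because of this, it acts only on the intended pair within that block's slice, and leaves alone basis states of other residual constraints on the same $I$. The one exception is the case covered by the schedule hypothesis: where a local basis state is shared, the hypothesis forces identical action. It also fixes terminal blocks. Disjointness of the cover in Eq.~\eqref{eq:recursive-partition} ensures that distinct branches never collide on the same basis string. At the leaves every block is terminal and holds its unique string, so the final state is $\sum_x\sqrt{\pi(x)}|x\rangle$ by Eq.~\eqref{eq:recursive-distribution}.

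For resources, fix a variable set $I$ that occurs at some layer. The merged schedule on $I$ contains at most $\Lambda(I)=O(|I|^d)$ updates, and each has depth and gate count $O(|I|)$ by the ancilla-free multi-controlled two-level construction. The cost on $I$ is therefore $O(|I|^{d+1})$. Blocks at the same layer are disjoint and run in parallel. Balance gives $|I|\le\rho^\ell n$ at layer $\ell$, so the total depth is $\sum_\ell O((\rho^\ell n)^{d+1})=O(n^{d+1})$, a geometric series.

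The gate count at layer $\ell$ is $\sum_{I}O(|I|^{d+1})\le O(n\cdot(\rho^\ell n)^{d})$, because the blocks at a layer partition at most $n$ variables. For $d\ge1$ this is again geometric and sums to $O(n^{d+1})$. For $d=0$ each layer contributes $O(n)$ over $O(\log n)$ layers, which produces the additive $n\log n$ term. Initial $X$ gates preparing the root representative cost $O(n)$ more. Constructibility ensures the classical compilation of labels, representatives, deduplication and angles runs in polynomial time. Computing the bottom-up counts for the uniform state also uses constructibility, through the compressed DAG.

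The main obstacle is the merging step. Updates for different residual constraints $\beta\in\mathcal R(I)$ are applied sequentially on the same qubits. Each must either act correctly on every coexisting representative or leave it fixed, and later chains must not disturb strings already expanded. I would handle this by appealing to the schedule hypothesis in the theorem statement, together with the fact that controls on common bits separate distinct pairs. The check consists of verifying, update by update, that any basis state matching the controls and one of the pair strings is one of the intended occurrences.
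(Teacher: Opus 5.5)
Your proposal is correct and follows essentially the same route as the paper. The paper likewise unrolls $\pi(x)$ as a product of split probabilities along each string's unique branch, sets amplitudes with the telescoping angle rule, invokes the merged-schedule hypothesis for residual blocks that share a local basis state, and then bounds depth by the geometric sum $\sum_t m_t^{d+1}$ and gates by $\sum_t n\,m_t^{d}$, with the $O(n\log n)$ term arising at $d=0$; your frontier-assignment invariant and injectivity remark only make the paper's level-by-level induction slightly more explicit.
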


Under the conditions of the theorem, when \(d=O(1)\), the recursive update schedule and exact split probabilities for the uniform feasible state can be generated in polynomial classical time without enumerating \(\mathcal F\). The required local update angles can be evaluated to \(b\) bits in time polynomial in \(n\) and \(b\).

Supplementary Sections I and II give the update construction, recursive data bound, split mass calculation, theorem proof and finite-precision analysis.

\vspace{.6cm}
\noindent\textbf{\large{Circuit benchmarks for structured families}}\\
\noindent
We construct explicit circuits to connect the analytical bounds to gate-level resource estimates for the six families in Table~\ref{tab:ncs-scope}. All families use the ancilla-free local-update cost convention of Theorem~\ref{thm:layerwise-preparation}. We report preparation depth and gate count in the \(\{R_z,SX,X,\mathrm{CX}\}\) basis. The circuits were constructed with Qiskit 2.4.1~\cite{qiskit2024}, then translated using the default passes at optimization level 0 without a coupling map. The analytical resource bounds fix the fitting forms in Fig.~\ref{fig:ncs-scope-scaling}, and least-squares regression estimates their coefficients. The coefficient of determination is \(R^2=1-\sum_i(y_i-\hat y_i)^2/\sum_i(y_i-\bar y)^2\), where \(y_i\) are the circuit resource values, \(\hat y_i\) their fitted values and \(\bar y\) their mean. All twelve series have \(R^2\geq0.98\).

For the constraint families in Figs.~\ref{fig:ncs-framework}a and \ref{fig:ncs-scope-scaling}, fixed Hamming weight uses \(k=16\), and the Rydberg blockade chain contains \(n\) sites in one dimension. The divergence benchmark is a two-row directed ladder with \(n=4L+1\) binary edge variables, defined in Supplementary Section III. The general scaling for bounded-width divergence constraints in Table~\ref{tab:ncs-scope} applies to bounded-degree graph families with bounded-width recursive decompositions, constant-size local path or cycle updates and constant-size controls that isolate each update.

For the structured-state benchmarks, the input is a compact description rather than a constraint predicate. The sparse-support benchmark uses a pseudorandom support of size \(S(n)=\lceil\sqrt n\rceil\). Each nonseed string is opened directly from the seed. The decision-diagram benchmark uses a width-three ordered binary decision diagram accepting binary values divisible by three. The hierarchical low-rank construction requires a supplied balanced Schmidt tree with bounded bond ranks and pairwise disjoint sector supports in the computational basis on each child block. Our benchmark prepares the uniform even-parity state, whose even--even and odd--odd child sectors give bond rank \(r=2\) and satisfy this support condition.

Supplementary Sections III and IV give the circuits for the constraint families and structured states, respectively. Problem sizes, family parameters and fitted resource trends are provided as source data.

\vspace{.6cm}
\noindent\textbf{\large{Search support and resources for facility location}}\\
\noindent
For \(a\) facilities and \(b\) demands, let \(x_{ij}\in\{0,1\}\) indicate whether demand \(i\) is assigned to facility \(j\), and \(y_j\in\{0,1\}\) whether facility \(j\) is open. The constraints are \(\sum_{j=1}^{a}x_{ij}=1\), \(x_{ij}\leq y_j\) and \(\sum_{j=1}^{a}y_j\leq K\), with \(1\leq K\leq a\). Setting \(K=a\) recovers standard UFLP. With nonnegative opening costs \(f_j\) and service costs \(d_{ij}\), the objective is \(\min \sum_{j=1}^{a}f_jy_j+\sum_{i=1}^{b}\sum_{j=1}^{a}d_{ij}x_{ij}\).

Closing an unused facility cannot increase this objective. Once a nonempty set of facilities is chosen, the optimal assignment sends each demand to its cheapest open facility. It is therefore enough to search the nonempty sets containing at most \(m=\min(K,b)\) facilities. Their number is \(N_{a,b,K}=\sum_{r=1}^{m}\binom{a}{r}\). Our SFS circuit prepares the exact uniform superposition over these choices. Quantum minimum finding then uses \(O(\sqrt{N_{a,b,K}})\) applications of the state-preparation, inverse-preparation and objective-oracle circuits~\cite{durr1996quantum}.

\begin{theorem}[Conditional search speed-up for standard UFLP]
\label{thm:uflp-search-advantage}
For standard UFLP with nonnegative costs represented using polynomially many bits, a bounded-error quantum algorithm outputs the exact optimum in \(O^*(2^{a/2})\) time~\cite{durr1996quantum}. Here, \(O^*(\cdot)\) suppresses factors polynomial in the input size. Under SETH, no deterministic exact classical algorithm runs in \(O^*((2-\epsilon)^a)\) time for any fixed \(0<\epsilon<1\)~\cite{impagliazzo2001complexity,cygan2016cnfsat}. Hence the quantum algorithm gives a conditional asymptotic quadratic speed-up in the leading exponential dependence on \(a\).
\end{theorem}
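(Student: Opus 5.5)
The argument has three parts: a quantum upper bound, a classical lower bound, and a comparison of the two. The upper bound combines the reduction to facility-set search, the exact state preparation of Theorem~\ref{thm:layerwise-preparation}, and the D\"urr--H\o yer minimum-finding analysis. The lower bound is imported from SETH.

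\textbf{Correctness of the reduction.} Fix a nonempty facility set \(Y\) and define \(c(Y)=\sum_{j\in Y}f_j+\sum_{i=1}^b\min_{j\in Y}d_{ij}\). Any feasible \((x,y)\) has cost at least \(c(Y)\), where \(Y\) is the open set; this uses \(x_{ij}\le y_j\) and nonnegativity of \(d_{ij}\). Conversely, the assignment attaining the minima realizes cost \(c(Y)\). Closing unused facilities cannot increase cost because \(f_j\ge0\), so some optimum has \(1\le|Y|\le\min(a,b)\). Hence the UFLP optimum equals \(\min_Y c(Y)\) over the \(N_{a,b,a}\le 2^a\) candidate sets.

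\textbf{Implementation cost.} The state \(|\psi\rangle=N^{-1/2}\sum_Y|Y\rangle\) is a superposition of fixed-Hamming-weight layers weighted by \(\binom{a}{r}\). It is prepared in two stages. First, an \(O(\log a)\)-qubit-free rotation cascade, or equivalently a top-level split of the recursion, selects \(r\). Second, each weight-\(r\) layer is expanded using the fixed-Hamming-weight construction with \(d=2\). By Theorem~\ref{thm:layerwise-preparation}, this costs \(\mathrm{poly}(a)\) gates, with angles computable to \(\mathrm{poly}(a)\) bits. Exactness is not essential: an inverse-polynomial error per call only adds polynomial precision overhead. The comparison oracle computes \(c(Y)\) reversibly with \(\mathrm{poly}(a,b,\text{bit length})\) arithmetic gates and then compares it against a threshold. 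D\"urr--H\o yer minimum finding uses \(O(\sqrt N)\) calls to prepare, unprepare and oracle, and succeeds with probability at least \(1/2\); this can be amplified by constant repetition. The total is \(O(\sqrt{2^a})\cdot\mathrm{poly}=O^*(2^{a/2})\). Because ties do not matter, the algorithm returns an exact optimal \(Y\), and the optimal assignment is then read off classically.

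\textbf{Classical lower bound.} This is the main subtle step, since the statement itself only cites it. The plan is a reduction from Set Cover parameterized by universe size, or alternatively from Hitting Set. The facilities correspond to the sets or elements of a source instance whose \(O^*((2-\epsilon)^{a})\) hardness under SETH is known from Cygan et al.\ (or its CNF-SAT-based analogue with \(a\) equal to the number of variables). Demands encode the covering requirements: \(d_{ij}=0\) if \(j\) can serve \(i\), and a large penalty otherwise, with unit opening costs. The construction must be linear in \(a\) and polynomial in everything else, so that an \(O^*((2-\epsilon)^a)\) UFLP algorithm would contradict SETH. If no such exact citation applies, I would fall back on the cited result as given.

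Comparing the exponents \(a/2\) against \(a\) then yields the stated conditional quadratic separation in the leading exponent.
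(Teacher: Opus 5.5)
Your overall architecture matches the paper's proof: reduce to search over nonempty facility sets with $N_{a,b,a}\le 2^a$, run D\"urr--H{\o}yer minimum finding with a polynomial-cost preparation and cost oracle, and transfer a SETH bound to UFLP by a Hitting Set reduction. Two steps do not go through as written.

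\textbf{Error accounting.} An inverse-polynomial error per call is not enough. By the hybrid argument, the deviations of the $T=\Theta(2^{a/2})$ calls add, so you need per-call error $O(2^{-a/2})$. The conclusion still holds, because this requires only $O(a)$ bits of angle precision. Ross--Selinger synthesis then costs $O(a)$ gates per rotation, which is how the paper argues it. Replace your sentence about ``inverse-polynomial error per call'' with this accounting.

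\textbf{Classical lower bound.} You cannot use ``Set Cover parameterized by universe size'' here, for two reasons. Under the natural map (sets to facilities, elements to demands) the universe becomes the demands, so any such bound is in $b$, not $a$. Moreover, $(2-\epsilon)^n$-hardness of Set Cover in the universe size is only conjectured by Cygan et al.\ (the Set Cover Conjecture); it is not derived from SETH.

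The correct source is Hitting Set on universe $[a]$, equivalently Set Cover parameterized by the number of sets, whose SETH-hardness Cygan et al.\ do prove. The reduction must also preserve $a$ exactly, or up to $a+o(a)$; being merely linear in $a$ is not enough. A blow-up $a'=ca$ with $c>1$ would rule out $(2-\epsilon)^a$ algorithms only for $\epsilon$ bounded away from zero, not for every fixed $\epsilon$.

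The ``large penalty'' also has to be fixed, together with a decision threshold. The paper takes facilities to be the elements and demands to be the sets, with unit opening costs. It sets $d_{ij}=0$ for $j\in S_i$ and $d_{ij}=a+1$ otherwise, and asks whether the optimum is at most $q$. Since $q\le a<a+1$, any solution of cost at most $q$ uses only zero-cost edges. Its open facilities therefore hit every $S_i$ and number at most $q$.

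Finally, falling back on ``the cited result as given'' does not close this step. The cited works concern CNF-SAT and Hitting Set, not UFLP, so the reduction is the actual content of the lower-bound half.
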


\noindent For the fixed-\(K\) hard families obtained from \(K\)-Set Cover~\cite{patrascu2010possibility}, the same construction uses \(O(a^{K/2})\) circuit applications. Each application requires \(O(a\operatorname{polylog}(a))\) gates, giving total time \(O(a^{K/2+1}\operatorname{polylog}(a))\). This gives a conditional separation in the exponent for every fixed \(K\geq3\). Across these hard families, the quantum exponent approaches half the classical exponent as \(K\) increases.

Our construction also prepares the demand assignment strings used in Fig.~\ref{fig:ncs-uflp-scaling}e. It first assigns each demand while keeping all facilities open, then closes every facility that serves no demand. By Theorem~\ref{thm:layerwise-preparation}, we obtain the following corollary.

\begin{corollary}[Demand-assignment state preparation]
\label{cor:uflp-state-preparation}
For a UFLP instance with \(a\) facilities and \(b\) demands, an ancilla-free circuit prepares the exact uniform superposition over the \(a^b\) strings that assign every demand exactly once and close every unused facility. It can also prepare any prescribed basis state in this support. The circuit has depth \(O(\log a+b)\) and gate count \(O(ab)\).
\end{corollary}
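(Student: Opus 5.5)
The plan is to build the circuit in two stages matching the description in the text, and to count resources separately for each. In the first stage we prepare, for every demand $i$, the uniform superposition of the one-hot string $(x_{i1},\dots,x_{ia})$ over its $a$ facilities, and we initialise the opening register $y=1^a$. In the second stage we close every facility whose assignment column is empty.

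For the first stage, each demand row is a fixed-Hamming-weight block with $(m,r)=(a,1)$. We apply Theorem~\ref{thm:layerwise-preparation} to this family. For $k=1$ the residual weight at each block size takes only the values $0$ and $1$. A weight-one block of size $m$ has exactly two interface labels (left weight $0$ or $1$), so $\Lambda(I)=O(1)$. The balanced recursion therefore has height $O(\log a)$. Each local update $\tau(\theta)$ between the two representatives (the one located at the right end of the left child versus the right end of the right child) differs in only two bits. It needs controls only on bits that can still distinguish branches; in the ``all ones at the right end'' convention these are $O(1)$ qubits, because inside a weight-one block only the boundary representative positions can carry the one. Parallelising updates on disjoint blocks within a layer gives depth $O(\log a)$ and $O(a)$ gates per row. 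This is the row $k=1$ of Table~\ref{tab:ncs-scope}, giving depth $O(k\log(n/k))=O(\log a)$ and gates $O(a)$. Since the $b$ rows act on disjoint qubits, they run in parallel: depth $O(\log a)$ and gate count $O(ab)$. Setting $y_j=1$ costs $a$ $X$ gates in depth one. The angles $\theta$ from Eq.~\eqref{eq:angle-mass} with mass proportional to block sizes give the uniform distribution. Alternatively, choosing probability one on a single branch yields any prescribed one-hot row, hence any prescribed assignment.

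For the second stage, for each facility $j$ we apply a $C^bX$ gate targeting $y_j$ with negative controls on $x_{1j},\dots,x_{bj}$. This flips $y_j$ to $0$ exactly when column $j$ is empty. The resulting map sends each assignment $x$ to $(x,y(x))$ with $y_j(x)=\bigvee_i x_{ij}$. It is injective on basis states, so the superposition stays uniform over the $a^b$ target strings, which are exactly those assigning every demand once and closing every unused facility. To meet the ancilla-free requirement, each $C^bX$ is decomposed without ancillas into $O(b)$ gates and depth $O(b)$. This uses the linear-size multi-controlled-$X$ constructions, borrowing idle qubits from other registers as dirty ancillas where needed, or the general $O(b)$ ancilla-free decomposition. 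The $a$ gates have disjoint controls and targets (column $j$ and $y_j$), so they run in parallel. This stage contributes depth $O(b)$ and $O(ab)$ gates.

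Summing the two stages gives depth $O(\log a+b)$ and gate count $O(ab)$. The main obstacle is the ancilla-free linear-depth realisation of the $C^bX$ gates performed simultaneously for all $a$ columns. If dirty ancillas are used, they must be borrowed without conflicts across the parallel gates. Our first attempt would be to use an exact ancilla-free $O(b)$-depth decomposition for each gate independently. Failing that, we would borrow the other demand bits of the same column's disjoint neighbours in a staggered schedule, which adds only a constant factor.
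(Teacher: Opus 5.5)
Your proposal is correct and follows essentially the same route as the paper. The paper also runs a balanced one-hot recursion per demand row, with representative $e_{\max(S)}$ (the weight-one right-packed Dicke recursion), in parallel over rows for depth $O(\log a)$ and $O(ab)$ gates, then a parallel closing pass of negatively controlled $C^bX$ gates on each $y_j$, which acts bijectively on the all-open support. The minor differences are that the paper's two-bit reassignments need no controls at all, since each row is one-hot and the bare two-level rotation fixes $\lvert 00\rangle$, and that the paper just cites an ancilla-free linear-width $C^bX$ synthesis, so your staggered dirty-ancilla fallback is not needed.
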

\noindent When \(K<\min(a,b)\), the marking oracle rejects strings that use more than \(K\) facilities. Supplementary Section V gives both preparation procedures, the reversible cost oracle and the two reduction proofs.

For a basis string \(z\), let \(C(z)\) be its UFLP cost. One Grover iterate is \(G=AS_0A^\dagger S_\chi\), where \(A\) prepares the uniform search state and \(S_0\) is the reflection about the zero state. The oracle \(S_\chi\) checks feasibility, computes \(C(z)\) reversibly, marks feasible strings below threshold \(B\), then uncomputes its work registers. Feasibility checks, including the opening limit, are omitted when enforced by the prepared support. Component depths in the \(\{R_z,SX,X,\mathrm{CX}\}\) basis are included in the source data.

The resource circuits used Qiskit 2.4.1 and default basis-translation passes at optimization level 0 without a coupling map.

Fig.~\ref{fig:ncs-uflp-scaling} compares all bit strings, demand assignment strings and facility sets, with sizes \(2^{a+ab}\), \(a^b\) and \(N_{a,b,K}\), respectively. Throughout the figure, \(K=a\), so no opening-limit check is needed.

Fig.~\ref{fig:ncs-uflp-scaling}g,h use \((a,b)=(2,2),(2,3),(2,4),(3,3),(3,4),(3,5)\). Panel \textbf{g} reports exact support sizes, and panel \textbf{h} reports the single-target depth estimate \(\lceil(\pi/4)\sqrt N\rceil\operatorname{depth}(G)\), excluding initial preparation. Here \(N\) is the search space size.

For each size, we generate 50 instances with opening and service costs sampled independently and uniformly from \(\{1,\ldots,20\}\). Panel \textbf{h} shows arithmetic means over the same instances for all three spaces. Each iterate uses threshold \(B=C_{\min}+2\), where \(C_{\min}\) is the exact optimum. The iteration multiplier is the rounded asymptotic single-target estimate, independent of the marked count at this threshold.

Fig.~\ref{fig:ncs-uflp-scaling}i applies Grover adaptive search (GAS) to the same 50 \((a,b)=(2,3)\) instances. Each trajectory starts with \(B_{\rm ub}=\sum_j f_j+\sum_i\max_j d_{ij}+1\), samples \(A|0\rangle\) once, and sets its threshold to the measured cost if feasible. Otherwise, it retains \(B_{\rm ub}\). Trajectories evolve independently using Qiskit Aer 0.17.2.

The real search-window size \(k_t\) starts at one. At round \(t\), the trajectory samples \(j_t\) uniformly from \(\{0,\ldots,\lceil k_t\rceil-1\}\), applies \(G^{j_t}\) to the prepared state and measures once. A feasible improvement lowers the threshold and resets \(k_{t+1}=1\). Otherwise, \(k_{t+1}=\min((8/7)k_t,\sqrt{|\Omega|})\), where \(\Omega\) is the search support.

For each instance and search space, the gap \(B_t-C_{\min}\) between the current threshold and exact optimum is averaged over 64 independent trajectories. The mean \(\bar g_t\) and sample standard deviation \(s_t\) are calculated across the 50 instance averages with equal weights. Shading shows pointwise 95\% Student's \(t\) confidence intervals \(\bar g_t\pm t_{0.975,49}s_t/\sqrt{50}\), truncated below at zero. Here \(t_{0.975,49}\) is the 97.5th percentile with 49 degrees of freedom. The trajectory simulations implement the threshold phase oracle by exact enumeration, with the same marking action as the reversible circuit costed in panel \textbf{h}. Panel \textbf{i} compares adaptive rounds, with independently evolving thresholds and no matching of circuit depths across search spaces.

\vspace{.6cm}
\noindent\textbf{\large{Encoding group balance in the DQI initial state}}\\
\noindent
We derive the group-resolved initial state and show how it changes the DQI output while leaving the constraint encoding and decoding unchanged. Let \(q\) be prime and let \(\mathbb F_q\) denote the prime field with \(q\) elements. We label its nonzero elements by \(i=1,\ldots,m\), where \(m=q-1\), and use them as evaluation points. The polynomial degree is below \(n\), with \(1\leq n<m\). An OPI instance specifies an allowed set \(F_i\subset\mathbb F_q\) for each evaluation point, with \(|F_i|=r\) for all \(i\) and \(0<r<q\). As in standard OPI DQI, each \(F_i\) is assumed to be explicit or accessible with polynomial cost. Each polynomial \(Q\) is represented by its \(n\) coefficients in \(\mathbb F_q\). Let \(f_i(Q)=1\) if \(Q(i)\in F_i\) and \(f_i(Q)=0\) otherwise, and define \(f(Q)=\sum_{i=1}^{m}f_i(Q)\).

Standard OPI DQI uses a score-weighting function \(P\), which is a degree-\(\ell\) polynomial. The initial-state weights are chosen to raise the expected score~\cite{jordan2025decoded}. It prepares a binary-valued support mask in the \(|0\rangle\) and \(|1\rangle\) levels of \(m\) \(q\)-dimensional registers in the state
\[
|\Psi_{\rm DQI}\rangle
=\sum_{k=0}^{\ell}w_k|D_{m,k}\rangle,
\qquad
\sum_{k=0}^{\ell}|w_k|^2=1,
\]
where \(|D_{m,k}\rangle\) is the uniform superposition over the \(m\)-bit strings of Hamming weight \(k\). At each occupied mask position, constraint encoding replaces \(|1\rangle\) with a superposition over nonzero finite-field errors. This gives error vectors \(y\in\mathbb F_q^m\) with at most \(\ell\) nonzero components. Syndrome compression, reversible decoding and inverse Fourier transformation then produce amplitudes proportional to \(P(f(Q))\).

For the grouped balance preference used in Fig.~\ref{fig:ncs-dqi-reweighting}, the constraints are partitioned into two prescribed groups \(G_1\) and \(G_2\) of comparable size. Define \(S_g(Q)=\sum_{i\in G_g}f_i(Q)\) for \(g=1,2\) and \(J(Q)=S_1(Q)S_2(Q)\). At fixed total score, \(J(Q)\) is larger when the satisfied constraints are distributed more evenly between the groups.
To obtain amplitudes proportional to \(P(f(Q))J(Q)\), expand both factors in the finite-field Fourier representation. With \(\omega_q=e^{2\pi\mathrm i/q}\), write \(P(f(Q))=\sum_{y\in\mathbb F_q^m}a_y\omega_q^{-\sum_{i=1}^{m}y_iQ(i)}\) and \(J(Q)=\sum_{u\in\mathbb F_q^m}b_u\omega_q^{-\sum_{i=1}^{m}u_iQ(i)}\). Nonzero coefficients require at most \(\ell\) nonzero components in \(y\) and two in \(u\), since \(J\) is quadratic in the indicators. Multiplication convolves these coefficients, giving the normalized error-register state
\begin{equation}
|W_{\rm SFS\text{-}DQI}\rangle
=\frac{1}{\sqrt C}
\sum_{y,u\in\mathbb F_q^m}a_yb_u|y+u\rangle.
\end{equation}
Here \(C\) normalizes the state and \(y+u\) denotes componentwise addition in \(\mathbb F_q^m\). SFS-DQI changes only the pre-encoding mask state that produces this error-register state. Let \(B\in\mathbb F_q^{m\times n}\) be the OPI Vandermonde evaluation matrix, with \(B_{i,j+1}=i^j\) for \(j=0,\ldots,n-1\). It maps the polynomial coefficients to their values at the \(m\) evaluation points. Provided that the Reed--Solomon syndrome map \(y\mapsto B^Ty\) is injective on the modified error support, the subsequent DQI operations remain unchanged and produce final amplitudes proportional to \(P(f(Q))J(Q)\).

The modified error weight is bounded by \(R=\ell+2\). We prepare the two-group weight profiles with \(O(R^2)\) local updates and phases controlled by the group weights, then refine both groups in parallel by fixed Hamming weight recursion. With \(m_g=|G_g|\) and \(R_g=\min(R,m_g)\), the preparation depth is \(O\!\left(R^2+\max_{g=1,2}[R_g\log(m_g/R_g)+R_g]\right)\), at most \(O(m^2)\).

We write \(\langle\cdot\rangle_{\rm DQI}\) and \(\langle\cdot\rangle_{\rm SFS\text{-}DQI}\) for expectations under the respective output distributions. After normalization, \(\langle f\rangle_{\rm SFS\text{-}DQI}=\langle fJ^2\rangle_{\rm DQI}/\langle J^2\rangle_{\rm DQI}\). Within each score layer of nonzero SFS-DQI probability, sampling probabilities are proportional to \(J(Q)^2\). For \(5\leq n<m\), choose \(\ell=\lfloor(n-5)/2\rfloor\), giving \(2R<n+1\) for unique decoding and \(2\ell+5\leq n\) for the expected-score proof. With \(0<r<q\), nonempty groups and nonzero \(P\), SFS-DQI then has a larger expected score than DQI using the same \(P\). This choice is two degrees below the maximum allowed by the standard OPI DQI expected-score theorem. When \(r/q\to1/2\) and \(n/q\to\nu\in(0,1)\), it lowers the normalized score benchmark by only \(O(1/m)\). The difference vanishes as \(m\to\infty\). The polynomial preparation overhead therefore preserves the reported superpolynomial speed-up over known classical algorithms under the standard DQI comparison assumption~\cite{jordan2025decoded}. Supplementary Section VI gives the construction, resource analysis, proofs and explicit comparison assumption.

The numerical benchmark uses \(q=17\), \(n=7\), \(m=16\) and \(r=8\), with the first and last eight constraints forming \(G_1\) and \(G_2\). Each \(F_i\) is sampled independently and uniformly from the eight-element subsets of \(\mathbb F_{17}\). Fig.~\ref{fig:ncs-dqi-reweighting} reports equal-weight means over 50 instances. We use \(\ell=1\) and \(R=3\), satisfying both degree conditions above. Standard DQI with the optimal degree-three \(P\) also permits unique decoding and gives a mean expected score of \(12.09\), showing the finite-size cost of reducing the degree. Fig.~\ref{fig:ncs-dqi-reweighting} holds the degree-one \(P\) fixed to isolate structural reweighting.

For each instance, the output distributions are evaluated by exhaustive enumeration of all \(17^7\) polynomials. At degree one, maximizing the standard DQI expected-score quadratic form gives the normalized leading eigenvector \(w=(w_0,w_1)^T\) of the \(2\times2\) matrix with off-diagonal entries \(\sqrt m\) and diagonal entries \(0\) and \((q-2r)/\sqrt{r(q-r)}\). We evaluate \(P(f(Q))\) through the corresponding normalized elementary symmetric expansion over the prime field and obtain SFS-DQI probabilities by reweighting with \(J(Q)^2\) and normalizing. Fig.~\ref{fig:ncs-dqi-reweighting}c shows probabilities per polynomial for \(f(Q)=9,\ldots,16\), and Fig.~\ref{fig:ncs-dqi-reweighting}d groups the same distributions by \((S_1,S_2)\) profiles.

\vspace{.6cm}
\noindent\textbf{\large{Data availability}}\\
\noindent
The data generated in this study, including the source data for the computational figures, are available from Zenodo at \url{https://doi.org/10.5281/zenodo.23042152} (ref.~\citenum{jiang_2026_23042152}).

\vspace{.6cm}
\noindent\textbf{\large{Code availability}}\\
\noindent
The code used to generate the computational data and figures is available from Zenodo at \url{https://doi.org/10.5281/zenodo.23042152} (ref.~\citenum{jiang_2026_23042152}).

\begin{samepage}
\vspace{.6cm}
\noindent\textbf{\large{Acknowledgements}}\\
\noindent
The authors acknowledge support from the Zhejiang Provincial Natural Science Foundation of China (grant no.~LR25F020002), the Zhejiang Pioneer (Jianbing) Project (grant no.~2026C01006), and the National Natural Science Foundation of China (grant no.~62472374).
The authors also acknowledge support from the National Natural Science Foundation of China (grant no.~12405013) and the Guangdong Provincial Quantum Science Strategic Initiative (grant nos.~GDZX2503008 and GDZX2503001).
Xiao Yuan is supported by Quantum Science and Technology-National Science and Technology Major Project (2023ZD0300200), 
the National Natural Science Foundation of China Grant (No.~12361161602), 
NSAF (Grant No.~U2330201), 
Beijing Natural Science Foundation Z250004, 
Beijing Science and Technology Planning Project (Grant No.~Z25110100810000),
and the High-performance Computing Platform of Peking University. 
\end{samepage}

\begin{samepage}
\vspace{.6cm}
\noindent\textbf{\large{Author contributions}}\\
\noindent
J.Y., L.L. and X.Y. conceived the project.
Q.J., X.-M.Z. and D.X. carried out the research and wrote the manuscript.
All authors discussed the results and contributed to the final manuscript.
\end{samepage}

\begin{samepage}
\vspace{.6cm}
\noindent\textbf{\large{Competing interests}}\\
\noindent
The authors declare no competing interests.

\end{samepage}

\clearpage
\onecolumngrid
\setcounter{section}{0}
\setcounter{figure}{0}
\setcounter{table}{0}
\setcounter{algorithm}{0}
\setcounter{equation}{0}
\setcounter{theorem}{0}
\setcounter{lemma}{0}
\setcounter{corollary}{0}
\setcounter{secnumdepth}{3}
\renewcommand{\thefigure}{S\arabic{figure}}
\renewcommand{\thetable}{S\arabic{table}}
\renewcommand{\thealgorithm}{S\arabic{algorithm}}
\renewcommand{\theequation}{S\arabic{equation}}
\renewcommand{\thetheorem}{S\arabic{theorem}}
\renewcommand{\thelemma}{S\arabic{lemma}}
\renewcommand{\thecorollary}{S\arabic{corollary}}
\renewcommand{\theHfigure}{S\arabic{figure}}
\renewcommand{\theHtable}{S\arabic{table}}
\renewcommand{\theHalgorithm}{S\arabic{algorithm}}
\renewcommand{\theHequation}{S\arabic{equation}}
\renewcommand{\theHtheorem}{S\arabic{theorem}}
\renewcommand{\theHlemma}{S\arabic{lemma}}
\renewcommand{\theHcorollary}{S\arabic{corollary}}
\renewcommand\figurename{Supplementary Figure}
\renewcommand\tablename{Supplementary Table}
\setcounter{tocdepth}{2}

\begin{center}
  {\large\bfseries Supplementary Information\par}
  \vspace{0.5em}
  {\bfseries Harnessing problem structure for end-to-end quantum speed-ups\par}
\end{center}

\makeatletter
\let\supplementarybibcontext\the@ipfilectr
\tableofcontents
\global\let\the@ipfilectr\supplementarybibcontext
\makeatother
\clearpage

\section{Structural foundations}
\label{sec:sm-foundations}
We use recursive blocks to keep track of variable sets, residual constraints, and representatives. Balanced splits give recursion height \(O(\log n)\), while the disjoint block decomposition routes each feasible bit string through one interface label at every nonterminal block. These facts prove Lemma~\ref{lem:structural-recursion}. The controlled update below implements the required real-amplitude rotation.

\subsection{Recursive blocks and interface labels}
\label{subsec:sm-recursive-blocks}

Fix a block \((I,\beta)\) in the recursive decomposition, where \(I\subseteq[n]\) is its variable set and \(\beta\) is its residual constraint. Its feasible set and selected representative are denoted by \(F(I,\beta)\) and \(c_{I,\beta}\). At the root, \(I=[n]\) and \(F(I,\beta)=\mathcal F\). Terminal status may depend on the full block \((I,\beta)\). Every nonempty terminal feasible set contains one bit string. For a nonterminal block, the prescribed split \(I=I_L\sqcup I_R\) depends only on \(I\). Each feasible string in \(F(I,\beta)\) is then obtained from one compatible pair of child strings, and the parent feasible set decomposes as
\begin{equation}
F(I,\beta)
=
\bigsqcup_{\alpha\in\mathcal A(I,\beta)}
\Bigl(
F(I_L,\beta_L^{(\alpha)})
\times
F(I_R,\beta_R^{(\alpha)})
\Bigr).
\label{eq:sm-block-partition}
\end{equation}
Here \(\mathcal A(I,\beta)\) is the set of admissible interface labels. Each label \(\alpha\) records how \(\beta\) is allocated to the two child residual constraints. The union is disjoint for each fixed \((I,\beta)\). Different recursive paths can reach the same variable set \(I\) with different residual constraints. We collect these constraints in \(\mathcal R(I)\). Each block is assigned a representative chosen consistently with block isomorphisms. At each nonterminal block, the ordered labels begin with \(\alpha_1\) satisfying \(c_{I,\beta}=c_{I,\beta}^{(\alpha_1)}\), where \(c_{I,\beta}^{(\alpha)}=c_{I_L,\beta_L^{(\alpha)}}\sqcup c_{I_R,\beta_R^{(\alpha)}}\).

\begin{proof}[Proof of Lemma~\ref{lem:structural-recursion}]
To bound the height, follow any root-to-leaf path in the recursive decomposition. Let \(h(m)\) be the maximum remaining depth from a block with \(|I|=m\). For splits satisfying \(\max\{|I_L|,|I_R|\}\le\rho |I|\) with a fixed \(\rho<1\), the height obeys \(h(m)\le 1+h(\rho m)\) until the block size reaches a constant cutoff. Iteration gives \(h(n)\le t+h(\rho^t n)\). Taking \(t\) as the first level below the cutoff gives \(t=O(\log n)\), so the recursion height is \(O(\log n)\).

For uniqueness, fix \(x\in\mathcal F\) and apply the disjointness of Eq.~\eqref{eq:sm-block-partition} for each \((I,\beta)\) reached by its restrictions. At the root, \(x\) belongs to exactly one interface component, which fixes the two child residual constraints. Repeating the argument in both child blocks gives one residual constraint and one interface label at every later block reached by \(x\). This proves Lemma~\ref{lem:structural-recursion}.
\end{proof}

\subsection{Controlled local updates}
\label{subsec:sm-controlled-updates}

For two interface labels \(\alpha,\alpha'\in\mathcal A(I,\beta)\), a local update transfers amplitude between the product representatives \(c_{I,\beta}^{(\alpha)}\) and \(c_{I,\beta}^{(\alpha')}\). Let \(x,x'\in\{0,1\}^{|I|}\) be distinct source and target bit strings, and let \(K=\{i:x_i\ne x'_i\}\). The signed update pattern on the support is
\begin{equation}
\mu_i=
\begin{cases}
+1, & x_i=0,\ x'_i=1,\\
-1, & x_i=1,\ x'_i=0,
\end{cases}
\qquad i\in K .
\end{equation}
Set \(\sigma_i^{(+1)}=|1\rangle\langle0|\) and \(\sigma_i^{(-1)}=|0\rangle\langle1|\). The oriented support flip and its adjoint are
\begin{equation}
F_\mu
=
\bigotimes_{i\in K}\sigma_i^{(\mu_i)},
\qquad
F_\mu^\dagger
=
\bigotimes_{i\in K}\sigma_i^{(-\mu_i)},
\end{equation}
with identity on qubits outside \(K\). The real-amplitude update uses the Hermitian generator
\begin{equation}
H_{\rm up}(\mu)
=
i\bigl(F_\mu-F_\mu^\dagger\bigr).
\end{equation}
On the ordered two-dimensional subspace \((|x\rangle,|x'\rangle)\), it has the matrix form of Pauli \(Y\). Thus
\begin{equation}
e^{-i\theta H_{\rm up}(\mu)}|x\rangle
=
\cos\theta\,|x\rangle+\sin\theta\,|x'\rangle,\qquad
e^{-i\theta H_{\rm up}(\mu)}|x'\rangle
=
-\sin\theta\,|x\rangle+\cos\theta\,|x'\rangle .
\label{eq:sm-two-level-action}
\end{equation}
The first relation gives the expansion direction used during preparation. The second relation fixes the unitary extension on the other basis vector in the same two-dimensional subspace. Other two-level phase choices can be converted to this real form by diagonal phase gates, with no asymptotic overhead.

To apply the update only when a specified set of control qubits has the desired bit pattern, let \(C\) be disjoint from \(K\), and let \(\eta=(\eta_j)_{j\in C}\) be that pattern. The projector onto this control pattern is \(\Pi_\eta=\prod_{j\in C}|\eta_j\rangle\langle\eta_j|\). The controlled update is
\begin{equation}
\tau_\mu^\eta(\theta)
=
e^{-i\theta\Pi_\eta H_{\rm up}(\mu)} .
\end{equation}
If the control register does not match \(\eta\), the operation is the identity. If it matches, the operation applies Eq.~\eqref{eq:sm-two-level-action} on the selected source and target pair. When \(C=\varnothing\), we omit the superscript and write \(\tau_\mu(\theta)\). In circuit figures, concrete values of \(\eta\) are written above \(\tau\) and concrete components of \(\mu\) below it. Multiple components are displayed as indexed tuples.

The generic update is determined by its representative pair. Set \(C=I\setminus K\) and \(\eta=x_C=x'_C\). These controls select exactly \(|x\rangle\) and \(|x'\rangle\), so every other local basis state is unchanged. The update uses \(|I|\) support and control qubits and has \(O(|I|)\) depth and gate count under the ancilla-free synthesis below. Problem-specific constructions may use fewer controls when smaller patterns isolate the same pair.

For each variable set \(I\), the ordered chains used by the target distribution are merged in polynomial time into at most \(\Lambda(I)\) distinct block-local updates while preserving the order of every chain. At each step, an occupied local basis state shared by different residual blocks receives the same oriented two-level action and angle in all of them, or the identity in all of them. Once a branch reaches a terminal block, all later updates within that block must leave its basis state unchanged.

Alg.~\ref{alg:controlled-local-update} implements the controlled update by canonicalizing the support, applying a controlled middle block, and undoing the canonicalization. Choose a pivot \(p\in K\), order the remaining support as \(i_1,\ldots,i_{|K|-1}\), and use the signed pattern directly. A site with \(\mu_i=+1\) has source bit \(0\), and a site with \(\mu_i=-1\) has source bit \(1\). Write \(q_i\) for the qubit indexed by \(i\) and \(q_S\) for the register indexed by a set \(S\). Let \(P_\mu\) denote the support canonicalization circuit. Let \(R_K^\eta(\theta)\) denote the controlled middle block
\begin{equation}
R_K^\eta(\theta)
=
R_y(2\theta)_{p}
\quad\text{controlled on}\quad
q_{i_1}=\cdots=q_{i_{|K|-1}}=1,\ q_C=\eta .
\end{equation}
After canonicalization, the source and target basis states differ only on the pivot qubit \(q_p\). The full update is
\begin{equation}
\tau_\mu^\eta(\theta)
=
P_\mu^\dagger R_K^\eta(\theta)P_\mu .
\end{equation}
The convention for \(R_y(2\theta)\) is \(|0\rangle\mapsto\cos\theta|0\rangle+\sin\theta|1\rangle\) and \(|1\rangle\mapsto-\sin\theta|0\rangle+\cos\theta|1\rangle\), matching Eq.~\eqref{eq:sm-two-level-action}.

\begin{algorithm}[H]
\caption{Controlled local-update module for \(\tau_\mu^\eta(\theta)\)}
\label{alg:controlled-local-update}
\begin{algorithmic}[1]
\Statex \textbf{Input.} Signed update pattern \(\mu\), control pattern \(\eta\), rotation angle \(\theta\)
\Statex \textbf{Registers.} Support qubits \(q_K\), pivot \(q_p\in q_K\), control qubits \(q_C\) with \(C\cap K=\varnothing\)
\If{\(\mu_p=-1\)}
    \State Apply \(X\) on \(q_p\)
\EndIf
\For{each \(i_j\in K\setminus\{p\}\)}
    \State Apply \(\mathrm{CX}(q_p\rightarrow q_{i_j})\)
\EndFor
\For{each \(i_j\in K\setminus\{p\}\) with \(\mu_{i_j}=+1\)}
    \State Apply \(X\) on \(q_{i_j}\)
\EndFor
\For{each \(q_s\in q_C\) with \(\eta_s=0\)}
    \State Apply \(X\) on \(q_s\)
\EndFor
\State Apply \(R_y(2\theta)\) on \(q_p\), controlled by \(q_{K\setminus\{p\}}\) and \(q_C\)
\For{each \(q_s\in q_C\) with \(\eta_s=0\)}
    \State Apply \(X\) on \(q_s\)
\EndFor
\For{each \(i_j\in K\setminus\{p\}\) with \(\mu_{i_j}=+1\)}
    \State Apply \(X\) on \(q_{i_j}\)
\EndFor
\For{each \(i_j\in K\setminus\{p\}\)}
    \State Apply \(\mathrm{CX}(q_p\rightarrow q_{i_j})\)
\EndFor
\If{\(\mu_p=-1\)}
    \State Apply \(X\) on \(q_p\)
\EndIf
\end{algorithmic}
\end{algorithm}

In Alg.~\ref{alg:controlled-local-update}, the first three groups of gates implement \(P_\mu\). The temporary \(X\) gates before and after the controlled rotation convert zero controls in \(\eta\) to positive controls and restore them. Together with the controlled \(R_y(2\theta)\), they implement \(R_K^\eta(\theta)\). The final three groups of gates implement \(P_\mu^\dagger\). A concrete instance is shown in Fig.~\ref{fig:sm-controlled-local-update-example}. Here \(C=\{c_1,c_2\}\) is ordered as \((c_1,c_2)\), and \(K=\{i_1,i_2,p\}\) is ordered as \((i_1,i_2,p)\). The control pattern is \((\eta_{c_1},\eta_{c_2})=(1,0)\), and the support transition is \(x_K=011\) to \(x'_K=100\). On the selected control subspace, the circuit gives
\begin{equation}
|10\rangle_C|011\rangle_K
\mapsto
|10\rangle_C\bigl(\cos\theta\,|011\rangle_K+\sin\theta\,|100\rangle_K\bigr).
\label{eq:sm-figure-example-action}
\end{equation}
On the selected control subspace, the module has the expansion direction used by the local-update sequence on \(I\). Unitarity fixes the inverse action on the other vector in the same two-dimensional subspace, and all other control patterns are left unchanged.

\begin{figure}[h]
\centering
\begin{tikzpicture}[
    x=0.95cm,
    y=1cm,
    line width=0.55pt,
    gate/.style={draw,fill=white,minimum width=0.72cm,minimum height=0.46cm,inner sep=1.5pt,font=\small},
    widegate/.style={draw,fill=white,minimum width=1.55cm,minimum height=0.55cm,inner sep=2pt,font=\small},
    ctrl/.style={circle,fill,inner sep=1.8pt},
    targ/.style={circle,draw,fill=white,minimum size=0.30cm,inner sep=0pt},
    lab/.style={font=\small,anchor=east},
    note/.style={font=\small},
    stage/.style={draw,dashed,rounded corners=2pt,line width=0.45pt}
]
\foreach \y/\name in {0/{q_{c_1}},-0.9/{q_{c_2}},-1.8/{q_{i_1}},-2.7/{q_{i_2}},-3.6/{q_p}} {
    \node[lab] at (-0.35,\y) {$\name$};
    \draw (0,\y) -- (16.44,\y);
}

\node[gate] at (0.85,-3.6) {$X$};

\draw (2.05,-3.6) -- (2.05,-1.8);
\node[ctrl] at (2.05,-3.6) {};
\node[targ] at (2.05,-1.8) {};
\draw (1.93,-1.8) -- (2.17,-1.8);
\draw (2.05,-1.92) -- (2.05,-1.68);

\draw (3.30,-3.6) -- (3.30,-2.7);
\node[ctrl] at (3.30,-3.6) {};
\node[targ] at (3.30,-2.7) {};
\draw (3.18,-2.7) -- (3.42,-2.7);
\draw (3.30,-2.82) -- (3.30,-2.58);

\node[gate] at (4.50,-1.8) {$X$};
\node[gate] at (6.07,-0.9) {$X$};

\draw (8.22,0) -- (8.22,-3.6);
\node[ctrl] at (8.22,0) {};
\node[ctrl] at (8.22,-0.9) {};
\node[ctrl] at (8.22,-1.8) {};
\node[ctrl] at (8.22,-2.7) {};
\node[widegate] at (8.22,-3.6) {$R_y(2\theta)$};

\node[gate] at (10.37,-0.9) {$X$};
\node[gate] at (11.94,-1.8) {$X$};

\draw (13.14,-3.6) -- (13.14,-2.7);
\node[ctrl] at (13.14,-3.6) {};
\node[targ] at (13.14,-2.7) {};
\draw (13.02,-2.7) -- (13.26,-2.7);
\draw (13.14,-2.82) -- (13.14,-2.58);

\draw (14.39,-3.6) -- (14.39,-1.8);
\node[ctrl] at (14.39,-3.6) {};
\node[targ] at (14.39,-1.8) {};
\draw (14.27,-1.8) -- (14.51,-1.8);
\draw (14.39,-1.92) -- (14.39,-1.68);

\node[gate] at (15.59,-3.6) {$X$};

\draw[stage] (0.30,-4.05) rectangle (5.05,0.45);
\draw[stage] (5.47,-4.05) rectangle (10.97,0.45);
\draw[stage] (11.39,-4.05) rectangle (16.14,0.45);
\node[note] at (2.68,0.72) {$P_\mu$};
\node[note] at (8.22,0.72) {$R_K^\eta(\theta)$};
\node[note] at (13.77,0.72) {$P_\mu^\dagger$};
\end{tikzpicture}
\caption{Example controlled local-update circuit for \(x_K=011\to x'_K=100\) on \(K=\{i_1,i_2,p\}\), ordered as \((i_1,i_2,p)\), with control condition \((q_{c_1},q_{c_2})=(1,0)\). On this control subspace, the module implements Eq.~\eqref{eq:sm-figure-example-action}. The dashed boxes mark \(P_\mu\), \(R_K^\eta(\theta)\), and \(P_\mu^\dagger\). The middle block includes the temporary control flips.}
\label{fig:sm-controlled-local-update-example}
\end{figure}

The resource count separates the support canonicalization from the middle multi-control. Let
\[
\kappa_u=|K|,\qquad
\kappa_c=|C|,\qquad
\kappa=\kappa_u+\kappa_c,
\]
so the middle \(R_y\) gate has \(\kappa-1\) positive controls. Also set
\[
z_K=|\{i\in K\setminus\{p\}:\mu_i=+1\}|,\qquad
z_C=|\{s\in C:\eta_s=0\}|,
\]
and let \(\delta_p=1\) when \(\mu_p=-1\) and \(\delta_p=0\) otherwise. We use \(G\) for gate count, \(D\) for depth, and \(A\) for the number of clean ancilla qubits. Before decomposing the middle multi-control, Alg.~\ref{alg:controlled-local-update} has
\[
G_X=2(\delta_p+z_K+z_C),\qquad
G_{\rm CX}=2(\kappa_u-1),
\]
where \(G_X\) counts all temporary \(X\) gates and \(G_{\rm CX}\) counts the CNOTs used for support canonicalization. With disjoint one-qubit gates placed in parallel, the depth outside the core with positive controls is
\[
D_{\rm out}=
\begin{cases}
2(\kappa_u-1)+2\delta_p+2, & z_K+z_C>0,\\
2(\kappa_u-1)+2\delta_p, & z_K+z_C=0 .
\end{cases}
\]
The extra \(+2\) is present only when at least one temporary \(X\) layer is needed on a nonpivot support qubit or a control qubit.

After the controls with \(\eta_s=0\) have been flipped, the core with at least two positive controls is implemented by
\[
R_y(\theta)_p\,\mathrm{C}^{\kappa-1}X\,R_y(-\theta)_p\,\mathrm{C}^{\kappa-1}X .
\]
A chosen synthesis of \(\mathrm{C}^{\kappa-1}X\) with elementary gate count \(g_{\kappa-1}\), circuit depth \(d_{\kappa-1}\), and clean ancilla count \(a_{\kappa-1}\) gives controlled local-update costs
\[
G=G_X+G_{\rm CX}+2g_{\kappa-1}+2,\qquad
D=D_{\rm out}+2d_{\kappa-1}+2,\qquad
A=a_{\kappa-1} .
\]
The small cases are direct. With no controls, the core is one \(R_y(2\theta)\). With one control, it is two CNOTs and two \(R_y\) rotations.

The resulting asymptotic tradeoff is
\begin{center}
\small
\setlength{\tabcolsep}{8pt}
\renewcommand{\arraystretch}{1.18}
\begin{tabular}{@{}ccc@{}}
\toprule
clean ancilla qubits & gate count & depth\\
\midrule
\(0\) & \(O(\kappa)\) & \(O(\kappa)\)\\
\(\Theta(\kappa)\) & \(O(\kappa)\) & \(O(\kappa_u+\log\kappa)\)\\
\bottomrule
\end{tabular}
\end{center}
The first row uses the linear-width multi-controlled-\(X\) synthesis adopted in the resource estimates~\cite{huang2024compiling}.
The last row uses a tree-style reduction of the middle multi-control, while the outer support canonicalization still contributes \(O(\kappa_u)\) depth. Theorem~\ref{thm:layerwise-preparation} uses the linear-width row without clean ancillas as the baseline setting.

\section{Recursive data, split weights and state preparation}
\label{sec:sm-split-weights}

This section bounds the recursive data, derives the split probabilities and proves the state preparation theorem. The uniform split masses are computed from the recursion itself.

\subsection{Recursive data bound}
\label{subsec:sm-recursive-data-bound}

\begin{lemma}[Polynomial recursive data]
\label{lem:recursive-data-bound}
For a fixed balanced split tree, a constructible decomposition with fixed expansion dimension \(d\) contains \(O(n^{d+1})\) block and interface records when \(d>0\), and \(O(n\log n)\) when \(d=0\). Isomorphism compression can only reduce these counts.
\end{lemma}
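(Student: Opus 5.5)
The plan is to count records level by level in the fixed balanced split tree and sum a geometric series. Because the split \(I=I_L\sqcup I_R\) depends only on \(I\), the variable sets that occur form a single fixed binary tree \(T\) on \([n]\). By Lemma~\ref{lem:structural-recursion} this tree has height \(h=O(\log n)\). The sets at any one depth are pairwise disjoint, and since \(\max\{|I_L|,|I_R|\}\le\rho|I|\), every set at depth \(t\) has size at most \(\rho^t n\). Every record in the decomposition is one of two kinds. A \emph{block record} is a pair \((I,\beta)\) with \(\beta\in\mathcal R(I)\). An \emph{interface record} is a pair \((I,\beta,\alpha)\) with \(\alpha\in\mathcal A(I,\beta)\). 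So I will bound \(\sum_{I\in T}|\mathcal R(I)|\) and \(\sum_{I\in T}\sum_{\beta\in\mathcal R(I)}|\mathcal A(I,\beta)|\).

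\textbf{Step 1: reduce to \(\Lambda\).} I first bound residual sets by interface labels. Every \(\beta\in\mathcal R(I_L)\) arises as \(\beta_L^{(\alpha)}\) for some parent block \((I,\beta')\) and label \(\alpha\in\mathcal A(I,\beta')\). The same holds for \(I_R\). Hence
\[
|\mathcal R(I_L)|,\ |\mathcal R(I_R)|\le\sum_{\beta'\in\mathcal R(I)}|\mathcal A(I,\beta')|=|\mathcal R(I)|+\Lambda(I),
\]
where the equality uses Eq.~\eqref{eq:block-expansion-count}, with \(|\mathcal A|=1\) for terminal blocks. Thus the interface records at \(I\) number \(|\mathcal R(I)|+\Lambda(I)\), and the block records at the children are bounded by the same quantity. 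Unrolling from the root, where \(|\mathcal R([n])|=1\), gives
\[
|\mathcal R(I)|\le 1+\sum_{J\supsetneq I,\,J\in T}\Lambda(J).
\]
The expected obstacle is that this ancestor sum can overcount badly. A better route is to use the definition of \(d\) directly, since it already bounds \(\Lambda(I)=O(|I|^d)\) for every variable set, and separately bound \(|\mathcal R(I)|\) by \(1+\Lambda(\mathrm{parent}(I))=O(|I|^d)\) using \(|\mathrm{parent}(I)|\le|I|/(1-\rho)\) up to constants. This is the one-step version of the inequality above and avoids the ancestor sum entirely. It gives a per-node record count of \(O(1+|I|^d)\).

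\textbf{Step 2: sum over the tree.} At depth \(t\) the sets are disjoint, and each size is at most \(\rho^t n\). When \(d\ge1\), convexity gives \(\sum_{|I|\text{ at depth }t}|I|^d\le(\rho^t n)^{d-1}\cdot n\). Summing over \(t\) is geometric and yields \(O(n^d)\). The additive constant per node costs \(O(\#\text{nodes})=O(n)\). Together these give \(O(n^{d}+n)\), which lies within the claimed \(O(n^{d+1})\). If the per-node bound instead picks up an extra factor from the children count, it is still at most \(O(n^{d+1})\). When \(d=0\), each node carries \(O(1)\) records. I would bound this by \(O(n)\) per level times \(h=O(\log n)\) levels, giving \(O(n\log n)\) and matching the statement. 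This also allows singleton-level nodes to be counted crudely.

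\textbf{Step 3: compression.} Isomorphism compression merges records into equivalence classes. It therefore maps the record set surjectively onto the compressed set, so the count cannot increase. Constructibility only guarantees that each record has polynomial length and can be generated, so it plays no role in the count itself. The step I expect to need the most care is Step 1: the justification that \(|\mathcal R(I)|\) is controlled by the parent's \(\Lambda\) plus \(|\mathcal R|\), and not by an unbounded accumulation along the path.
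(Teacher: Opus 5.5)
Your Step~1 inequality \(|\mathcal R(I_L)|,|\mathcal R(I_R)|\le|\mathcal R(I)|+\Lambda(I)\) and its unrolled form are exactly what the paper uses. The shortcut you substitute for the ancestor sum, however, does not follow from them. The one-step bound is \(|\mathcal R(I)|\le|\mathcal R(\mathrm{parent}(I))|+\Lambda(\mathrm{parent}(I))\), and replacing \(|\mathcal R(\mathrm{parent}(I))|\) by \(1\) is legitimate only when the parent is the root.

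Nor can you get \(|\mathcal R(I)|=O(|I|^d)\) from the definition of \(d\). It controls only \(\Lambda(I)=\sum_{\beta\in\mathcal R(I)}(|\mathcal A(I,\beta)|-1)\), to which single-label residual blocks contribute nothing, however many there are (in particular all terminal ones). Residual records are inherited from all ancestors and need not shrink with \(|I|\).

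For example, nothing in the definitions excludes the following decomposition: on every variable set, exactly one residual block has two labels, and each label produces a new child residual. Then \(\Lambda\equiv1\), so \(d=0\), yet \(|\mathcal R(I_t)|=t+1\). This reaches order \(\log n\) on small blocks deep in the tree, and if the pattern persists the total is \(\sum_t 2^t(t+1)=\Theta(n\log n)\). Consequently none of the following is justified:
\begin{itemize}
\item the per-node count \(O(1+|I|^d)\);
\item the claim of \(O(1)\) records per node when \(d=0\);
\item the resulting total \(O(n^d+n)\).
\end{itemize}
That total is strictly stronger than the lemma, which is itself a warning sign. The \(\log n\) in the \(d=0\) bound comes from exactly this accumulation, not from a crude count of nodes per level.

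The repair is to keep the ancestor sum you set aside; it does not overcount. Along a root-to-block path \(I_0=[n],I_1,\ldots,I_t\) with \(|I_s|\le\rho^s n\), you get \(|\mathcal R(I_t)|\le 1+C\sum_{s<t}(\rho^s n)^d\). This is \(O(n^d)\) for \(d>0\), because the geometric series is dominated by the root term, and \(O(t)=O(\log n)\) for \(d=0\). The per-node bound is uniform in \(n\), not in \(|I|\). Multiply it by the \(O(n)\) variable sets of the split tree, and note that the interface records on \(I\) number \(|\mathcal R(I)|+\Lambda(I)\). This gives \(O(n^{d+1})\) and \(O(n\log n)\), which is the paper's proof. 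Your compression step is fine as written.
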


\begin{proof}[Proof of Lemma~\ref{lem:recursive-data-bound}]
Let \(r(I)=|\mathcal R(I)|\). The root satisfies \(r([n])=1\). Let \(J\) be a direct child of \(I\) in the fixed physical split tree. Every residual constraint on \(J\) is produced by an interface label of a residual block on \(I\). Hence
\begin{equation}
r(J)
\le
\sum_{\beta\in\mathcal R(I)}
|\mathcal A(I,\beta)|
=
r(I)+\Lambda(I).
\label{eq:sm-child-record-bound}
\end{equation}
Along a root-to-block path \(I_0=[n],I_1,\ldots,I_t\), repeated use of Eq.~\eqref{eq:sm-child-record-bound} gives \(r(I_t)\le1+\sum_{s=0}^{t-1}\Lambda(I_s)\). Balanced splitting gives \(|I_s|\le\rho^s n\). The resulting geometric sum gives \(r(I_t)=O(n^d)\) when \(d>0\), while \(r(I_t)=O(\log n)\) when \(d=0\).

The physical binary split tree has \(O(n)\) variable sets. Summing over them gives \(O(n^{d+1})\) residual records for \(d>0\) and \(O(n\log n)\) for \(d=0\). The number of interface records on \(I\) is at most \(r(I)+\Lambda(I)\), so it obeys the same total bounds. Each canonical record has polynomial length, and the local construction rule enumerates and deduplicates these records in polynomial time. Isomorphism compression can only reduce their number. This proves Lemma~\ref{lem:recursive-data-bound}.
\end{proof}

\subsection{Split weights and angle setting}
\label{subsec:sm-split-weight-angles}

Fix a nonterminal block \((I,\beta)\) and write its admissible interface labels as \(\alpha_1,\ldots,\alpha_s\), where \(s=|\mathcal A(I,\beta)|\). The first label satisfies \(c_{I,\beta}^{(\alpha_1)}=c_{I,\beta}\). For each label \(\alpha_j\), let \(M_{I,\beta}(\alpha_j)\ge0\) be a split mass and set \(M_{I,\beta}=\sum_{j=1}^s M_{I,\beta}(\alpha_j)\). For \(M_{I,\beta}>0\), the conditional probability is \(p_{I,\beta}(\alpha_j)=M_{I,\beta}(\alpha_j)/M_{I,\beta}\). For a nonuniform target distribution, the split probabilities may instead be specified directly.

For the uniform feasible state, the split masses are derived rather than supplied. Every nonempty terminal block is a singleton and has mass one. For a nonterminal block,
\begin{align}
M_{I,\beta}(\alpha)
&=
M_{I_L,\beta_L^{(\alpha)}}
M_{I_R,\beta_R^{(\alpha)}},
\label{eq:sm-uniform-branch-mass}\\
M_{I,\beta}
&=
\sum_{\alpha\in\mathcal A(I,\beta)}
M_{I,\beta}(\alpha).
\label{eq:sm-uniform-block-mass}
\end{align}
Induction from the terminal blocks gives \(M_{I,\beta}=|F(I,\beta)|\). Since \(M_{I,\beta}\le2^{|I|}\), every exact count uses at most \(|I|+1\) bits. Lemma~\ref{lem:recursive-data-bound} then makes the bottom-up calculation polynomial in the instance size.

For target distributions with zero-mass blocks, those blocks are omitted. The construction remains efficient when the nonzero split probabilities and resulting local update angles can be evaluated at the requested precision in polynomial time. These target-dependent quantities do not enter block isomorphism.

The masses set the update angles as follows. Define \(R_j(I,\beta)=\sum_{t=j}^{s}M_{I,\beta}(\alpha_t)\). Before update \(j\), let \(c_{I,\beta}^{(\alpha_j)}\) carry the residual mass \(R_j(I,\beta)\). The real-amplitude update in Eq.~\eqref{eq:sm-two-level-action} leaves mass \(R_j(I,\beta)\cos^2\theta_{I,\beta,j}\) on \(c_{I,\beta}^{(\alpha_j)}\) and moves mass \(R_j(I,\beta)\sin^2\theta_{I,\beta,j}\) to \(c_{I,\beta}^{(\alpha_{j+1})}\). Hence
\begin{equation}
\theta_{I,\beta,j}
=
\arcsin
\sqrt{\frac{R_{j+1}(I,\beta)}{R_j(I,\beta)}} ,
\qquad
j=1,\ldots,s-1,
\quad R_j(I,\beta)>0 .
\end{equation}
This relation gives \(R_j(I,\beta)\cos^2\theta_{I,\beta,j}=M_{I,\beta}(\alpha_j)\) and \(R_j(I,\beta)\sin^2\theta_{I,\beta,j}=R_{j+1}(I,\beta)\). By induction over \(j\), these updates prepare \(\sum_{\alpha\in\mathcal A(I,\beta)}\sqrt{p_{I,\beta}(\alpha)}\,|c_{I,\beta}^{(\alpha)}\rangle\) with nonnegative real amplitudes. The chain stops once the remaining mass vanishes.

Theorem~\ref{thm:layerwise-preparation} treats the angles as exact parameters in the ideal arbitrary-rotation model. For a numerical implementation, suppose each primitive rotation angle can be evaluated to absolute error \(2^{-b}\) in time polynomial in \(n\) and \(b\). Let \(N_{\rm rot}\) count all primitive arbitrary rotations after decomposing the controlled updates. Taking
\begin{equation}
b
=
\Theta\!\left(
\log N_{\rm rot}
+
\log\frac{1}{\varepsilon}
\right)
\label{eq:sm-angle-precision}
\end{equation}
with a sufficiently large constant bounds the accumulated operator-norm error by \(O(\varepsilon)\). The uniform masses are exact integers of polynomial bit length, so their angles satisfy this evaluation condition. Standard finite-gate-set synthesis adds its usual polylogarithmic overhead in \(N_{\rm rot}/\varepsilon\).

For fixed Hamming weight constraints, the split weight rule can be read off explicitly from binomial counts. For a block of length \(m\) and residual Hamming weight \(r\), we write \(c_{m,r}=0^{m-r}1^r\) for its right-packed representative. The root block for the fixed Hamming weight, or Dicke, example has \(m=4\) and \(r=2\), as shown in Fig.~\ref{fig:sm-dicke-mass-angle}. The three labels \(q=0,1,2\) have split weights \(M_0=1\), \(M_1=4\), and \(M_2=1\), so the residual masses are \(R_1=6\), \(R_2=5\), and \(R_3=1\). This angle relation gives \(\theta_1=\arcsin\sqrt{5/6}\) and \(\theta_2=\arcsin\sqrt{1/5}\). The same rule is applied recursively inside the nontrivial child blocks.

\begin{figure}[t]
\centering
\begin{tikzpicture}[
    x=1cm,
    y=1cm,
    >=latex,
    line width=0.55pt,
    every node/.style={font=\footnotesize},
    box/.style={draw, rounded corners=2pt, fill=white, inner sep=2.2pt, align=center},
    topbox/.style={box, minimum width=2.55cm},
    smallbox/.style={draw, rounded corners=2pt, fill=white, inner sep=1.7pt, align=center, font=\scriptsize},
    leafbox/.style={smallbox, minimum width=2.25cm},
    note/.style={font=\footnotesize},
    mass/.style={font=\footnotesize\itshape},
    ann/.style={font=\scriptsize}
]
\node[topbox] (root) at (0,0) {$(m,r)=(4,2)$\\$c_{4,2}=0011$};
\node[topbox] (q0) at (-6.4,-1.75) {$c_{4,2}^{(0)}=0011$\\$M_0=\binom20\binom22=1$\\{\scriptsize $R_1=M_0+M_1+M_2=6$}};
\node[topbox] (q1) at (0,-1.75) {$c_{4,2}^{(1)}=0101$\\$M_1=\binom21\binom21=4$\\{\scriptsize $R_2=M_1+M_2=5$}};
\node[topbox] (q2) at (6.4,-1.75) {$c_{4,2}^{(2)}=1100$\\$M_2=\binom22\binom20=1$\\{\scriptsize $R_3=M_2=1$}};
\draw (root.south) -- node[midway, fill=white, inner sep=1pt] {$q=0$} (q0.north);
\draw (root.south) -- node[midway, fill=white, inner sep=1pt] {$q=1$} (q1.north);
\draw (root.south) -- node[midway, fill=white, inner sep=1pt] {$q=2$} (q2.north);
\draw[->] ([xshift=0.25cm,yshift=-0.20cm]q0.east) -- node[ann, above=1pt, fill=white, inner sep=0.5pt, align=center] {$\theta_1=\arcsin\!\sqrt{R_2/R_1}$\\$=\arcsin\!\sqrt{5/6}$} ([xshift=-0.25cm,yshift=-0.20cm]q1.west);
\draw[->] ([xshift=0.25cm,yshift=-0.20cm]q1.east) -- node[ann, above=1pt, fill=white, inner sep=0.5pt, align=center] {$\theta_2=\arcsin\!\sqrt{R_3/R_2}$\\$=\arcsin\!\sqrt{1/5}$} ([xshift=-0.25cm,yshift=-0.20cm]q2.west);
\node[smallbox] (l21) at (-3.85,-3.78) {left child\\$(m,r)=(2,1)$\\$c^{(L)}=01$};
\node[smallbox] (r21) at (3.85,-3.78) {right child\\$(m,r)=(2,1)$\\$c^{(R)}=01$};
\draw (q1.south) -- (l21.north);
\draw (q1.south) -- (r21.north);
\node[leafbox] (l01) at (-6.75,-5.70) {$c_{2,1}^{(L,0)}=01$\\$M^{(L)}_0=\binom10\binom11=1$\\{\scriptsize $R^{(L)}_1=2$}};
\node[leafbox] (l10) at (-1.35,-5.70) {$c_{2,1}^{(L,1)}=10$\\$M^{(L)}_1=\binom11\binom10=1$\\{\scriptsize $R^{(L)}_2=1$}};
\node[leafbox] (r01) at (1.35,-5.70) {$c_{2,1}^{(R,0)}=01$\\$M^{(R)}_0=\binom10\binom11=1$\\{\scriptsize $R^{(R)}_1=2$}};
\node[leafbox] (r10) at (6.75,-5.70) {$c_{2,1}^{(R,1)}=10$\\$M^{(R)}_1=\binom11\binom10=1$\\{\scriptsize $R^{(R)}_2=1$}};
\draw (l21.south) -- node[midway, fill=white, inner sep=0.5pt] {$q_L=0$} (l01.north);
\draw (l21.south) -- node[midway, fill=white, inner sep=0.5pt] {$q_L=1$} (l10.north);
\draw (r21.south) -- node[midway, fill=white, inner sep=0.5pt] {$q_R=0$} (r01.north);
\draw (r21.south) -- node[midway, fill=white, inner sep=0.5pt] {$q_R=1$} (r10.north);
\draw[->] (-5.35,-6.02) -- node[ann, above=1pt, fill=white, inner sep=0.5pt, align=center] {$\theta_L=\arcsin\!\sqrt{R^{(L)}_2/R^{(L)}_1}$\\$=\arcsin\!\sqrt{1/2}$} (-2.75,-6.02);
\draw[->] (2.75,-6.02) -- node[ann, above=1pt, fill=white, inner sep=0.5pt, align=center] {$\theta_R=\arcsin\!\sqrt{R^{(R)}_2/R^{(R)}_1}$\\$=\arcsin\!\sqrt{1/2}$} (5.35,-6.02);
\end{tikzpicture}
\caption{Fixed Hamming weight, or Dicke, example for split weights and angle setting. The root block \((m,r)=(4,2)\) splits into two children of size 2, with interface label \(q\) equal to the weight of the left child. The uniform weights are the binomial counts \(M_q=\binom{2}{q}\binom{2}{2-q}\), namely \(1,4,1\). The horizontal arrows show the ordered updates among the three root representatives. Their residual weights satisfy \(R_1=M_0+M_1+M_2=6\), \(R_2=M_1+M_2=5\), and \(R_3=M_2=1\), giving \(\theta_1=\arcsin\sqrt{R_2/R_1}=\arcsin\sqrt{5/6}\) and \(\theta_2=\arcsin\sqrt{R_3/R_2}=\arcsin\sqrt{1/5}\). The middle case \(q=1\) further decomposes into left and right blocks of size 2 and weight 1. Their local weights and residual weights are denoted by \(M^{(L)},R^{(L)}\) and \(M^{(R)},R^{(R)}\), and both give the local angle \(\arcsin\sqrt{1/2}\).}
\label{fig:sm-dicke-mass-angle}
\end{figure}

Given a prescribed feasible string \(x^\star\), the constructible description identifies its unique label at each split. Assigning probability one to these labels prepares \(|x^\star\rangle\).

\begin{corollary}[Uniform-state construction]
\label{cor:uniform-state-construction}
Under the hypotheses of Theorem~\ref{thm:layerwise-preparation} for the uniform distribution with \(d=O(1)\), the recursive update schedule and exact split probabilities can be generated in polynomial classical time without enumerating \(\mathcal F\). The required local update angles can be evaluated to \(b\) bits in time polynomial in \(n\) and \(b\).
\end{corollary}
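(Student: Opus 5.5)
The plan is to build the whole classical preprocessing pipeline from the constructible local rule, and to bound its cost using Lemma~\ref{lem:recursive-data-bound}. First, I will generate the recursion top-down along the fixed balanced split tree. Start from the root record $([n],\beta_{\rm root})$. For each variable set $I$, apply the local rule to every residual block $(I,\beta)$ with $\beta\in\mathcal R(I)$. The rule flags empty and terminal blocks, returns representatives, and lists the ordered interface labels together with the canonical child records. The child records are then deduplicated to form $\mathcal R(I_L)$ and $\mathcal R(I_R)$. By Lemma~\ref{lem:recursive-data-bound}, the total number of block and interface records is $O(n^{d+1})$ (or $O(n\log n)$ when $d=0$), which is polynomial when $d=O(1)$. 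Each record has polynomial length, and the rule runs in time polynomial in its input and output lengths. Deduplication by canonical comparison is pairwise over polynomially many records. Hence this phase takes polynomial time, and at no point does it list elements of $\mathcal F$: only residual constraints and labels are stored.

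Second, I will compute the split masses bottom-up using Eqs.~\eqref{eq:sm-uniform-branch-mass}--\eqref{eq:sm-uniform-block-mass}. Terminal nonempty blocks get mass $1$ and empty blocks mass $0$. Processing variable sets in reverse depth order gives $M_{I,\beta}=|F(I,\beta)|$ by induction, using the disjoint cover of Lemma~\ref{lem:structural-recursion}. Each mass is an integer of at most $|I|+1\le n+1$ bits, so every product and sum is polynomial-time arithmetic. The number of such operations is bounded by the interface-record count. The exact probabilities $p_{I,\beta}(\alpha)=M_{I,\beta}(\alpha)/M_{I,\beta}$ are then stored as rationals of polynomial bit length, and zero-mass labels are dropped.

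Third, I will assemble the update schedule and the angles. For each nonterminal block, compute the tail sums $R_j$ and the angles from Eq.~\eqref{eq:angle-mass}. By the hypotheses of Theorem~\ref{thm:layerwise-preparation}, the ordered chains on each $I$ can be merged in polynomial time into at most $\Lambda(I)$ block-local updates. Each update is specified by the two representatives, which yield the pattern $\mu$ and the controls $\eta$ in $O(|I|)$ time. Summed over the $O(n)$ variable sets, the schedule is polynomial in size.

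For the angle precision, $\theta=\arcsin\sqrt{R_{j+1}/R_j}$ involves a ratio of $O(n)$-bit integers in $(0,1]$. I expect this to be the main obstacle, because $\arcsin$ has unbounded derivative near $1$, where small errors in the ratio become large errors in the angle. To get $b$-bit accuracy I would handle the ratio $t=R_{j+1}/R_j$ directly. If $t\le 1/2$, evaluate $\arcsin\sqrt t$ by standard polynomial-time series or Newton iteration to $b+O(1)$ bits. If $t>1/2$, use $\theta=\arccos\sqrt{1-t}$, equivalently $\pi/2-\arcsin\sqrt{(R_j-R_{j+1})/R_j}$, where the difference $R_j-R_{j+1}$ is computed exactly as an integer. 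This keeps both evaluations away from the singular point. The square roots and inverse trigonometric functions of rationals with $O(n)$-bit numerators and denominators can then be evaluated to $b$ bits in $\mathrm{poly}(n,b)$ time. Finally, I would note that the same pipeline with probability-one labels gives the basis-state case.
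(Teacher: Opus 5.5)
Your proof is correct and takes essentially the paper's route: Lemma~\ref{lem:recursive-data-bound} gives polynomially many records, a bottom-up pass computes the exact integer masses, the representative pairs fix the controlled updates, and the angles are evaluated from exact rational ratios, with no step enumerating $\mathcal F$. Your use of the complement $\arcsin\sqrt{(R_j-R_{j+1})/R_j}$ when the ratio exceeds $1/2$ only spells out a precision detail that the paper's one-line proof asserts without elaboration.
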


\begin{proof}[Proof of Corollary~\ref{cor:uniform-state-construction}]
Lemma~\ref{lem:recursive-data-bound} gives polynomially many records for fixed \(d\). A bottom-up pass evaluates the exact uniform masses in Eqs.~\eqref{eq:sm-uniform-branch-mass} and \eqref{eq:sm-uniform-block-mass} using integers of at most \(n+1\) bits. The representative pairs determine the controlled updates, and the resulting angles can be evaluated to \(b\) bits in time polynomial in \(n\) and \(b\). No step enumerates \(\mathcal F\).
\end{proof}

\subsection{Proof of the state preparation theorem}
\label{subsec:sm-preparation-proof}

\begin{proof}[Proof of Theorem~\ref{thm:layerwise-preparation}]
The proof separates the probability calculation from the circuit count. Let \(H\) be the number of recursion levels containing nonterminal blocks. For \(t=0,1,\ldots,H-1\), let \(\mathcal B_t\) contain the variable sets at level \(t\) that support at least one nonterminal block. Fix \(x\in\mathcal F\), and let \(\mathcal B_t(x)\subseteq\mathcal B_t\) contain the variable sets whose nonterminal blocks are reached when \(x\) is restricted recursively. For every \(I\in\mathcal B_t(x)\), this restriction selects a unique residual constraint \(\beta_I(x)\in\mathcal R(I)\) and interface label \(\alpha_I(x)\). A branch that reaches a terminal block stops and contributes no further conditional factor. The recursive probability assignment gives
\begin{equation}
\pi(x)
=
\prod_{t=0}^{H-1}
\prod_{I\in\mathcal B_t(x)}
p_{I,\beta_I(x)}\bigl(\alpha_I(x)\bigr),
\label{eq:sm-unrolled-distribution}
\end{equation}

On the branch containing \(x\), the common schedule on each \(I\in\mathcal B_t(x)\) realizes the ordered representative expansion for \(\beta_I(x)\) and contributes the factor \(p_{I,\beta_I(x)}(\alpha_I(x))\). Other occupied representatives are unchanged. Distinct variable sets \(I\) at the same level are disjoint, so their updates act on disjoint qubits. Equation~\eqref{eq:sm-unrolled-distribution} and induction over the levels give squared amplitude \(\pi(x)\) on each feasible basis state. The output is
\begin{equation}
|\psi_\pi\rangle
=
\sum_{x\in\mathcal F}\sqrt{\pi(x)}\,|x\rangle ,
\end{equation}
with the real nonnegative amplitudes claimed in Theorem~\ref{thm:layerwise-preparation}. Feasibility is preserved because every update connects representatives of admissible product components and acts as the identity on every other occupied representative.

Balanced recursion gives \(H=O(\log n)\), and \(m_t=\max_{I\in\mathcal B_t}|I|\) satisfies \(m_t\le\rho^t n\) for a fixed \(\rho<1\). All constants hidden below are independent of \(n\).

Consider one \(I\in\mathcal B_t\) with \(|I|=m\). Its schedule contains at most \(\Lambda(I)=O(m^d)\) updates. Each update has depth and gate count \(O(m)\) by the ancilla-free implementation in Sec.~\ref{subsec:sm-controlled-updates}, so sequential execution has depth and gate count \(O(m^{d+1})\).

For depth, the sets \(I\in\mathcal B_t\) are pairwise disjoint, so level \(t\) contributes only the largest depth associated with one \(I\). The depth obeys
\begin{equation}
D
\le
O\!\left(\sum_{t=0}^{H-1} m_t^{d+1}\right)
\le
O\!\left(\sum_{t=0}^{H-1}(\rho^t n)^{d+1}\right)
=
O\!\left(
n^{d+1}\sum_{t=0}^{H-1}\rho^{t(d+1)}
\right)
\le
O(n^{d+1}) .
\end{equation}
The finite geometric sum is bounded by \(1/(1-\rho^{d+1})\), a constant independent of \(n\), because \(d\ge0\) and \(\rho<1\).

For gates, we sum the costs associated with all \(I\). At level \(t\), the sets \(I\in\mathcal B_t\) are pairwise disjoint subsets of \([n]\), so \(\sum_{I\in\mathcal B_t}|I|\le n\). Since \(|I|\le m_t\), we have \(|I|^{d+1}\le m_t^d|I|\). This gives
\begin{equation}
G
\le
O\!\left(
\sum_{t=0}^{H-1}
\sum_{I\in\mathcal B_t}
|I|^{d+1}
\right)
\le
O\!\left(
\sum_{t=0}^{H-1}
n\,m_t^d
\right)
\le
O\!\left(
n^{d+1}\sum_{t=0}^{H-1}\rho^{td}
\right)
\le
O(n^{d+1}+n\log n) .
\end{equation}
For \(d>0\), the finite geometric sum is bounded by \(1/(1-\rho^d)\), giving \(O(n^{d+1})\). For \(d=0\), each level contributes \(O(n)\) gates and there are \(O(\log n)\) levels, giving the \(O(n\log n)\) term. Preparing the root representative basis state costs at most \(O(n)\) additional single-qubit gates. Combining the probability calculation with these bounds proves Theorem~\ref{thm:layerwise-preparation}.
\end{proof}

\section{Representative constrained families}
\label{sec:sm-families}

We give the constructions behind the constrained-family rows in the main text. For each family, we specify its recursive records and representatives, derive the uniform split masses and verify the update schedules and resource bounds.

\subsection{Fixed Hamming weight constraints}
\label{subsec:sm-hamming}

The hard constraint is \(F_{n,k}=\{x\in\{0,1\}^n\mid \sum_{i=1}^n x_i=k\}\). Within a block of length \(m\), \(r\) is the residual Hamming weight. We use the right-packed representative \(c_{m,r}=0^{m-r}1^r\).

The feasible sets at \(k=0\) and \(k=n\) each contain a single bit string and require no recursive preparation. If \(k>n/2\), the weight-\(k\) subspace is obtained from the weight-\((n-k)\) subspace by applying \(X\) to all qubits. It suffices to consider \(1\le k\le n/2\).

Split the block into two nearly balanced child blocks of lengths \(m_L\) and \(m_R\). The interface label is the weight of the left child, \(q\in\{\max(0,r-m_R),\ldots,\min(r,m_L)\}\).
The child residual weights are \(q\) and \(r-q\), and the corresponding representative is
\begin{equation}
c_{m,r}^{(q)}
=
c_{m_L,q}\sqcup c_{m_R,r-q}
=
0^{m_L-q}1^q\,0^{m_R-r+q}1^{r-q}.
\end{equation}
The local feasible set decomposes as \(F_{m,r}=\bigsqcup_q F_{m_L,q}\times F_{m_R,r-q}\). The cases \(r=0\) and \(r=m\) each contain one bit string and contribute no updates. For a variable set \(I\) of size \(m\), only residual weights satisfying \(1\le r\le\min(k,m-1)\) require expansion, and each fixed \(r\) has \(O(r)\) nontrivial label updates. Hence
\begin{equation}
\Lambda(I)
=
\sum_{r\in\mathcal R(I)}
\bigl(|\mathcal A(I,r)|-1\bigr)
\le
\sum_{r=1}^{\min(k,m-1)}
\bigl(|\mathcal A(I,r)|-1\bigr)
=
O(\min(k,m)^2)
\le O(m^2).
\end{equation}
For \(k=n/2\), either child of the root has size \(m=n/2\) and carries every residual weight \(r=0,\ldots,m\). The two endpoint weights contribute zero, leaving the sum \(\Theta(m^2)\). For the family in which \(k\) may scale with \(n\), the recursive expansion dimension is therefore \(d=2\).

For the uniform feasible superposition, the split weight routed through \(q\) is the number of terminal feasible strings under that label, \(M_q=\binom{m_L}{q}\binom{m_R}{r-q}\). The residual weight rule in Sec.~\ref{subsec:sm-split-weight-angles} fixes the angles. The residual records, ordered labels, child records, representatives and masses are computed directly from \((m,r)\), while a feasible string identifies \(q\) by its left child weight. Hence the recursion is constructible. Adjacent representatives \(c_{m,r}^{(q)}\) and \(c_{m,r}^{(q+1)}\) differ by moving one excitation from the right child to the left child. Let \(a=m_L-q\) be the changing site in the left child, and let \(b=m_L+m_R-r+q+1\) be the changing site in the right child. On these support bits, \(c_{m,r}^{(q)}\) has pattern \(01\), while \(c_{m,r}^{(q+1)}\) has pattern \(10\).

\begin{lemma}[Block-local controls for Dicke updates]
\label{lem:sm-dicke-block-local-controls}
Within the subspace spanned by the packed representatives for all residual weights in \(\mathcal R(I)\), the pair of neighboring representatives \(c_{m,r}^{(q)}\) and \(c_{m,r}^{(q+1)}\) can be isolated by a constant-size controlled update. Its support is \(\{a,b\}\), and at most four controls are needed, all inside \(I\).
\end{lemma}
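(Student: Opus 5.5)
The plan is to write down the packed representatives explicitly and find a few bits that pin down both child weights. Every basis state in the subspace in question has the form $c_{m_L,q'}\sqcup c_{m_R,r'-q'}=0^{m_L-q'}1^{q'}\,0^{m_R-r'+q'}1^{r'-q'}$ for some $r'\in\mathcal R(I)$ and admissible $q'$. Such a state is uniquely determined by its pair of child weights $(q',r'-q')$: the parent weight is $r'=q'+(r'-q')$, and the packed form fixes the string. So it suffices to find a constant set of bits, inside $I$, whose values single out the two pairs $(q,r-q)$ and $(q+1,r-q-1)$ among all packed strings.

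\emph{Choice of update and controls.} Take the support $K=\{a,b\}$, with source pattern $01$ and target pattern $10$, as in Sec.~\ref{subsec:sm-controlled-updates}; this has $\mu_a=+1$ and $\mu_b=-1$. For controls I would use the neighbours of the changing sites, dropping any that fall outside the relevant child block:
\[
x_{a-1}=0 \ \text{(if } q+1<m_L\text{)},\qquad x_{a+1}=1 \ \text{(if } q\ge 1\text{)},
\]
\[
x_{b-1}=0 \ \text{(if } r-q<m_R\text{)},\qquad x_{b+1}=1 \ \text{(if } r-q-1\ge 1\text{)}.
\]
This gives at most four controls, all in $I_L\cup I_R=I$. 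Source and target agree on each of these bits: the left child of the source has its first one at $a+1$, and the target has ones from $a$ onward; the analogous statement holds on the right at $b$ and $b+1$. Hence both the source and the target satisfy the control pattern.

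\emph{Isolation argument.} Recall that the generator $\Pi_\eta H_{\rm up}(\mu)$ annihilates every basis state whose support pattern on $\{a,b\}$ is $00$ or $11$, or whose controls fail. It remains to show that, among packed strings, exactly one has support pattern $01$ with the controls satisfied, namely the source, and exactly one has pattern $10$ with the controls satisfied, namely the target.

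For a packed left block, the condition $x_a=0$ together with $x_{a+1}=1$ says that the boundary between the zeros and the ones sits exactly between $a$ and $a+1$. Therefore the left weight is exactly $q$. In the edge case $q=0$, where $a=m_L$, the condition $x_{m_L}=0$ alone already forces weight $0$. Symmetrically, on the right, $x_b=1$ and $x_{b-1}=0$ force the right weight to be exactly $r-q$, with the edge case $b=m_L+1$ handled the same way. So the pattern-$01$ state is unique. For pattern $10$, the conditions $x_a=1$, $x_{a-1}=0$ give left weight $q+1$, and $x_b=0$, $x_{b+1}=1$ give right weight $r-q-1$, so the pattern-$10$ state is also unique. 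Different residual weights $r'$ cannot collide, because the child weights determine $r'$. The controlled update therefore acts exactly as Eq.~\eqref{eq:sm-two-level-action} on the intended pair and as the identity on every other packed representative.

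\emph{Main obstacle.} The only delicate step is the boundary cases. These occur when $a$ or $b$ sits at the edge of its child block, or when $q=0$, $q+1=m_L$, $r-q=m_R$, or $r-q-1=0$, so that a neighbouring control would fall outside its child block or become vacuous. I would check case by case that each dropped control is already implied by the packed structure, so that uniqueness survives with fewer than four controls.
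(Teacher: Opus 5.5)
Your proposal is correct and follows the paper's proof. It uses the same support $\{a,b\}$ and the same neighbour controls $x_{a-1}=0$, $x_{a+1}=1$, $x_{b-1}=0$, $x_{b+1}=1$, dropped when they fall outside the child block, and isolates the pair because these bits fix both packed boundaries and hence the child weights $(q',r'-q')$. Your explicit boundary-position check is simply a more detailed version of the paper's one-line argument, and the pattern-$10$ edge cases you deferred ($a=1$ and $b=m$) go through exactly like the two edge cases you did treat.
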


\begin{proof}
The support bits \(a\) and \(b\) are updated, not used as controls. To isolate the intended pair among the currently populated packed representatives, use the neighboring conditions around the two right-packed interfaces. These are \(x_{a-1}=0\), \(x_{a+1}=1\), \(x_{b-1}=0\), and \(x_{b+1}=1\), omitting any condition whose site lies outside the corresponding child block.

These controls also handle the reverse action required by unitarity. The support pattern on \((a,b)\) is not fixed by the controls. It is the two-level subspace containing both \(01\) and \(10\). The neighboring conditions identify the packed boundaries of the left and right children, which fix \(q\), \(p=r-q\) and \(r=q+p\). The only compatible occupied representatives are \(c_{m,r}^{(q)}\) and \(c_{m,r}^{(q+1)}\). The update leaves every other occupied representative for the residual weights in \(\mathcal R(I)\) unchanged. All support and control bits lie inside \(I\).
\end{proof}

\begin{corollary}[Fixed Hamming weight resources]
For \(1\le k\le n/2\), the fixed Hamming weight construction gives an ancilla-free preparation with depth \(O(k\log(n/k))\) and gate count \(O(kn)\).
\end{corollary}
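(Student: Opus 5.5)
The plan is to avoid the generic bound of Theorem~\ref{thm:layerwise-preparation}, which would only give depth \(O(n^3)\) for \(d=2\). Instead I will exploit two features of the Dicke recursion: each update is constant size by Lemma~\ref{lem:sm-dicke-block-local-controls}, and only \(O(\min(k,m))\) distinct label steps occur per variable set. I will also order the updates so that work on a variable set of size \(m\) takes depth \(O(\min(k,m))\) rather than \(O(\Lambda(I)\,m)\).

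\emph{Step 1: merge schedules across residual weights.} For a variable set \(I\) of size \(m\), every occupied packed representative has the form \(c_{m_L,q}\sqcup c_{m_R,r-q}\). Index the updates by the step \(q\to q+1\) together with the right-child weight \(p=r-q\). The step \((q,p)\to(q+1,p-1)\) acts on the support pair \((a,b)\) with \(a=m_L-q\) and \(b\) determined by \(p\). I will sweep through \(j=0,1,\ldots,\min(k,m_L)-1\). In sweep \(j\), I apply all updates moving the left weight from \(j\) to \(j+1\), one for each relevant \(p\). The required ordering is: for each fixed \(r\), the chain \(q=0\to1\to\cdots\) must run in order, with the angle from the residual-mass rule for that \(r\).

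Updates with different \(p\) in the same sweep act on distinct right-child sites \(b\) but share the left site \(a\). So I will not run them in parallel. Instead I will note that all updates in sweep \(j\) with the same target \(a\) form a constant-control fan. This amounts to a multiplexed rotation on \(a\) conditioned on the right child's packed boundary. That costs \(O(\min(k,m))\) gates, but the depth needs care. This is the step I expect to be the main obstacle.

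To handle it, I will follow the ordering of B\"artschi--Eidenbenz: process the \((q,p)\) grid along anti-diagonals. On each anti-diagonal the affected sites \(a\) and \(b\) are pairwise distinct, and their \(\le4\) controls are neighbouring sites that are not being updated at that step. Each anti-diagonal layer then has \(O(1)\) depth. There are \(O(\min(k,m))\) anti-diagonals, and the per-chain order along \(q\) is preserved. The remaining checks are that consecutive anti-diagonals respect each chain and that the same-angle consistency condition of the theorem holds. The latter is automatic, because each \((r,q)\) pair has its own controls and hence its own angle.

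\emph{Step 2: sum over levels.} At recursion level \(t\), blocks have size \(m_t\approx n/2^t\). Disjoint blocks run in parallel, so level \(t\) costs depth \(O(\min(k,m_t))\). For levels with \(m_t\ge k\) there are \(O(\log(n/k))\) such levels, each of depth \(O(k)\). For levels with \(m_t<k\), the depths \(O(m_t)\) form a geometric sum bounded by \(O(k)\). The total depth is therefore \(O(k\log(n/k)+k)=O(k\log(n/k))\) for \(k\le n/2\).

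For gates, level \(t\) has \(n/m_t\) blocks, each with \(\Lambda=O(\min(k,m_t)^2)\) constant-size updates. This gives \(O\bigl((n/m_t)\min(k,m_t)^2\bigr)\) gates per level. For \(m_t\ge k\) the count is \(nk^2/m_t\), which is geometric and dominated by \(m_t\approx k\), giving \(O(nk)\). For \(m_t<k\) the count is \(n m_t\), also geometric with total \(O(nk)\). Adding the \(O(k)\) \(X\) gates for the root representative gives gate count \(O(kn)\). Correctness of the amplitudes follows as in the proof of Theorem~\ref{thm:layerwise-preparation}, using the binomial masses \(M_q\).
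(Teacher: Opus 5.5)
Your overall route is the paper's: constant-size updates from Lemma~\ref{lem:sm-dicke-block-local-controls}, a reordering of the \(O(\min(k,m)^2)\) updates on each variable set into \(O(\min(k,m))\) constant-depth layers that respect every residual chain, then geometric sums over levels. Your Step~2 carries those sums out correctly. The gap is in the scheduling step, which is exactly where the depth bound is decided.

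With \(a=m_L-q\) and \(b=m_L+m_R-p+1\), a residual chain \(q+p=r\) advances by \((q,p)\mapsto(q+1,p-1)\). So the anti-diagonals \(q+p=\mathrm{const}\) of the \((q,p)\) grid \emph{are} the chains. Executing one anti-diagonal as a single layer runs a whole chain at once, which destroys the per-chain order you need.

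If you meant the diagonals \(q-p=\mathrm{const}\), the chain order is preserved, but your claim that no control is updated in the same step is false. The neighbouring entries \((q,p)\) and \((q+1,p+1)\) have supports \(\{a,b\}\) and \(\{a-1,b-1\}\). The controls \(x_{a-1}\) and \(x_{b-1}\) of the first update are therefore support bits of the second. Entries two apart also share the control qubit \(x_{a-1}\), with opposite required values. A diagonal is thus not one constant-depth layer.

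The missing ingredient is a constant colouring on top of a time function that increases along each chain. The paper schedules \((q,p)\) at time \(s=Aq+p\) with a constant \(A>1\), so each chain step raises \(s\) by \(A-1>0\). It then splits each time slot into the constant number of colours \(q\bmod L\), with \(L\) larger than twice the control-neighbourhood width. Two updates sharing both \(s\) and colour differ by at least \(L\) in \(q\) and by at least \(AL\) in \(p\). Their supports and controls are therefore disjoint in both children, so each variable set costs depth \(O(\min(k,m))\).

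The same repair works for your diagonals \(s=q-p\) with colours \(q\bmod L\). Every update on one diagonal belongs to a different chain. On the packed-representative subspace, updates from different chains act on disjoint two-level subspaces and commute, so the order within a time slot is irrelevant. With this fix, your Step~2 yields the stated bounds.
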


\begin{proof}
A more local ordering improves on the generic \(d=2\) bound. For a fixed residual weight \(r\) that requires expansion, there are \(O(r)\) ordered updates between neighboring labels. Lemma~\ref{lem:sm-dicke-block-local-controls} makes each update constant size, with two support bits and at most four controls. For a variable set \(I\) of size \(m\), only residual weights \(1\le r\le\min(k,m-1)\) require expansion. The total number of neighboring-label updates over all residual weights in \(\mathcal R(I)\) is at most
\[
\sum_{r=1}^{\min(k,m-1)} O(r)=O(\min(k,m)^2).
\]

The updates inside one variable set \(I\) admit a constant-color schedule. Write \(p=r-q\) for the weight in the right child before a transfer. An update is indexed by \((q,p)\), changes \(q\to q+1\) and \(p\to p-1\), and uses only the two support bits and the constant-size controls of Lemma~\ref{lem:sm-dicke-block-local-controls}. All support and control bits lie in constant-width neighborhoods of the left-packed boundary \(q\) and the right-packed boundary \(p\). Let \(w\) bound these neighborhood widths. Choose fixed constants \(L>2w\) and \(A>2w+1\). Schedule \((q,p)\) at time \(s=Aq+p\), and within each \(s\), execute the constant number of colors \(q\bmod L\) sequentially. Along a residual chain \(q+p=r\), replacing \((q,p)\) by \((q+1,p-1)\) increases \(s\) by \(A-1>0\), so the order is preserved. If two updates have the same \(s\) and color, their \(q\)-values differ by at least \(L\). Since \(p-p'=-A(q-q')\), their \(p\)-values differ by at least \(AL\). Their neighborhoods, including all control bits, are disjoint. The number of colors is constant and \(s\) takes only \(O(\min(k,m))\) values. The schedule has depth \(O(\min(k,m))\) and gate count \(O(\min(k,m)^2)\).

For the full recursion, let \(m_t\) be the largest block size on layer \(t\). The layer depth is \(O(\min(k,m_t))\), because blocks on that layer are disjoint. In the Hamming weight construction we use balanced bisection, so \(m_t=\Theta(n2^{-t})\) until the terminal layers, up to constant-size rounding effects. Let \(T\) be the first layer with \(m_T<k\), and otherwise take \(T\) to be the number of recursion layers. The geometric decrease gives \(T=O(\log(n/k))\). After the cutoff, the remaining block sizes sum to \(O(k)\). Splitting the depth sum at \(T\) gives
\begin{equation}
D
\le
O\!\left(\sum_t \min(k,m_t)\right)
=
O\!\left(
\sum_{t<T} k+\sum_{t\ge T} m_t
\right)
=
O\!\left(
k\log(n/k)+k
\right)
=
O(k\log(n/k)).
\end{equation}
For gates, layer \(t\) has \(O(n/m_t)\) blocks, and each block contributes \(O(\min(k,m_t)^2)\) updates between neighboring labels. The gate count obeys
\(G
\le
O\!\left(\sum_t \frac{n}{m_t}\min(k,m_t)^2\right)
\).
For \(t<T\), the summand is \(O(nk^2/m_t)\). Since \(m_t=\Theta(n2^{-t})\), these terms grow geometrically and are dominated by the last layer before the cutoff, where \(m_t=\Omega(k)\). Their total contribution is \(O(nk)\). For \(t\ge T\), the summand is \(O(nm_t)\), and the remaining geometric series satisfies \(\sum_{t\ge T}m_t=O(k)\). The two ranges give
\begin{equation}
G
\le
O\!\left(
\sum_{t<T}\frac{nk^2}{m_t}
+
\sum_{t\ge T}n m_t
\right)
=
O(nk).
\end{equation}
Together with the depth bound, this gives the stated fixed Hamming weight bounds.
\end{proof}

\begin{figure}[t]
\centering
\begin{tikzpicture}[
wire/.style={line width=0.45pt},
xgate/.style={draw, fill=white, minimum size=0.38cm, inner sep=0pt, font=\footnotesize},
rootgate/.style={draw, rounded corners=2pt, fill=white, minimum width=2.80cm, minimum height=2.70cm, align=center, font=\footnotesize},
childgate/.style={draw, rounded corners=2pt, fill=white, minimum width=3.10cm, minimum height=1.02cm, align=center, font=\footnotesize},
lab/.style={font=\footnotesize, anchor=east},
note/.style={font=\footnotesize, align=center}
]
\def\xstart{-0.45}
\def\xend{12.0}
\foreach \y/\name in {0/\(x_1\),-0.65/\(x_2\),-1.30/\(x_3\),-1.95/\(x_4\)} {
  \node[lab] at (\xstart-0.25,\y) {\name};
  \draw[wire] (\xstart,\y) -- (\xend,\y);
}
\draw[dashed, rounded corners=2pt] (-0.20,0.38) rectangle (0.70,-2.33);
\node[note] at (0.25,0.64) {seed \(=|0011\rangle\)};
\node[xgate] at (0.25,-1.30) {\(X\)};
\node[xgate] at (0.25,-1.95) {\(X\)};

\node[rootgate] (u1) at (2.50,-0.975)
{\(U_1\)\\[0.25em]
\(\tau_{(\mu_2,\mu_3)=(+1,-1)}(\theta_1)\)\\[0.25em]
\(0011\to0101\)};
\node[note] at (2.50,-2.72) {\(\theta_1=\arcsin\sqrt{5/6}\)};

\node[rootgate] (u2) at (5.70,-0.975)
{\(U_2\)\\[0.25em]
\(\tau_{(\mu_1,\mu_4)=(+1,-1)}^{(x_2,x_3)=(1,0)}(\theta_2)\)\\[0.25em]
\(0101\to1100\)};
\node[note] at (5.70,-2.72) {\(\theta_2=\arcsin\sqrt{1/5}\)};

\draw[dashed, rounded corners=2pt] (7.50,0.38) rectangle (11.80,-2.33);
\node[note] at (9.65,0.64) {parallel child layer};

\node[childgate] (ul) at (9.65,-0.325)
{\(U_L\)\\[-0.1em]
\(\tau_{(\mu_1,\mu_2)=(+1,-1)}(\theta_L)\), \(01\to10\)};

\node[childgate] (ur) at (9.65,-1.625)
{\(U_R\)\\[-0.1em]
\(\tau_{(\mu_3,\mu_4)=(+1,-1)}(\theta_R)\), \(01\to10\)};
\node[note] at (9.65,-2.72) {\(\theta_L=\theta_R=\arcsin\sqrt{1/2}\)};

\node[note, anchor=west] at (11.97,-0.975)
{output\\[-0.1em]
\(\frac{1}{\sqrt6}\bigl(|0011\rangle+|0101\rangle+|0110\rangle\)\\[-0.1em]
\(\hspace{1.1em}+|1001\rangle+|1010\rangle+|1100\rangle\bigr)\)};
\end{tikzpicture}
\caption{Controlled local-update circuit for the \((n,k)=(4,2)\) fixed Hamming weight example. The seed block prepares \(|0011\rangle\). The two root updates create the representatives \(0011\), \(0101\), and \(1100\), and the dashed child block refines the \(q=1\) representative in parallel. The output is the uniform superposition of the six Hamming-weight-two basis states.}
\label{fig:sm-dicke-controlled-circuit}
\end{figure}

The \((n,k)=(4,2)\) example in Fig.~\ref{fig:sm-dicke-controlled-circuit} makes the role of controls explicit. The first two updates use the root angles from Fig.~\ref{fig:sm-dicke-mass-angle}. They prepare a superposition over the three root representative basis states \(|0011\rangle\), \(|0101\rangle\) and \(|1100\rangle\). The \(q=1\) representative is refined by two child updates with the local angle \(\arcsin\sqrt{1/2}\). These child updates act on disjoint supports and are scheduled in the same layer.

This example illustrates why the local update circuit is controlled rather than just a bare support update on the changing bits. A support pattern can appear in several already populated representatives after earlier updates. The control condition selects the intended pair of representative basis states and leaves the other representatives unchanged. When the current representatives are already separated by local patterns, as in the second refinement layer above, the same update can be used without extra controls.

\subsection{Rydberg blockade constraints}
\label{subsec:sm-rydberg}

The hard constraint is \(x_i+x_{i+1}\le1\) for \(i=1,\ldots,n-1\). To make interval recursion closed, attach a boundary condition \(\beta=(\lambda,\rho)\) to an interval \(I=[a,b]\), where \(\lambda,\rho\in\{\star,0,1\}\) either leave an endpoint free or fix it to the indicated value. Let \(F(I;\lambda,\rho)\) be the bit strings on \(I\) satisfying the blockade constraints inside \(I\) and the endpoint requirements. Empty sets are omitted. The representative \(c_{I;\lambda,\rho}\) is the all-zero string on \(I\), except that forced endpoint values equal to \(1\) are retained.

A one-site root is prepared directly by a single-qubit rotation. Size-one intervals below a split are terminal because the inherited interface value fixes the bit. For \(|I|=2\), the endpoint-compatible subset of \(\{00,10,01\}\) gives the split weights for the final update into two size-one intervals.

Split a nonterminal interval \(I=[a,b]\) at a site \(s\), with \(I_L=[a,s]\) and \(I_R=[s+1,b]\). The interface label is \(\alpha=(u,v)=(x_s,x_{s+1})\), namely the occupancies of the two sites adjacent to the split. The admissible interface labels are
\begin{equation}
\mathcal A(I;\lambda,\rho)
=
\bigl\{(u,v)\in\{0,1\}^2:\ u+v\le1,\ 
F(I_L;\lambda,u)\ne\varnothing,\ 
F(I_R;v,\rho)\ne\varnothing
\bigr\}.
\end{equation}
They give the disjoint decomposition
\begin{equation}
F(I;\lambda,\rho)
=
\bigsqcup_{\alpha=(u,v)\in\mathcal A(I;\lambda,\rho)}
F(I_L;\lambda,u)\times F(I_R;v,\rho).
\end{equation}
Both the nonterminal residual boundary conditions and the interface labels take only a constant number of values. Thus \(\Lambda(I)=O(1)\) and the recursive expansion dimension is \(d=0\).

For \(\alpha=(u,v)\), the representative is \(c_I^{(\alpha)}=c_{I_L;\lambda,u}\sqcup c_{I_R;v,\rho}\). For a fixed parent block, the representatives \(c_I^{(\alpha)}\) are identical on every site except possibly the two boundary sites at the split. For the uniform state, let \(N(I;\lambda,\rho)=|F(I;\lambda,\rho)|\) be the number of feasible strings in a block, and let \(M_{u,v}\) be the split weight routed through \((u,v)\). Then
\begin{equation}
N(I;\lambda,\rho)
=
\sum_{(u,v)\in\mathcal A(I;\lambda,\rho)}
N(I_L;\lambda,u)N(I_R;v,\rho),
\qquad
M_{u,v}=N(I_L;\lambda,u)N(I_R;v,\rho).
\end{equation}
These recurrences give the split masses and angles bottom up. The boundary records, labels, representatives and counts are obtained by a constant-state recursion, while a feasible string identifies its label as \((x_s,x_{s+1})\). Hence the description is constructible.

\begin{lemma}[Block-local controls for Rydberg updates]
\label{lem:sm-rydberg-block-local-controls}
For each interval block \(I=[a,b]\), every update between neighboring interface labels can be isolated using \(O(1)\) support and control bits, all inside \(I\).
\end{lemma}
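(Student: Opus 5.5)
We mirror the proof of Lemma~\ref{lem:sm-dicke-block-local-controls}: first describe exactly which representatives can be populated on \(I\), then show that any two of them differ from the rest on a constant-size window around the split.

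\textbf{Step 1: list the populated representatives.} Fix the split site \(s\), so \(I_L=[a,s]\) and \(I_R=[s+1,b]\). The residual boundary conditions \(\beta=(\lambda,\rho)\in\{\star,0,1\}^2\) take at most nine values. Consider all populated product representatives \(c_I^{(\alpha)}\) for \(\beta\in\mathcal R(I)\) and \(\alpha=(u,v)\in\mathcal A(I;\lambda,\rho)\). By the definition of the representatives, each such string is zero on every site of \(I\) except possibly four sites:
\begin{itemize}
\item the endpoints \(a\) and \(b\), which carry a forced value \(1\) from \(\lambda\) or \(\rho\);
\item the split sites \(s\) and \(s+1\), which carry \(u\) and \(v\).
\end{itemize}
Hence every populated representative is determined by its restriction to \(W=\{a,s,s+1,b\}\cap I\), a set of at most four sites.

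\textbf{Step 2: build the controlled update.} An update between neighboring labels \(\alpha,\alpha'\) of one residual block \((I,\beta)\) changes only the split sites. Since \(u+v\le1\), the labels are \((0,0)\), \((1,0)\) and \((0,1)\), so the support \(K\subseteq\{s,s+1\}\). For the controls, take \(C=W\setminus K\) with \(\eta\) equal to the common values of the two representatives on \(C\). Step 1 then implies that the only populated representatives matching the control pattern are \(c^{(\alpha)}\) and \(c^{(\alpha')}\); every other one differs somewhere on \(W\). This gives at most two support bits and two control bits, all inside \(I\).

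\textbf{Step 3: the obstacle.} The difficulty is that distinct residual blocks can share the same restriction to \(W\). For example, \(\lambda=\star\) and \(\lambda=0\) both give \(x_a=0\). In that case the same two-level action is applied to representatives of both residual blocks. This is allowed by the schedule condition stated before Theorem~\ref{thm:layerwise-preparation}, provided each such occurrence receives the same angle, or else the update is scheduled so that it is the identity in all of them.

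I would handle this in two cases:
\begin{itemize}
\item If the angles for the colliding blocks differ, add controls on the sites adjacent to the endpoints (\(a+1\) and \(b-1\)). The point is that \(\lambda=0\) and \(\lambda=\star\) never differ in their representative, so they can be merged into one record. The merged record is legitimate only if they have equal split masses.
\item If they do not have equal masses, note that only a constant number of record types are involved. The fallback is therefore to order the updates of different \(\beta\) sequentially. Each update is then a constant-size block-local gate, and this preserves \(\Lambda(I)=O(1)\).
\end{itemize}

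Finally, check the small blocks \(|I|\le2\) directly, where \(W\) may overlap the support.
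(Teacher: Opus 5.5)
Your Steps 1 and 2 are sound and follow the paper's construction: the support lies on the split sites, and the controls are the other split site together with the endpoint bits $x_a=\lambda$ and $x_b=\rho$. The paper also adds zero checks at $s-1$ and $s+2$. Your choice does not need them, because every populated representative vanishes off $\{a,s,s+1,b\}$ and the endpoint controls cover the cases $s-1=a$ or $s+2=b$.

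The gap is Step 3. The collision you describe cannot occur, but you never show this, and the patches you propose do not work. The missing observation is that $\star$ appears only at the global ends of the chain. Splitting $[a,b]$ with condition $(\lambda,\rho)$ gives $[a,s]$ the condition $(\lambda,u)$ and $[s+1,b]$ the condition $(v,\rho)$, with $u,v\in\{0,1\}$. By induction over the fixed split tree, the left condition of an interval is therefore $\star$ exactly when $a=1$, and the right condition is $\star$ exactly when $b=n$. So for a fixed $I$, each endpoint carries $\star$ in every $\beta\in\mathcal R(I)$ or a fixed value in every $\beta\in\mathcal R(I)$. In the fixed case, $x_a=\lambda$ (resp.\ $x_b=\rho$) on every populated representative. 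Hence $\beta\mapsto(x_a,x_b)$ is injective on $\mathcal R(I)$, and the controls of your Step 2 already separate the chosen pair from every representative of another residual block. When an endpoint coincides with a split site, for instance $s=a$ with $\lambda$ fixed, admissibility forces $u=\lambda$. No update of that block then touches $x_a$, and the endpoint control remains available. This is the paper's remark that at a global boundary the condition is always $\star$, so no competing fixed condition occurs.

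The Step 3 remedies would not rescue the argument even if such a collision existed. Controls at $a+1$ and $b-1$ read zero in both colliding representatives, so they separate nothing. Merging the two records requires equal split masses, which you do not establish. The sequential fallback fails outright. If two residual blocks had identical representative strings on $I$, every gate whose support and controls lie in $I$ would act identically on both occurrences, whatever the scheduling order. Distinct angles could then not be realized block-locally. Replace Step 3 with the boundary-condition observation above. The cases $|I|\le 2$ then follow by dropping controls that coincide with support sites or lie outside $I$.
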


\begin{proof}
There are only three possible interface labels, \(00\), \(10\), and \(01\). The following table lists each pair of neighboring labels and the controls that isolate it. Listed checks whose sites lie outside \(I\) are omitted.
\begin{center}
\small
\setlength{\tabcolsep}{10pt}
\renewcommand{\arraystretch}{1.18}
\begin{tabular}{@{}lll@{}}
\toprule
Interface label pair & Update support & Required zero controls \\
\midrule
\(\{00,10\}\) & \(\{s\}\) &
\(x_{s+1}=0\), and \(x_{s-1}=0\) if \(s-1\in I\) \\
\addlinespace[2pt]
\(\{00,01\}\) & \(\{s+1\}\) &
\(x_s=0\), and \(x_{s+2}=0\) if \(s+2\in I\) \\
\addlinespace[2pt]
\(\{10,01\}\) & \(\{s,s+1\}\) &
\(x_{s-1}=0\) if \(s-1\in I\), and \(x_{s+2}=0\) if \(s+2\in I\) \\
\bottomrule
\end{tabular}
\end{center}
The listed checks guarantee that both endpoints of the two-level subspace obey the blockade constraint at the split and at its two neighboring edges. When the boundary conditions in \(\mathcal R(I)\) are handled together, the update is also controlled on \(x_a=\lambda\) for \(\lambda\in\{0,1\}\) and on \(x_b=\rho\) for \(\rho\in\{0,1\}\), omitting controls already used as support or boundary checks. At a global boundary, the corresponding endpoint condition is always \(\star\), so no competing fixed condition occurs there. These controls distinguish all boundary conditions in \(\mathcal R(I)\). Each update uses at most two support qubits, two nearest-neighbor checks, and two endpoint selectors, all inside \(I\).
\end{proof}

\begin{corollary}[Rydberg blockade resources]
The one-dimensional Rydberg blockade construction gives an ancilla-free preparation with depth \(O(\log n)\) and gate count \(O(n)\).
\end{corollary}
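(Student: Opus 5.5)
The plan is to specialize the general argument of Theorem~\ref{thm:layerwise-preparation} to the Rydberg recursion, using the constant-size updates of Lemma~\ref{lem:sm-rydberg-block-local-controls} rather than the generic $O(|I|)$-size controlled update. First, I fix a balanced bisection of $[n]$ into intervals, so Lemma~\ref{lem:structural-recursion} gives height $H=O(\log n)$. Next, I record that for every interval $I$ the residual set $\mathcal R(I)\subseteq\{\star,0,1\}^2$ has constant size and each block has at most three interface labels $00,10,01$. Hence $\Lambda(I)=O(1)$ and $d=0$. The update schedule on $I$ is then a merge of at most nine constant-length ordered chains, and I fix it as a constant-length sequence of updates indexed by (boundary condition, neighboring label pair).

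For correctness, I invoke the probability calculation in the proof of Theorem~\ref{thm:layerwise-preparation} verbatim. The only extra fact needed is the schedule-consistency condition: each update must act on exactly the intended representative pair and as the identity on all other occupied representatives. Lemma~\ref{lem:sm-rydberg-block-local-controls} supplies this, since the endpoint selectors $x_a=\lambda$, $x_b=\rho$ separate the different residual blocks on $I$, and the nearest-neighbour zero checks separate the label pairs. Representatives are all-zero apart from forced endpoint ones, so the labelled representatives differ only at sites $s,s+1$. One point needs care: occupied basis states in the block at this stage are representatives of $I$ only, since children are refined later. So the controls only need to discriminate among representatives, and the table in the lemma does this.

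For resources, each update has at most two support qubits and $O(1)$ controls, so by the ancilla-free synthesis of Sec.~\ref{subsec:sm-controlled-updates} it has $O(1)$ depth and gates. The schedule on $I$ therefore has $O(1)$ depth and $O(1)$ gates. Updates on distinct intervals at the same level act on disjoint qubits, because all supports and controls lie inside $I$, and so they run in parallel. Each level thus contributes $O(1)$ depth, giving $D=O(H)=O(\log n)$, plus $O(1)$ for the root representative (the all-zero string, or a one-qubit rotation when $n=1$).

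For gates, the number of nonterminal intervals in a binary split tree on $n$ leaves is $O(n)$, and each contributes $O(1)$ gates, so $G=O(n)$. This improves on the generic $O(n\log n)$ bound of the theorem because update size no longer scales with $|I|$. The size-two base intervals are handled by the direct final update using the endpoint-compatible subset of $\{00,10,01\}$, which again costs $O(1)$.

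I expect the main obstacle to be verifying isolation when several boundary conditions coexist on the same $I$. The difficulty is cases where a control site coincides with an endpoint, for example $s-1=a$, or where $|I|\le3$ so that endpoint selectors overlap support sites. I would handle this by a short case check over the constant number of small configurations, confirming that after omitting duplicated controls the projector still selects a unique pair for each $(\lambda,\rho)$.
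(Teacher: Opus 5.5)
Your proposal is correct and follows essentially the same route as the paper. Constant $|\mathcal R(I)|$ and at most three labels give $O(1)$ updates per interval, and Lemma~\ref{lem:sm-rydberg-block-local-controls} makes each update constant-size and block-local, so intervals on the same level run in parallel; $O(\log n)$ levels and $O(n)$ nonterminal intervals then give the two bounds. The small-case overlap you flag is already resolved by that lemma: a global endpoint always carries $\star$, and a fixed endpoint that coincides with a split site forces the corresponding interface bit, so no site ever needs to serve as both support and endpoint selector, and your case check will go through.
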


\begin{proof}
For a fixed interval \(I\), only a constant number of boundary conditions occur, and each condition has at most three interface labels. The local-update sequence on \(I\) contains \(O(1)\) updates. Lemma~\ref{lem:sm-rydberg-block-local-controls} implements each update with \(O(1)\) depth and gates. Intervals in the same recursion layer are disjoint, so the layer depth is \(O(1)\). A balanced interval recursion has \(O(\log n)\) layers and \(O(n)\) nonterminal intervals in total, giving depth \(O(\log n)\) and gate count \(O(n)\).
\end{proof}

\begin{figure}[t]
\centering
\begin{tikzpicture}[
wire/.style={line width=0.45pt},
rootgate/.style={draw, rounded corners=2pt, fill=white, minimum width=2.80cm, minimum height=2.70cm, align=center, font=\footnotesize},
childgate/.style={draw, rounded corners=2pt, fill=white, minimum width=3.10cm, minimum height=1.02cm, align=center, font=\footnotesize},
lab/.style={font=\footnotesize, anchor=east},
note/.style={font=\footnotesize, align=center}
]
\def\xstart{-0.45}
\def\xend{11.35}

\foreach \y/\name in {0/\(x_1\),-0.65/\(x_2\),-1.30/\(x_3\),-1.95/\(x_4\)} {
  \node[lab] at (\xstart-0.25,\y) {\name};
  \draw[wire] (\xstart,\y) -- (\xend,\y);
}

\draw[dashed, rounded corners=2pt] (-0.20,0.38) rectangle (0.70,-2.33);
\node[note] at (0.25,0.64) {seed \(=|0000\rangle\)};

\node[rootgate] (u1) at (2.50,-0.975)
{\(U_1\)\\[0.25em]
\(\tau_{\mu_2=+1}^{(x_1,x_3)=(0,0)}(\theta_1)\)\\[0.25em]
\(0000\to0100\)};
\node[note] at (2.50,-2.72)
{\(\theta_1=\arcsin\sqrt{4/8}\)};

\node[rootgate] (u2) at (5.70,-0.975)
{\(U_2\)\\[0.25em]
\(\tau_{(\mu_2,\mu_3)=(-1,+1)}^{(x_1,x_4)=(0,0)}(\theta_2)\)\\[0.25em]
\(0100\to0010\)};
\node[note] at (5.70,-2.72)
{\(\theta_2=\arcsin\sqrt{2/4}\)};

\draw[dashed, rounded corners=2pt] (7.50,0.38) rectangle (11.10,-2.33);
\node[note] at (9.30,0.64) {parallel final update layer};

\node[childgate] (ul) at (9.30,-0.325)
{\(U_L\)\\[-0.1em]
\(\tau_{\mu_1=+1}^{x_2=0}(\theta_L)\), \(00\to10\)};

\node[childgate] (ur) at (9.30,-1.625)
{\(U_R\)\\[-0.1em]
\(\tau_{\mu_4=+1}^{x_3=0}(\theta_R)\), \(00\to01\)};

\node[note] at (9.30,-2.72)
{\(\theta_L=\theta_R=\arcsin\sqrt{1/2}\)};

\node[note, anchor=west] at (11.24,-0.975)
{output\\[-0.1em]
\(\frac{1}{\sqrt8}\bigl(|0000\rangle+|1000\rangle+|0100\rangle+|0010\rangle\)\\[-0.1em]
\(\hspace{1.35em}+|0001\rangle+|1010\rangle+|1001\rangle+|0101\rangle\bigr)\)};

\end{tikzpicture}
\caption{Controlled local-update circuit for the four-site one-dimensional Rydberg blockade example. The root updates assign amplitude to the representative basis states associated with interface labels \(10\) and \(01\) for the split \([1,4]=[1,2]\cup[3,4]\). The dashed final update layer expands the two length-two intervals in parallel. The output is the uniform superposition of the eight feasible basis states.}
\label{fig:sm-rydberg-controlled-circuit}
\end{figure}

For Fig.~\ref{fig:sm-rydberg-controlled-circuit}, the root split weights are
\begin{equation}
\begin{aligned}
M_{00}&=N([1,2];\star,0)N([3,4];0,\star)=4,\\
M_{10}&=N([1,2];\star,1)N([3,4];0,\star)=2,\\
M_{01}&=N([1,2];\star,0)N([3,4];1,\star)=2 .
\end{aligned}
\end{equation}
With label order \(00,10,01\), the sequential rule first separates weight \(4\) from total weight \(8\), then separates weight \(2\) from the remaining weight \(4\). This gives \(\theta_1=\arcsin\sqrt{4/8}\) and \(\theta_2=\arcsin\sqrt{2/4}\), both equal to \(\pi/4\).
The two nontrivial blocks in the final update layer each have two feasible strings after conditioning on the zero boundary value inherited from the split, so \(\theta_L=\theta_R=\arcsin\sqrt{1/2}\).

\subsection{\texorpdfstring{Bounded-width divergence constraints}{Bounded-width divergence constraints}}
\label{subsec:sm-divergence}

Let \(H=(V,E)\) be a directed graph with one binary variable \(x_e\in\{0,1\}\) on each edge, so \(n=|E|\). The hard constraints fix the net outgoing flow at every vertex,
\begin{equation}
\sum_{e\in\delta^+(v)}x_e-\sum_{e\in\delta^-(v)}x_e=b_v,
\qquad
v\in V .
\end{equation}
Here \(\delta^+(v)\) and \(\delta^-(v)\) denote outgoing and incoming edges. The \(b_v\)'s encode prescribed sources, sinks, or charges.

We consider graph families with maximum degree bounded by a constant \(\Delta\) and a balanced recursive edge decomposition whose split boundary contains only \(O(1)\) vertices. Their residual records and representatives, together with the trades and controls described below, are generated by a uniform polynomial-time rule. Neighboring interface labels are connected by divergence-preserving path or cycle trades on \(O(1)\) edges near the split. A block \(B\) has edge set \(E_B\), and each split produces child blocks whose edge sets are smaller by a fixed factor. We additionally require each trade with support \(K\) to have a set \(C\subseteq E_B\setminus K\) of \(O(1)\) block-local control edges and a pattern \(\eta\) such that its source and target are the only populated representatives on \(B\) matching \(\eta\) on \(C\). To make the edge recursion closed, each block records the net flow that it contributes at vertices shared with its complement. Let
\[
V_B=\{v\in V:(\delta^+(v)\cup\delta^-(v))\cap E_B\neq\varnothing\}
\]
be the vertices touched by the block edges, and let \(\partial B\subseteq V_B\) be the vertices incident to both \(E_B\) and \(E\setminus E_B\). A boundary condition is a vector \(\beta=(\beta_v)_{v\in\partial B}\), where \(\beta_v\) is the net flow contributed by the block edges at vertex \(v\). The vector \(\beta\) is a short integer list attached to the boundary vertices of the current block. The residual block \(F(B,\beta)\subseteq\{0,1\}^{E_B}\) consists of the edge assignments on \(E_B\) satisfying, for every \(v\in V_B\),
\begin{equation}
\sum_{e\in\delta^+(v)\cap E_B}x_e-\sum_{e\in\delta^-(v)\cap E_B}x_e
=
\begin{cases}
b_v, & v\in V_B\setminus\partial B,\\
\beta_v, & v\in\partial B.
\end{cases}
\end{equation}
The recursion terminates at constant-size blocks for which the inherited boundary conditions and vertex equations determine a unique edge assignment. Every nonempty terminal residual block is therefore a singleton, with single-edge blocks as a special case. Empty residual blocks are omitted. Since every boundary vertex has degree at most \(\Delta\) and \(x_e\in\{0,1\}\), each \(\beta_v\) lies in \(\{-\Delta,-\Delta+1,\ldots,\Delta\}\). Since \(|\partial B|=O(1)\), only \(O(1)\) boundary conditions are admissible for a fixed block.

When \(B\) is split into \(B_L\) and \(B_R\), each edge assignment in \(F(B,\beta)\) induces a unique compatible pair of child boundary conditions. This gives the disjoint decomposition
\begin{equation}
F(B,\beta)
=
\bigsqcup_{\alpha\in\mathcal A(B,\beta)}
F(B_L,\beta_L^{(\alpha)})\times F(B_R,\beta_R^{(\alpha)}).
\end{equation}
Here \(\alpha=(\beta_L^{(\alpha)},\beta_R^{(\alpha)})\) is the interface label and ranges over the compatible child conditions. The union is disjoint because each edge assignment restricts uniquely to the two children and hence fixes its child boundary conditions. For each block \(B\), \(\mathcal R(E_B)\) contains \(O(1)\) admissible boundary conditions. Each nonterminal condition has \(O(1)\) interface labels. Thus \(\Lambda(E_B)=O(1)\) and the recursive expansion dimension is \(d=0\).

For every nonempty residual block \(F(B,\beta)\), the rule returns one representative flow \(c_{B,\beta}\in F(B,\beta)\) recursively. At a nonterminal block, it is the concatenation \(c_{B_L,\beta_L^{(\alpha_0)}}\sqcup c_{B_R,\beta_R^{(\alpha_0)}}\) for one admissible interface label \(\alpha_0\). The representative associated with any label \(\alpha\) is defined by the same concatenation with \(\alpha\) in place of \(\alpha_0\). For the uniform feasible superposition, the split weight routed through \(\alpha\) is \(M_\alpha=|F(B_L,\beta_L^{(\alpha)})|\,|F(B_R,\beta_R^{(\alpha)})|\). Nonuniform target distributions are covered when their active split probabilities and local update angles are efficiently computable and use a common angle wherever the same update appears in more than one residual block.

A local trade is specified by two edge sets \(P^+\) and \(P^-\), with common edges canceled. Its edge-indexed signed update vector \(\mu\) satisfies \(\mu_e=1\) for \(e\in P^+\), \(\mu_e=-1\) for \(e\in P^-\) and \(\mu_e=0\) otherwise.
The update is applied only when \(x_e=0\) for all \(e\in P^+\) and \(x_e=1\) for all \(e\in P^-\), and it sends these occupations to \(1\) and \(0\), respectively. The path or cycle property means that \(\mu\) has zero divergence at every internal vertex and leaves the parent boundary condition unchanged. The controlled update remains inside the same residual block and keeps all variables binary.

\begin{lemma}[Block-local controls for divergence updates]
\label{lem:sm-divergence-block-local-controls}
For the graph families described above, each update between neighboring interface labels has \(O(1)\) support and \(O(1)\) controls inside the block. These controls leave every occupied representative for the other residual boundary conditions in \(\mathcal R(E_B)\) unchanged.
\end{lemma}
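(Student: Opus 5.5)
The plan is to read the lemma almost directly off the structural hypotheses imposed on the graph family. It then remains to check that the controlled update of Sec.~\ref{subsec:sm-controlled-updates} behaves correctly on the subspace spanned by the populated representatives of all residual conditions in \(\mathcal R(E_B)\).

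\emph{Step 1: support.} Fix a block \(B\), a nonterminal residual condition \(\beta\in\mathcal R(E_B)\), and neighboring labels \(\alpha,\alpha'\in\mathcal A(B,\beta)\). The representatives \(c^{(\alpha)}_{B,\beta}\) and \(c^{(\alpha')}_{B,\beta}\) are concatenations of child representatives. By the family assumption, they are connected by a divergence-preserving path or cycle trade \((P^+,P^-)\) on \(O(1)\) edges near the split. Hence the support \(K=P^+\cup P^-\) has size \(O(1)\). Because \(\mu\) has zero divergence at every internal vertex and leaves the parent boundary condition unchanged, the target lies in \(F(B,\beta)\), and the update maps source to target as in Eq.~\eqref{eq:sm-two-level-action}.

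\emph{Step 2: controls.} Take the control set \(C\subseteq E_B\setminus K\) and pattern \(\eta\) guaranteed by the family hypothesis. By assumption, source and target are the only populated representatives on \(B\) matching \(\eta\) on \(C\). The key point is that "populated representatives on \(B\)" must be interpreted over the union of all residual conditions in \(\mathcal R(E_B)\), not only \(\beta\). This is exactly what is needed to isolate the pair. Since \(|\mathcal R(E_B)|=O(1)\) and each condition has \(O(1)\) labels, the set of populated representatives at any point of the schedule is of constant size. If the given controls only separate within \(\beta\), one can add \(O(1)\) extra block-local edges whose values distinguish \(\beta\) from the other conditions. Such edges exist because distinct boundary vectors force distinct net flows at some boundary vertex, and hence a different value on some edge of \(B\) incident to it. Adding one such distinguishing edge per competing representative keeps \(|C|=O(1)\).

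\emph{Step 3: action on the remaining states.} With \(\tau^\eta_\mu(\theta)=e^{-i\theta\Pi_\eta H_{\rm up}(\mu)}\), any basis state failing \(\eta\) on \(C\) is fixed. Among states matching \(\eta\), \(H_{\rm up}(\mu)\) acts only on pairs related by the oriented flip on \(K\). The only populated such pair is (source, target), because no other populated representative matches \(\eta\). Therefore every occupied representative for other residual conditions is unchanged, and unitarity fixes the reverse action on the target. Support plus controls number \(O(1)\) and all lie in \(E_B\), so Alg.~\ref{alg:controlled-local-update} gives \(O(1)\) gates.

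The main obstacle is Step 2: showing that a constant-size control set separates the pair from all other populated representatives across different boundary conditions, including states populated by earlier updates in the merged schedule. I would first lean on the stated family hypothesis, and supplement it with the boundary-flow distinguishing argument only where that hypothesis is read as per-condition.
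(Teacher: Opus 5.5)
Your proposal is correct and follows the paper's proof: the paper likewise takes the $O(1)$ trade support $K=P^+\cup P^-$ and the isolating pair $(C,\eta)$ directly from the family hypothesis, read over all populated representatives on $B$ as you do, and then applies Algorithm~\ref{alg:controlled-local-update}. Your fallback for a per-condition reading is not needed. If you keep it, choose each extra control edge outside $K$. This is always possible: $H_{\rm up}(\mu)$ acts only on competitors whose pattern on $K$ equals that of the source or the target, and such a competitor differs from that endpoint at some edge outside $K$.
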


\begin{proof}
Let \(K=P^+\cup P^-\) be the support of the trade, so \(|K|=O(1)\). The signed toggle preserves divergence at every internal vertex and leaves the parent boundary condition unchanged. By the isolation condition above, there is a set \(C\subseteq E_B\setminus K\) of \(O(1)\) control edges and a pattern \(\eta\) whose only populated matches are the source and target representatives. Algorithm~\ref{alg:controlled-local-update} with controls \((C,\eta)\) realizes the required two-level update on this pair and acts as the identity on every other occupied representative. All support and control edges lie inside \(E_B\). Each update uses \(O(1)\) support and control edges without affecting another residual boundary condition in \(\mathcal R(E_B)\).
\end{proof}

\begin{corollary}[Bounded-width divergence resources]
For the graph families described above, the construction for divergence constraints gives an ancilla-free preparation with depth \(O(\log n)\) and gate count \(O(n)\).
\end{corollary}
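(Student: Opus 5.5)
The plan mirrors the Rydberg blockade corollary, with Lemma~\ref{lem:sm-divergence-block-local-controls} in the role played by Lemma~\ref{lem:sm-rydberg-block-local-controls}. The generic bound of Theorem~\ref{thm:layerwise-preparation} with \(d=0\) would only give depth \(O(n)\), because it charges each update \(O(|I|)\) through full controls on \(I\setminus K\). We therefore redo the layer accounting directly, using constant-size updates.

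\emph{Step 1: each block costs \(O(1)\).} For a fixed block \(B\), \(\mathcal R(E_B)\) has \(O(1)\) admissible boundary conditions. Each \(\beta_v\) lies in \(\{-\Delta,\ldots,\Delta\}\) and \(|\partial B|=O(1)\). Each nonterminal condition has \(O(1)\) interface labels, so \(\Lambda(E_B)=O(1)\). The merged block-local schedule required by the theorem's hypothesis therefore contains \(O(1)\) updates. By Lemma~\ref{lem:sm-divergence-block-local-controls}, each update has \(O(1)\) support and \(O(1)\) controls inside \(E_B\), and it leaves the representatives of other residual conditions unchanged. We implement each update with Algorithm~\ref{alg:controlled-local-update}. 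Since \(\kappa=O(1)\), the linear-width ancilla-free synthesis gives \(O(1)\) gates and \(O(1)\) depth per update. Hence the whole schedule on \(B\) has \(O(1)\) depth and gates.

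\emph{Step 2: correctness.} The argument in the proof of Theorem~\ref{thm:layerwise-preparation} goes through unchanged. The decomposition is disjoint and constructible, split masses are \(M_\alpha=|F(B_L,\cdot)||F(B_R,\cdot)|\) computed bottom up, and the angles follow from Eq.~\eqref{eq:angle-mass}. The output is therefore \(|\psi_\pi\rangle\). This relies only on the isolation property, not on the full-control implementation.

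\emph{Step 3: depth and gate sums.} Blocks at the same recursion level have disjoint edge sets, and all their support and control edges lie inside their own blocks. Their updates therefore act on disjoint qubits and can run in parallel, so each level contributes \(O(1)\) depth. Edge sets shrink by a fixed factor per split, so there are \(O(\log n)\) levels and total depth \(O(\log n)\). For gates, the recursion tree is binary with leaves of constant size partitioning \(E\), so it has \(O(n)\) nonterminal blocks. Each costs \(O(1)\) gates, and preparing the root representative costs \(O(n)\) \(X\) gates. The total is \(O(n)\) gates, not \(O(n\log n)\). No ancillas are used anywhere.

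The main obstacle is Step 1's claim that a single merged schedule on \(E_B\) serves all residual conditions at once with constant controls. Lemma~\ref{lem:sm-divergence-block-local-controls} provides this, so it only needs to be invoked carefully. The remaining care is in justifying the \(O(n)\) count of tree nodes under the fixed-factor shrinkage assumption.
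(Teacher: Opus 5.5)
Your proposal is correct and follows essentially the same route as the paper: \(O(1)\) updates per block via Lemma~\ref{lem:sm-divergence-block-local-controls}, constant depth per layer because blocks on a layer have disjoint edge sets, and \(O(\log n)\) layers with \(O(n)\) nonterminal blocks. Your extra remarks are accurate refinements of the paper's brief argument: Theorem~\ref{thm:layerwise-preparation} alone would give only \(O(n)\) depth and \(O(n\log n)\) gates, and seeding the root representative costs \(O(n)\) gates.
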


\begin{proof}
All admissible boundary conditions in \(\mathcal R(E_B)\) require \(O(1)\) updates in total. Lemma~\ref{lem:sm-divergence-block-local-controls} implements each update with \(O(1)\) depth and gates. Blocks in the same layer have disjoint edge sets and can be scheduled in parallel, so one layer has constant depth. A balanced decomposition with constant-size leaves has \(O(\log n)\) layers and \(O(n)\) nonterminal blocks. The depth is \(O(\log n)\), and the total gate count is \(O(n)\).
\end{proof}

\begin{figure*}[t]
\centering
\begin{minipage}[c]{0.26\textwidth}
\centering
\begin{tikzpicture}[
>=latex,
flowedge/.style={->, line width=0.6pt},
vtx/.style={circle, draw, inner sep=1.5pt, font=\footnotesize},
edge lab/.style={font=\scriptsize, fill=white, inner sep=1pt},
note/.style={font=\footnotesize, align=left}
]
\node[font=\bfseries, anchor=west] at (-0.70,2.55) {(a)};
\node[vtx] (ti) at (0,2.15) {$t_i$};
\node[vtx] (tip) at (2.1,2.15) {$t_{i+1}$};
\node[vtx] (bi) at (0,0.35) {$b_i$};
\node[vtx] (bip) at (2.1,0.35) {$b_{i+1}$};
\draw[flowedge] (ti) -- node[edge lab, above] {$x_1$} (tip);
\draw[flowedge] (ti) -- node[edge lab, left] {$x_2$} (bi);
\draw[flowedge] (tip) -- node[edge lab, right] {$x_3$} (bip);
\draw[flowedge] (bi) -- node[edge lab, below] {$x_4$} (bip);
\draw[flowedge] (ti) -- node[edge lab, above right] {$x_5$} (bip);
\node[note, anchor=west] at (-0.35,-0.45)
{$P_1=10100\qquad t_i\to t_{i+1}\to b_{i+1}$\\
 $P_2=00001\qquad t_i\to b_{i+1}$\\
 $P_3=01010\qquad t_i\to b_i\to b_{i+1}$};
\end{tikzpicture}
\end{minipage}
\begin{minipage}[c]{0.7\textwidth}
\centering
\begin{tikzpicture}[
wire/.style={line width=0.45pt},
xgate/.style={draw, fill=white, minimum size=0.38cm, inner sep=0pt, font=\footnotesize},
rootgate/.style={draw, rounded corners=2pt, fill=white, minimum width=3.15cm, minimum height=2.88cm, align=center, font=\footnotesize},
lab/.style={font=\footnotesize, anchor=east},
note/.style={font=\footnotesize, align=center}
]
\def\xstart{-0.45}
\def\xend{8.26}
\node[note, anchor=west, font=\bfseries] at (-1.3,0.62) {(b)};

\foreach \y/\name in {0/\(x_1\),-0.55/\(x_2\),-1.10/\(x_3\),-1.65/\(x_4\),-2.20/\(x_5\)} {
  \node[lab] at (\xstart-0.3,\y) {\name};
  \draw[wire] (\xstart,\y) -- (\xend,\y);
}

\draw[dashed, rounded corners=2pt] (-0.20,0.34) rectangle (0.72,-2.54);
\node[note] at (0.26,0.60) {seed \(=|10100\rangle\)};
\node[xgate] at (0.26,0) {\(X\)};
\node[xgate] at (0.26,-1.10) {\(X\)};

\node[rootgate] (u1) at (2.76,-1.10)
{\(U_1\)\\[0.25em]
\(\tau_{(\mu_1,\mu_3,\mu_5)=(-1,-1,+1)}^{x_2=0}(\theta_1)\)\\[0.25em]
\(10100\to00001\)};
\node[note] at (2.76,-2.80)
{\(\theta_1=\arcsin\sqrt{2/3}\)};

\node[rootgate] (u2) at (6.44,-1.10)
{\(U_2\)\\[0.25em]
\(\tau_{(\mu_2,\mu_4,\mu_5)=(+1,+1,-1)}^{x_1=0}(\theta_2)\)\\[0.25em]
\(00001\to01010\)};
\node[note] at (6.44,-2.80)
{\(\theta_2=\arcsin\sqrt{1/2}\)};

\node[note, anchor=west] at (8.36,-1.10)
{output\\[-0.1em]
\(\frac{1}{\sqrt3}|10100\rangle\)\\[-0.1em]
\(+\frac{1}{\sqrt3}|00001\rangle\)\\[-0.1em]
\(+\frac{1}{\sqrt3}|01010\rangle\)};

\end{tikzpicture}
\end{minipage}
\caption{Local trade example for bounded-width divergence. (a) Three unit flow patterns \(P_1,P_2,P_3\) on a five-edge cell of the directed ladder. (b) Controlled local-update circuit for the same cell. The seed block prepares the basis state representing \(P_1\). The two updates realize the preparation steps \(P_1\to P_2\) and \(P_2\to P_3\), with angles computed in the text. The output is the uniform superposition of the three basis states representing these flow patterns.}
\label{fig:sm-divergence-controlled-circuit}
\end{figure*}

For the resource experiment, we use a ladder flow with two vertices \(t_i,b_i\) in each column \(i=0,\ldots,L\). For every cell \(i=0,\ldots,L-1\), include the directed edges
\[
t_i\to b_i,\qquad
t_i\to t_{i+1},\qquad
b_i\to b_{i+1},\qquad
t_i\to b_{i+1},
\]
and add the final vertical edge \(t_L\to b_L\). The number of binary edge variables is \(n=4L+1\). The prescribed divergence is \(1\) at \(t_i\) for \(i=0,\ldots,L-1\), \(0\) at \(t_L\) and \(b_0\), and \(-1\) at \(b_i\) for \(i=1,\ldots,L\).
Let \(x_i^{(v)},x_i^{(t)},x_i^{(b)},x_i^{(d)}\) label the vertical, top, bottom and diagonal edges in cell \(i\), with \(x_L^{(v)}\) on the final vertical edge. The top-row constraints are \(x_i^{(v)}+x_i^{(t)}+x_i^{(d)}-x_{i-1}^{(t)}=1\) for \(i=0,\ldots,L-1\), with \(x_{-1}^{(t)}=0\). The bottom-row constraints are \(x_0^{(b)}-x_0^{(v)}=0\) and \(x_i^{(b)}-x_{i-1}^{(b)}-x_{i-1}^{(d)}-x_i^{(v)}=-1\) for \(i=1,\ldots,L-1\). The terminal constraints are \(x_L^{(v)}=x_{L-1}^{(t)}\) and \(x_{L-1}^{(b)}+x_{L-1}^{(d)}+x_L^{(v)}=1\).
The cut after each cell is described by the three-bit assignment to the top, bottom, and diagonal edges crossing into the next column. The following result verifies the recursive construction used for this benchmark. This ladder flow is the divergence benchmark behind Fig.~\ref{fig:ncs-scope-scaling}.

\begin{lemma}[Recursive construction for the ladder flow]
\label{lem:sm-ladder-flow-construction}
For every \(L\geq 1\), the ladder flow admits a balanced recursion over contiguous ladder steps. Each split has at most three interface labels. Its canonical representatives can be connected by local updates with support at most four and with at most four block-local controls. The updates at each recursion level have a constant-color schedule. Consequently, this benchmark has recursive expansion dimension \(d=0\) and admits ancilla-free preparation with depth \(O(\log n)\) and gate count \(O(n)\).
\end{lemma}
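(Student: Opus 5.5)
The plan is to instantiate the general divergence construction of Sec.~\ref{subsec:sm-divergence} on the ladder. We then verify each hypothesis of Lemma~\ref{lem:sm-divergence-block-local-controls} and the bounded-width divergence resources corollary. The blocks are contiguous runs of cells \(i\in[s,t)\), with the final vertical edge attached to the last block. A block is split at its midpoint cell \(u\), which gives \(\rho\le 2/3\) after rounding. The edges crossing a cut at \(u\) are exactly \(x_{u-1}^{(t)},x_{u-1}^{(b)},x_{u-1}^{(d)}\). Their joint occupancy determines the boundary flows at \(t_u,b_u\), so the interface label can be taken as this three-bit crossing pattern.

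The first step is to show that only three patterns are admissible: \(100\), \(001\) and \(010\). These correspond to the unit flow passing the cut on the top edge, the diagonal edge or the bottom edge. I would prove this by summing the vertex constraints over the columns left of the cut. The net divergence there is \(u\cdot 1+0-(u-1)\cdot1=1\), so exactly one unit crosses, and binarity forces a single crossing edge. The residual condition of a block is then a pair (incoming pattern, outgoing pattern) from a \(3\times 3\) set. Hence \(|\mathcal R(I)|\le 9\) and \(|\mathcal A(I,\beta)|\le 3\), which gives \(\Lambda(I)=O(1)\) and \(d=0\).

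Next I would fix canonical representatives recursively. A block's representative is the concatenation of its children's representatives under the first label \(100\). The remaining task is to show that neighbouring labels are connected by a local trade. The key observation is that, inside a single cell, rerouting the unit flow among \(P_1=t\to t'\to b'\), \(P_2=t\to b'\) and \(P_3=t\to b\to b'\) is exactly the trade of Fig.~\ref{fig:sm-divergence-controlled-circuit}. Changing the crossing pattern at cut \(u\) from \(100\) to \(001\) or \(010\) changes the edges of cell \(u-1\), together with possibly the vertical edge \(x_u^{(v)}\) needed to restore the divergence at \(t_u,b_u\). I would enumerate these trades explicitly and check that each has support at most four and zero divergence at internal vertices.

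For isolation, I would use as controls the crossing bits at the two outer ends of the block, which fix \(\beta\), and the neighbouring cell edges, which fix the label. The aim is to show that these at most four controls separate the source and target representatives from every other populated representative, across all nine residual conditions. Finally, all support and control edges sit within \(O(1)\) cells of the split. Updates in different blocks of the same layer therefore use disjoint wires, and within a block there are \(O(1)\) updates, so a constant colouring suffices. The resource bound then follows from Theorem~\ref{thm:layerwise-preparation} with \(d=0\), sharpened to depth \(O(\log n)\) and gates \(O(n)\) exactly as in the bounded-width divergence corollary.

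The main obstacle is the isolation check. The children's canonical representatives depend on the inherited boundary patterns, so the strings near the cut vary with \(\beta\). I would handle this by a finite case table over the \(9\times3\) (residual, label) pairs on a window of two cells around the cut, verifying it exhaustively much like the Rydberg table.
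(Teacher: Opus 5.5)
Your outline follows the paper's proof: midpoint recursion over contiguous cells, the three one-hot cut states as interface labels, trades near the cut, and controls at the block ends. Your cut-sum argument for the one-hot property is cleaner than the paper's induction over the cell transitions.

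\textbf{Gap in the isolation step.} For a block on cells \(s,\ldots,t-1\), the incoming pattern is carried by \(x_{s-1}^{(t)},x_{s-1}^{(b)},x_{s-1}^{(d)}\), which lie in the left neighbouring block. Controls on these bits are therefore not block-local, which the lemma requires. Your parallel schedule fails for the same reason: the neighbour reads those wires as its outgoing state, so blocks in one layer no longer act on disjoint wires. Two crossing triples plus label-fixing edges would also exceed four controls.

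The missing observation is that the incoming state is visible inside the block. The equation at \(b_s\) gives \(x_s^{(b)}=x_s^{(v)}+x_{s-1}^{(b)}+x_{s-1}^{(d)}-1\). Hence \((x_s^{(v)},x_s^{(b)})=(1,0)\) exactly when the incoming state is top, and \(x_s^{(v)}=x_s^{(b)}\) otherwise. Bottom and diagonal entry impose identical constraints on the block, so those residual blocks share representatives and angles and need not be separated. These two bits, together with \(x_{t-1}^{(t)},x_{t-1}^{(b)}\) at the right end (dropping any bit already in the support), are the paper's at most four controls. No label controls are needed, because the third label's pattern on the support differs from both source and target.

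\textbf{Representative rule.} Entering a cell from the top forces \(x_i^{(v)}=1\) and \(x_i^{(b)}=0\), so top-to-bottom is not a valid cell transition. Label \(100\) is therefore empty whenever the right child is a single cell with outgoing state \(010\). Your first-admissible fallback then puts \(001\) at the last internal cut of every block ending in \(010\). As a result, your \(001\to010\) trade also changes cell \(u-2\) whenever the left child has at least two cells, not just cell \(u-1\) plus \(x_u^{(v)}\) as you claim. It still has support four, but a direct \(100\to010\) trade can have support five, so your bound depends on the chain order.

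The paper avoids this by holding every internal cut at the diagonal state. That state is reachable from every state and reaches every state, so each label change is confined to the last cell of the left child and the first cell of the right child.

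\textbf{Final leaf.} Attaching the final vertical edge to the last one-cell block leaves a leaf with two or three feasible words, so it is not a singleton terminal. Split that edge off as its own step, as the paper does.
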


\begin{proof}
Write the word of cell \(i\) in the order \(v_i u_i \ell_i d_i\), where the entries denote the vertical, top, bottom, and diagonal edge occupations. The cut state is \((u_i,\ell_i,d_i)\). Let \(A=(1,0,0)\), \(B=(0,1,0)\), and \(C=(0,0,1)\). The reachable transitions at every interior cell are
\begin{equation}
\begin{array}{c|c}
\text{incoming state} & \text{outgoing state and cell word} \\ \hline
A & (A,1100),\ (C,1001) \\
B\text{ or }C & (A,0100),\ (B,1010),\ (C,0001).
\end{array}
\label{eq:ladder-flow-transitions}
\end{equation}
The root state \(0=(0,0,0)\) has the same three outgoing words as \(B\) and \(C\). The final vertical edge accepts \(A\) with word \(1\) and accepts \(B\) or \(C\) with word \(0\). It follows by induction that only \(A\), \(B\), and \(C\) occur at internal cuts. States \(B\) and \(C\) have identical continuation languages, so the right child uses \(C\) as the canonical entry for either state without changing its feasible words.

Split each interval at its midpoint and stop at one ladder step. An ordinary step contains four edges and the final step contains one. The larger child therefore contains at most \(4/5\) of the parent edge variables. The recursion has \(O(\log n)\) levels, and the boundary states uniquely determine the word at every one-step leaf. A residual interval has at most \(4\times4=16\) boundary-state pairs. Its split state belongs to \(\{A,B,C\}\), so every residual block has at most three interface labels.

Choose the canonical representative of an interval by keeping every intermediate cut in state \(C\). Equation~\eqref{eq:ladder-flow-transitions} shows that this choice reaches every feasible right boundary from every feasible left boundary. Changing the split state changes only the final cell of the left child and the first step of the right child. The three words leaving \(C\) have pairwise Hamming distance at most three. Only a branch with split state \(A\) changes the canonical entry of the right child, and this changes one further bit. Two branch representatives therefore differ on at most four edge variables. These variables form the support of the local update.

At an internal left endpoint, the vertical and lower horizontal edge occupations of the first cell distinguish the canonical incoming states \(A\) and \(C\). At an internal right endpoint, the upper and lower horizontal edge occupations of the final cell distinguish \(A\), \(B\), and \(C\). Root and terminal endpoints are fixed. These endpoint occupations, excluding any already contained in the update support, give at most four block-local controls. Together with the split-state pattern on the support, they uniquely identify the source and target among all occupied representatives for the same interval.

There are at most sixteen residual boundary pairs and at most two ordered updates for each pair. Hence \(\Lambda(I)=O(1)\) uniformly over the intervals and \(d=0\). Color an update by its position in the ordered interface list, the parity of its interval rank, and its two endpoint states. This uses at most \(2\times2\times16\) colors. Updates with the same color act on disjoint intervals. Each recursion level therefore has constant depth. The recursion contains \(O(n)\) intervals, which gives \(O(n)\) gates and \(O(\log n)\) depth. The same constant-state recursion computes all split masses from the leaves upward. The finite transition table generates the records, representatives and branch labels in polynomial time, while the controls isolate every update. Thus the ladder recursion is constructible.
\end{proof}

The quotient \(B\sim C\) also supports nonuniform target distributions when the two contexts have the same normalized continuation probabilities throughout the right child. The uniform target used here satisfies this condition.

A single ladder cell illustrates the local updates used by this benchmark. Let \(x_1,\ldots,x_5\) label the top, left vertical, right vertical, bottom, and diagonal edges, respectively. Within this cell, each pattern contributes divergence \(1\) at \(t_i\), \(-1\) at \(b_{i+1}\), and \(0\) at \(t_{i+1}\) and \(b_i\). In the order \((x_1,\ldots,x_5)\), the three unit flow patterns are
\[
P_1=10100,\qquad
P_2=00001,\qquad
P_3=01010 .
\]
They correspond to the paths \(t_i\to t_{i+1}\to b_{i+1}\), \(t_i\to b_{i+1}\), and \(t_i\to b_i\to b_{i+1}\). Their common divergence ensures that the local trades preserve the global ladder constraints. Choose order \(P_1,P_2,P_3\). The first preparation step separates \(P_1\) from \(\{P_2,P_3\}\), so the root weights are \(1\) and \(2\), giving \(\theta_1=\arcsin\sqrt{2/3}\). The second preparation step separates \(P_2\) from \(P_3\), so the child weights are \(1\) and \(1\), giving \(\theta_2=\arcsin\sqrt{1/2}\). The two local trades in Fig.~\ref{fig:sm-divergence-controlled-circuit}(b) realize these preparation steps as \(10100\to00001\) on \(\{x_1,x_3,x_5\}\), with control \(x_2=0\), and \(00001\to01010\) on \(\{x_2,x_4,x_5\}\), with control \(x_1=0\).

\section{Broader structured state classes}
\label{sec:sm-generality}

Here the recursion comes from a compact state description rather than an explicit hard constraint.
Sparse support states use an ordered support list. Decision diagram states use reverse compression of incoming edge representatives. Hierarchical low-rank states use bounded-rank Schmidt trees with computational-basis sectors that are disjoint at each split. In each case, the compact description determines the representatives, local conditional probabilities and phases. Representative pairs determine the local updates, while the probabilities and phases determine their parameters.

\subsection{Sparse support states}
\label{subsec:sm-sparse-support}

A sparse-support state is specified by \(S\) bit strings and their amplitudes. Write the support as \(\Omega\subseteq\{0,1\}^n\), with \(|\Omega|=S\), and target state \(|\psi\rangle=\sum_{x\in\Omega}a_x|x\rangle\), normalized by \(\sum_{x\in\Omega}|a_x|^2=1\).
Choose an order \(\Omega=\{x^{(1)},x^{(2)},\ldots,x^{(S)}\}\), use \(x^{(1)}\) as the seed, and define the remaining support \(\Omega_j=\{x^{(j)},x^{(j+1)},\ldots,x^{(S)}\}\) with remaining mass \(R_j=\sum_{t=j}^S |a_{x^{(t)}}|^2\). Then \(R_1=1\), and the preparation proceeds through
\(\Omega_j=\{x^{(j)}\}\sqcup \Omega_{j+1}\) for \(j=1,\ldots,S-1\).
At step \(j\), the remaining support \(\Omega_j\) is the recursion block, its representative is \(x^{(j)}\), and the update opens \(x^{(j+1)}\). The update keeps probability \(|a_{x^{(j)}}|^2/R_j\) on \(x^{(j)}\) and transfers \(R_{j+1}/R_j\), so
\(\theta_j=\arcsin\sqrt{R_{j+1}/R_j}\).
In the worst case \(x^{(j)}\) and \(x^{(j+1)}\) differ on \(O(n)\) qubits, so the controlled update from Sec.~\ref{subsec:sm-controlled-updates} admits an ancilla-free implementation with depth and gate count \(O(n)\).
The coefficient phases can be absorbed into the corresponding updates or added afterward by controlled phase updates on the corresponding basis states in the support. Under the same implementation, this adds at most \(O(Sn)\) depth and gates and does not change the asymptotic bound below.

As an example with four strings, take
\[
|\psi_\Omega\rangle
=
\sqrt{\tfrac12}|0000\rangle
+\sqrt{\tfrac14}|0110\rangle
+\sqrt{\tfrac18}|1001\rangle
+\sqrt{\tfrac18}|1110\rangle .
\]
In the chain order \((0000,0110,1001,1110)\), the residual masses are \(1,1/2,1/4,1/8\), giving three angles \(\arcsin\sqrt{1/2}\). An equivalent grouped schedule is shown in Fig.~\ref{fig:sm-sparse-support-controlled-circuit}. It first separates the total mass \(1/4\) on \(\{1001,1110\}\), then refines the two pairs with conditional masses \(1/3\) and \(1/2\).

\begin{figure}[t]
\centering
\begin{tikzpicture}[
wire/.style={line width=0.45pt},
rootgate/.style={draw, rounded corners=2pt, fill=white, minimum width=2.80cm, minimum height=2.40cm, align=center, font=\footnotesize},
smallgate/.style={draw, rounded corners=2pt, fill=white, minimum width=2.80cm, minimum height=2.40cm, align=center, font=\footnotesize},
lab/.style={font=\footnotesize, anchor=east},
note/.style={font=\footnotesize, align=center}
]
\def\xstart{-0.45}
\def\xend{10.90}

\foreach \y/\name in {0/\(x_1\),-0.58/\(x_2\),-1.16/\(x_3\),-1.74/\(x_4\)} {
  \node[lab] at (\xstart-0.25,\y) {\name};
  \draw[wire] (\xstart,\y) -- (\xend,\y);
}

\draw[dashed, rounded corners=2pt] (-0.20,0.33) rectangle (0.60,-2.07);
\node[note] at (0.20,0.58) {seed \(=|0000\rangle\)};

\node[rootgate] (ur) at (2.42,-0.87)
{\(U_{\rm root}\)\\[0.25em]
\(\tau_{(\mu_1,\mu_4)=(+1,+1)}(\theta_{\rm root})\)\\[0.25em]
\(0000\to1001\)};
\node[note] at (2.42,-2.40)
{\(\theta_{\rm root}=\arcsin\sqrt{1/4}\)};

\node[smallgate] (u0) at (5.64,-0.87)
{\(U_0\)\\[0.25em]
\(\tau_{(\mu_2,\mu_3)=(+1,+1)}^{(x_1,x_4)=(0,0)}(\theta_0)\)\\[0.25em]
\(0000\to0110\)};
\node[note] at (5.64,-2.40)
{\(\theta_0=\arcsin\sqrt{1/3}\)};

\node[smallgate] (u1) at (9.08,-0.87)
{\(U_1\)\\[0.25em]
\(\tau_{(\mu_2,\mu_3,\mu_4)=(+1,+1,-1)}^{x_1=1}(\theta_1)\)\\[0.25em]
\(1001\to1110\)};
\node[note] at (9.08,-2.40)
{\(\theta_1=\arcsin\sqrt{1/2}\)};

\node[note, anchor=west] at (10.88,-0.87)
{output\\[-0.1em]
\(\sqrt{\tfrac12}|0000\rangle+\sqrt{\tfrac14}|0110\rangle\)\\[-0.1em]
\(\hspace{0.45em}+\sqrt{\tfrac18}|1001\rangle+\sqrt{\tfrac18}|1110\rangle\)};

\end{tikzpicture}
\caption{Controlled local-update circuit for a four-qubit sparse-support state. Starting from \(|0000\rangle\), \(U_{\rm root}\) opens \(|1001\rangle\), and \(U_0\) and \(U_1\) refine the two branches. The output is \(\sqrt{1/2}|0000\rangle+\sqrt{1/4}|0110\rangle+\sqrt{1/8}|1001\rangle+\sqrt{1/8}|1110\rangle\). The displayed angles follow from the branch probabilities.}
\label{fig:sm-sparse-support-controlled-circuit}
\end{figure}

\begin{corollary}[Sparse support resources]
A support tree with \(S\) leaves gives an ancilla-free preparation with depth \(O(Sn)\) and gate count \(O(Sn)\).
\end{corollary}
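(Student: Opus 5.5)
The plan is to realize the sparse-support preparation as a sequence of at most \(S-1\) controlled local updates, one per non-root node of the support tree. Each update costs \(O(n)\) depth and gates by the ancilla-free synthesis of Sec.~\ref{subsec:sm-controlled-updates}. Since there is no parallelism to exploit in the worst case, depth and gate count are both bounded by summing these costs.

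\emph{Setup and root.} Fix a support tree with \(S\) leaves, one per element of \(\Omega\). Each internal node carries a subset \(\Omega'\subseteq\Omega\), a representative \(x\in\Omega'\), and an ordered list of children whose representatives are opened from the parent representative. The chain order of the text is the special case of a path-like tree. The grouped schedule of Fig.~\ref{fig:sm-sparse-support-controlled-circuit} is another instance. The seed \(|x^{(1)}\rangle\) is prepared with at most \(n\) \(X\) gates in depth one.

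\emph{Correctness.} Each time a new support string \(x'\) is opened from a currently populated string \(x\), we apply \(\tau_\mu^\eta(\theta)\) with support \(K=\{i:x_i\neq x'_i\}\) and full controls \(C=[n]\setminus K\), \(\eta=x_C\). With these controls the update acts only on the two-dimensional span of \(|x\rangle\) and \(|x'\rangle\). Because \(x'\) is not yet populated, every other populated basis state is untouched. The angle \(\arcsin\sqrt{R'/R}\) is chosen from the residual masses of the subtree, as in Sec.~\ref{subsec:sm-split-weight-angles}. Induction over the opening order, as in the chain argument there, then shows that the final squared amplitudes are \(|a_x|^2\).

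\emph{Phases and resources.} Coefficient phases are added by a diagonal phase on each support string, which is a fully controlled single-qubit phase costing \(O(n)\) per string. The tree opens each of the \(S-1\) non-seed strings exactly once, so there are \(S-1\) updates. By the zero-ancilla row of the tradeoff table, with \(\kappa\le n\), each update costs \(O(n)\) gates and depth. Adding the \(O(Sn)\) phase corrections and the \(O(n)\) seed cost gives depth and gate count \(O(Sn)\), all without ancillas.

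\emph{Main obstacle.} The delicate step is the isolation argument in the grouped (non-chain) case. There, several strings are populated at once, and an update whose target coincides with an already populated string would break unitarity bookkeeping. Opening each string exactly once, from a populated representative, with full controls rules this out. I would state this as a short invariant: before each update, the target string is unpopulated, and the source is the only populated string matching \(\eta\) on \(C\).
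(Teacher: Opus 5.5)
Your proposal is correct and follows essentially the same route as the paper: \(S-1\) opening updates (chain, star or grouped schedule), each realized by the ancilla-free controlled local update of Sec.~\ref{subsec:sm-controlled-updates} with \(O(n)\) depth and gates, plus \(O(Sn)\) for the phases, summed sequentially to give \(O(Sn)\). One small remark: your invariant that the source is the \emph{only} populated string matching \(\eta\) on \(C\) is stronger than needed and can fail when an earlier string agrees with \(x\) outside \(K\); it suffices that the target \(x'\) is unpopulated, because requiring the exact pattern on \(K\) already confines the action to \(\mathrm{span}\{|x\rangle,|x'\rangle\}\).
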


\begin{proof}
In the chain schedule, the \(S-1\) nonseed bit strings are opened sequentially. The equivalent star schedule used for Fig.~\ref{fig:ncs-scope-scaling} opens each nonseed bit string directly from the seed.
Each update may change \(O(n)\) bits. It has an ancilla-free implementation with depth and gate count \(O(n)\) by Sec.~\ref{subsec:sm-controlled-updates}. Multiplying by the \(S-1\) updates gives depth \(O(Sn)\) and gate count \(O(Sn)\).
\end{proof}

\subsection{Decision diagram states}
\label{subsec:sm-decision-diagram}

A rooted decision diagram is a layered DAG whose root-to-terminal paths specify bit strings.
We use a deterministic levelized binary decision DAG \(Q=(V,E)\). It has one root in layer \(0\) and one common terminal \(v_{\rm term}\) in layer \(n\). Every edge joins adjacent layers and carries a bit label \(\sigma\in\{0,1\}\). Each node has at most one outgoing edge with each bit label. Zero-weight edges and non-live nodes are removed. For an edge \(e=(v,\sigma,u)\), let \(w_e\in\mathbb C\setminus\{0\}\) be its weight. The product of edge weights along a root-to-terminal path gives the unnormalized basis amplitude.

For a node \(v\), let \(\Omega_v\) be the suffix strings accepted below \(v\). Writing each outgoing edge as \(e=(v,\sigma,u)\), the suffix support decomposes as
\begin{equation}
\Omega_v
=
\bigsqcup_{e=(v,\sigma,u)}
\{\,\sigma y:\ y\in\Omega_u\,\}.
\end{equation}
This is a disjoint union because each edge consumes exactly the next bit and determinism gives at most one outgoing edge with any fixed bit from the same node. Removing non-live nodes ensures that every displayed component is nonempty.

Let \(\deg^-(u)\) denote the number of incoming edges at \(u\). The classical plan is most naturally read as an inverse compression circuit. It processes the levelized diagram from the root toward the terminal. After each source prefix has been compressed to a representative prefix \(p_v\), an incoming edge \(e=(v,\sigma,u)\) gives the representative prefix \(p_v\sigma\) for node \(u\). The inverse circuit compresses these incoming representatives into one selected survivor using \(\deg^-(u)-1\) Givens rotations. The preparation circuit reverses this schedule. It starts from the terminal representative and applies the stored rotations from the terminal layer back to the root.

To set these rotations without enumerating accepted strings, we compute the probability carried by each incoming representative from a bottom-up suffix mass and a top-down arrival mass. Let \(\mathcal P(v)\) be the set of paths from \(v\) to the terminal and let \(W(p)\) be the product of edge weights on \(p\). Define \(Z_v=\sum_{p\in\mathcal P(v)}|W(p)|^2\). If \(w_e\) is the edge weight on \(e=(v,\sigma,u)\), these suffix masses satisfy the bottom-up recursion
\begin{equation}
Z_v
=
\sum_{e=(v,\sigma,u)}
|w_e|^2 Z_u ,
\qquad
Z_{v_{\rm term}}=1.
\end{equation}
The normalized target state is \(|\psi_Q\rangle=Z_{\rm root}^{-1/2}\sum_{p\in\mathcal P({\rm root})}W(p)|x(p)\rangle\), where \(x(p)\in\{0,1\}^n\) is the bit string read along \(p\). The top-down probability mass reaching a node in this state is denoted by \(A_v\), with \(A_{\rm root}=1\). For an incoming edge \(e=(v,\sigma,u)\), the edge contribution is \(N_e=A_v |w_e|^2 Z_u/Z_v\), and the arrival mass at \(u\) is \(A_u=\sum_{e=(v,\sigma,u)}N_e\). The quantity \(A_v\) is a normalized node probability. In the unweighted equal-amplitude case, \(Z_v\) counts suffix paths and these equations reduce to normalized path counting.

Complex edge phases are propagated by a survivor phase. Set \(\phi_{\rm root}=1\). For an incoming edge \(e=(v,\sigma,u)\), define \(\zeta_e=\phi_v w_e/|w_e|\). If \(e_0\) is the selected survivor at \(u\), set \(\phi_u=\zeta_{e_0}\) and use the normalized local coefficients
\begin{equation}
{
\gamma_e
=
\sqrt{\frac{N_e}{A_u}}\,
\frac{\zeta_e}{\phi_u}.
}
\end{equation}
The survivor coefficient is then positive real, while the remaining \(\gamma_e\) carry the relative phases needed by the Givens rotations.

At node \(u\), each reverse local update acts on two incoming prefixes \(p_v\sigma\) and \(p_{v'}\sigma'\) and is tensored with the identity on the suffix below \(u\). These prefixes may differ on \(O(n)\) qubits, so each Givens rotation has an ancilla-free implementation with \(O(n)\) depth and gates by Sec.~\ref{subsec:sm-controlled-updates}. The sharing is accounted for because all incoming edges to the same diagram node are compressed together.

For the binary divisibility example below, the \(n=3\) target is
\[
{
|\psi_3\rangle
=
\frac{1}{\sqrt3}
\left(|000\rangle+|011\rangle+|110\rangle\right) .
}
\]
We number the modules in preparation order. Starting from \(000\), \(U_1\) opens \(011\) with \(\theta_1=\arcsin\sqrt{1/3}\) under the zero control on \(x_1\). Then \(U_2\) opens \(110\) from the remaining \(000\) amplitude with \(\theta_2=\pi/4\). It acts on \(x_1,x_2\) and as the identity on \(x_3\), so \(011\) is unchanged. The state is \(|\psi_3\rangle\), as shown in Fig.~\ref{fig:sm-decision-diagram-controlled-circuit}.

\begin{figure}[t]
\centering
\begin{tikzpicture}[
wire/.style={line width=0.45pt},
fullgate/.style={draw, rounded corners=2pt, fill=white, minimum width=2.92cm, minimum height=1.76cm, align=center, font=\footnotesize},
twobitgate/.style={draw, rounded corners=2pt, fill=white, minimum width=2.92cm, minimum height=1.04cm, align=center, font=\footnotesize},
lab/.style={font=\footnotesize, anchor=east},
note/.style={font=\footnotesize, align=center}
]
\def\xstart{-0.45}
\def\xend{7.57}

\foreach \y/\name in {0/\(x_1\),-0.62/\(x_2\),-1.24/\(x_3\)} {
  \node[lab] at (\xstart-0.25,\y) {\name};
  \draw[wire] (\xstart,\y) -- (\xend,\y);
}

\draw[dashed, rounded corners=2pt] (-0.20,0.34) rectangle (0.72,-1.58);
\node[note] at (0.26,0.59) {seed \(=|000\rangle\)};

\node[fullgate] at (2.58,-0.62)
{\(U_1\)\\[0.22em]
\(\tau_{(\mu_2,\mu_3)=(+1,+1)}^{x_1=0}(\theta_1)\)\\[0.22em]
\(000\to011\)};
\node[note] at (2.58,-1.91)
{\(\theta_1=\arcsin\sqrt{1/3}\)};

\node[twobitgate] at (5.90,-0.31)
{\(U_2\)\\[0.22em]
\(\tau_{(\mu_1,\mu_2)=(+1,+1)}(\theta_2)\)\\[0.22em]
\(00\to11\)};
\node[note] at (5.90,-1.91)
{\(\theta_2=\pi/4\)};

\node[note, anchor=west] at (7.61,-0.62)
{output\\[-0.1em]
\(\frac{1}{\sqrt3}\bigl(|000\rangle+|011\rangle+|110\rangle\bigr)\)};

\end{tikzpicture}
\caption{Controlled local-update circuit for the three-bit binary divisibility decision diagram. Starting from \(|000\rangle\), \(U_1\) is controlled by \(x_1=0\) and updates \(x_2,x_3\). The module \(U_2\) then updates \(x_1,x_2\) while leaving \(x_3\) unchanged. Together, they prepare the uniform state \((|000\rangle+|011\rangle+|110\rangle)/\sqrt3\). Both angles follow from the reverse compression plan.}
\label{fig:sm-decision-diagram-controlled-circuit}
\end{figure}

\begin{corollary}[Decision diagram resources]
For a compressed decision diagram \(Q=(V,E)\), the construction gives an ancilla-free preparation with depth \(O(|E|n)\) and gate count \(O(|E|n)\).
\end{corollary}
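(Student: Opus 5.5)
The proof has two parts. First we show that the reverse-compression schedule prepares $|\psi_Q\rangle$ exactly. Then we bound the circuit cost by counting Givens rotations per edge.

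For correctness, we induct over layers from the terminal back to the root. The claim is that, after all rotations at layers $\ge j$ have been applied, the register holds $\sum_{u\text{ in layer }j}\sqrt{A_u}\,\phi_u\,|p_u\rangle\otimes|\chi_u\rangle$. Here $p_u$ is the survivor prefix representative of $u$, and $|\chi_u\rangle=Z_u^{-1/2}\sum_{p\in\mathcal P(u)}W(p)|x(p)\rangle$ is the normalized suffix state.

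At node $u$, the $\deg^-(u)-1$ Givens rotations act only on the prefix qubits and as the identity on the suffix. They expand $|p_u\rangle$ into $\sum_{e=(v,\sigma,u)}\gamma_e|p_v\sigma\rangle$. Since $\sum_e|\gamma_e|^2=\sum_e N_e/A_u=1$, this is a normalized expansion, realized by the chained angles $\arcsin\sqrt{R_{j+1}/R_j}$ together with phase gates, as in the sparse-support construction.

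Next we substitute $N_e=A_v|w_e|^2Z_u/Z_v$ and $\zeta_e=\phi_v w_e/|w_e|$. This turns the amplitude on $|p_v\sigma\rangle\otimes|\chi_u\rangle$ into $\sqrt{A_v}\,\phi_v\,w_e\sqrt{Z_u/Z_v}$. Summing over the outgoing edges of $v$ and using the disjoint decomposition of $\Omega_v$ gives $\sqrt{A_v}\,\phi_v\,|p_v\rangle\otimes|\chi_v\rangle$. This is the invariant one layer up. At the root we have $A_{\rm root}=1$ and $\phi_{\rm root}=1$, which yields $|\psi_Q\rangle$.

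The main obstacle is isolation. Distinct representative prefixes may coincide on the qubits a rotation touches, so we must check that each rotation moves amplitude only within its intended pair. Following Sec.~\ref{subsec:sm-controlled-updates}, each rotation is controlled on all prefix bits where the two representatives agree. Representatives of distinct nodes in the same layer are distinct prefixes, and determinism makes the prefix-to-node map injective on representatives. The full pattern on the prefix qubits therefore selects exactly the pair, so every other occupied branch is left unchanged. This step needs the most care. The seed, the terminal representative, is a single basis state prepared by $X$ gates.

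For the cost, each node $u$ uses $\deg^-(u)-1$ rotations, so the total is at most $\sum_u\deg^-(u)\le|E|$ rotations. Each rotation acts on at most $n$ qubits. The ancilla-free linear-width synthesis in Sec.~\ref{subsec:sm-controlled-updates} gives it $O(n)$ depth and gates, and any phase corrections are diagonal and add $O(n)$ per edge. Executing the rotations sequentially gives depth and gate count $O(|E|n)$, with no ancillas. Finally, $Z_v$, $A_v$, $\phi_v$ and $\gamma_e$ are computed by one bottom-up and one top-down pass over $Q$, with no enumeration of accepted strings.
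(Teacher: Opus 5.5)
Your proposal is correct, and its resource count is the paper's: at most $\sum_{w\neq\mathrm{root}}(\deg^-(w)-1)\le|E|$ Givens rotations, each with $O(n)$ ancilla-free depth and gates, executed sequentially. The paper's proof consists only of this count (phrased in the compression direction and then reversed), whereas you also verify exactness through the invariant $\sqrt{A_u}\,\phi_u\,|p_u\rangle\otimes|\chi_u\rangle$ and check that full prefix controls isolate each pair, which the paper only sketches in the text before the corollary.
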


\begin{proof}
The count is over compression operations on incoming edges, not over root-to-terminal paths. At each diagram node \(w\), choose one incoming representative as the survivor and compress the other incoming representatives into it. The number of Givens rotations is at most
\begin{equation}
\sum_{w\ne {\rm root}}
\max\{\deg^-(w)-1,0\}
\le
|E| .
\end{equation}
Each rotation costs \(O(n)\) depth and gates in the worst case under the ancilla-free implementation of Sec.~\ref{subsec:sm-controlled-updates}. A fully sequential schedule gives \(O(|E|n)\) for both depth and gate count. Reversing the compression circuit gives the preparation circuit with the same asymptotic resources.
\end{proof}

The resource benchmark uses an ordered binary decision diagram of width three for binary divisibility by three. For \(x=x_1\cdots x_n\), let \(\operatorname{val}_2(x)=\sum_{j=1}^n2^{n-j}x_j\) and define
\begin{equation}
{
\Omega_n
=
\left\{
x\in\{0,1\}^n
\middle|
\operatorname{val}_2(x)\equiv0\pmod 3
\right\}.
}
\end{equation}
The support contains \(|\Omega_n|=\lceil2^n/3\rceil\) strings. Its uniform state is
\begin{equation}
{
|\psi_n\rangle
=
\frac{1}{\sqrt{|\Omega_n|}}
\sum_{x\in\Omega_n}|x\rangle .
}
\end{equation}

The diagram reads the bits from most to least significant. After \(j\) bits, a node records the prefix remainder \(r_j\in\{0,1,2\}\). An edge labeled \(b\) updates it as \(r_{j+1}=(2r_j+b)\bmod3\). The root has remainder zero, and the only accepting terminal has remainder zero. Removing unreachable and non-live nodes gives layer widths \(1,2,3,\ldots,3,2,1\). Hence, for \(n\ge4\), the numbers of live nodes and edges are
\begin{equation}
{
V(n)=1+2+3(n-3)+2+1=3n-3,
\qquad
E(n)=2+4+6(n-4)+4+2=6n-12.
}
\end{equation}

For a live remainder node in layer \(j\), choose the representative prefix
\begin{equation}
{
p_{j,0}=0^j,
\qquad
p_{j,1}=0^{j-1}1,
\qquad
p_{j,2}=0^{j-2}10 .
}
\end{equation}
The last representative is used only for \(j\ge2\). Let \(C_{j,t}\) be the number of length-\(j\) prefixes with remainder \(t\). These counts satisfy \(C_{0,0}=1\), \(C_{0,1}=C_{0,2}=0\) and
\begin{equation}
{
C_{j,t}
=
\sum_{\substack{r\in\{0,1,2\},\ b\in\{0,1\}\\
2r+b\equiv t\ ({\rm mod}\ 3)}}
C_{j-1,r} .
}
\end{equation}
Suppose a live node with remainder \(t\) in layer \(j\) has two incoming representatives. Let \(s\) be the parent remainder of the selected survivor and let \(a\) be the other parent remainder. Reversing the compression rotation opens the alternate representative with angle
\begin{equation}
{
\theta_{j,t}
=
\arcsin
\sqrt{
\frac{C_{j-1,a}}
{C_{j-1,s}+C_{j-1,a}}
}.
}
\end{equation}

For \(3\le j\le n-2\), all three remainder nodes and all six incoming edges are live. Their incoming representative pairs are
\begin{equation}
{
\begin{aligned}
0^{j-3}000&\longleftrightarrow0^{j-3}011,\\
0^{j-3}001&\longleftrightarrow0^{j-3}100,\\
0^{j-3}010&\longleftrightarrow0^{j-3}101.
\end{aligned}
}
\end{equation}
The first pair is a \(00\leftrightarrow11\) update controlled on the preceding zero. The other two pairs are three-qubit local updates. Layer \(j=2\) uses only \(00\leftrightarrow11\). The penultimate and terminal layers use the first two three-bit pairs and the first pair, respectively.

The updates do not interfere. Before layer \(j\) is compressed, every occupied source node in layer \(j-1\) has its representative prefix. Its incoming representatives at layer \(j\) have the patterns listed above, with every earlier bit equal to zero. The three pairs are disjoint. The zero control on the first pair prevents it from acting on the second pair, while the other updates act on all three displayed bits. Each update is the identity on the suffix below the common target node. It combines the two amplitudes entering that node without changing another occupied branch. The boundary layers use the subsets specified above. The angle changes their coefficients from \(\sqrt{C_{j-1,s}}\) and \(\sqrt{C_{j-1,a}}\) to \(\sqrt{C_{j,t}}\) on the survivor. Induction over the layers compresses \(|\psi_n\rangle\) to \(|0^n\rangle\). Reversing the circuit prepares the uniform state over \(\Omega_n\).

For \(n\ge4\), every nonroot live node has at least one incoming edge. The exact number of local updates is
\begin{equation}
{
U(n)
=
\sum_{w\ne{\rm root}}\bigl(\deg^-(w)-1\bigr)
=
E(n)-V(n)+1
=
3n-8 .
}
\end{equation}
Each update acts on at most three consecutive qubits and has constant depth and gate count under the controlled local-update implementation. At most three updates occur in one diagram layer. The ancilla-free preparation has depth \(O(n)\) and gate count \(O(n)\).

Sparse support states also fit the general decision diagram model above.
A state with \(S\) nonzero basis amplitudes has a trie with \(S\) root-to-terminal paths and \(O(Sn)\) edges, while shared downstream structure can reduce the diagram further. The direct sparse support construction above gives the sharper \(O(Sn)\) bound when one only uses the support list. The decision diagram bound is not intended to dominate the support list bound for an uncompressed trie. It is useful when repeated downstream structure makes \(|E|\ll Sn\).

\subsection{Hierarchical low-rank states}
\label{subsec:sm-low-rank}

A hierarchical low-rank state is specified by recursive Schmidt decompositions on a balanced binary tree. Related low-rank methods~\cite{araujo2024lowrank} and matrix-product-state methods~\cite{malz2024mps} exploit bounded ranks across bipartitions. This description includes low-entanglement many-body states such as GHZ-type families~\cite{jin2025ghz,cao2024ghz}. At each block, the split is written in Schmidt form
\begin{equation}
|\psi\rangle
=
\sum_{s=1}^{r}
\lambda_s\,
|\psi_s^{(L)}\rangle
|\psi_s^{(R)}\rangle .
\end{equation}
Here \(s=1,\ldots,r\) labels the Schmidt sectors, \(r\) is the Schmidt rank and \(\lambda_s\) are complex sector amplitudes whose magnitudes are the Schmidt coefficients. The child states \(|\psi_s^{(L)}\rangle\) and \(|\psi_s^{(R)}\rangle\) live on the left and right child blocks. Orthonormality on each side makes \(|\lambda_s|^2\) the split weights. We consider efficiently specified balanced trees with every bond rank bounded by \(r=O(1)\). Their sector coefficients are efficiently computable, and following a nonzero sector recursively identifies one computational basis representative in each sector support. Each block carries at most \(r\) sector states, each with at most \(r\) child sectors. At every split, the sector states on the same child block have pairwise disjoint supports in the computational basis.

Using these representatives, the update at a block opens the product representatives of its Schmidt sectors in a fixed order, with split weights \(|\lambda_s|^2\), and then refines the two child representatives recursively. Pairwise disjoint child supports prevent recursive refinements in different sectors from interfering. The controlled local-update module of Sec.~\ref{subsec:sm-controlled-updates} isolates each pair using only qubits in the current block.

The magnitudes \(|\lambda_s|\) set the rotations through the split weights \(|\lambda_s|^2\). The phase of each \(\lambda_s\) is written by a complex two-level Givens update. Relative phases within the supplied child-sector states are propagated recursively by the same updates. An update that opens a sector with relative phase \(\phi\) has first column \((\cos\theta,e^{i\phi}\sin\theta)\) on its representative pair. A phase attached to a terminal sector is absorbed into its parent update. These phase-aware updates use the same support and controls as the real rotations and do not change the asymptotic resource bounds. For example, \((|00\rangle+i|11\rangle)/\sqrt2\) is obtained from \(|00\rangle\) with \(\theta=\pi/4\) and \(\phi=\pi/2\).

A four-qubit example makes the sector isolation explicit.
Split the qubits as \(12|34\) and define
\[
|\chi\rangle
=
\sqrt{\tfrac35}|00\rangle_{12}|00\rangle_{34}
+\sqrt{\tfrac25}|\phi_L\rangle_{12}|\phi_R\rangle_{34},
\]
where \(|\phi_L\rangle=(|01\rangle+|10\rangle)/\sqrt2\) and \(|\phi_R\rangle=(\sqrt3|01\rangle+|10\rangle)/2\). Use \(0000\) as the seed and \(0101\) as the representative of the second sector. The root update transfers weight \(2/5\), so \(\theta_{\rm root}=\arcsin\sqrt{2/5}\).

Inside the opened sector, the child refinements are
\[
|01\rangle_{12}
\longmapsto
\frac{|01\rangle_{12}+|10\rangle_{12}}{\sqrt2},
\quad
|01\rangle_{34}
\longmapsto
\sqrt{\tfrac34}|01\rangle_{34}
+\sqrt{\tfrac14}|10\rangle_{34}.
\]
Thus \(\theta_L=\arcsin\sqrt{1/2}\) and \(\theta_R=\arcsin\sqrt{1/4}\). The bare \(01\to10\) update on the left child leaves the first sector unchanged.
The right update leaves the first sector untouched because its right child remains \(00\). The final state is
\[
\sqrt{\tfrac35}|0000\rangle
+\sqrt{\tfrac{3}{20}}\bigl(|0101\rangle+|1001\rangle\bigr)
+\sqrt{\tfrac{1}{20}}\bigl(|0110\rangle+|1010\rangle\bigr),
\]
as shown in Fig.~\ref{fig:sm-low-rank-controlled-circuit}.

The uniform even-parity state used for the resource experiment satisfies this condition because its even and odd child sectors have disjoint supports in the computational basis at every split. It is prepared by the same blockwise local updates, with at most two parity sectors in each block.

\begin{figure}[t]
\centering
\begin{tikzpicture}[
wire/.style={line width=0.45pt},
rootgate/.style={draw, rounded corners=2pt, fill=white, minimum width=2.95cm, minimum height=2.40cm, align=center, font=\footnotesize},
childgate/.style={draw, rounded corners=2pt, fill=white, minimum width=2.92cm, minimum height=2.40cm, align=center, font=\footnotesize},
rightgate/.style={draw, rounded corners=2pt, fill=white, minimum width=2.92cm, minimum height=1.10cm, align=center, font=\footnotesize},
lab/.style={font=\footnotesize, anchor=east},
note/.style={font=\footnotesize, align=center}
]
\def\xstart{-0.45}
\def\xend{10.85}

\foreach \y/\name in {0/\(x_1\),-0.58/\(x_2\),-1.16/\(x_3\),-1.74/\(x_4\)} {
  \node[lab] at (\xstart-0.25,\y) {\name};
  \draw[wire] (\xstart,\y) -- (\xend,\y);
}

\draw[dashed, rounded corners=2pt] (-0.20,0.33) rectangle (0.60,-2.07);
\node[note] at (0.20,0.58) {seed \(=|0000\rangle\)};

\node[rootgate] (uroot) at (2.48,-0.87)
{\(U_{\rm root}\)\\[0.25em]
\(\tau_{(\mu_2,\mu_4)=(+1,+1)}(\theta_{\rm root})\)\\[0.25em]
\(0000\to0101\)};
\node[note] at (2.48,-2.40)
{\(\theta_{\rm root}=\arcsin\sqrt{2/5}\)};

\node[childgate] (ul) at (5.81,-0.87)
{\(U_L\)\\[0.25em]
\(\tau_{(\mu_1,\mu_2)=(+1,-1)}(\theta_L)\)\\[0.25em]
\(0101\to1001\)};
\node[note] at (5.81,-2.40)
{\(\theta_L=\arcsin\sqrt{1/2}\)};

\node[rightgate] (ur) at (9.14,-1.45)
{\(U_R\)\\[0.25em]
\(\tau_{(\mu_3,\mu_4)=(+1,-1)}(\theta_R)\)\\[0.25em]
\(01\to10\)};
\node[note] at (9.14,-2.40)
{\(\theta_R=\arcsin\sqrt{1/4}\)};

\node[note, anchor=west] at (10.89,-0.87)
{output\\[-0.1em]
\(\sqrt{\tfrac35}|0000\rangle+\sqrt{\tfrac{3}{20}}(|0101\rangle+|1001\rangle)\)\\[-0.1em]
\(\hspace{0.6em}+\sqrt{\tfrac{1}{20}}(|0110\rangle+|1010\rangle)\)};

\end{tikzpicture}
\caption{Controlled local-update circuit for the hierarchical low-rank example. The first Schmidt sector is \(|00\rangle_{12}|00\rangle_{34}\), and the second is \(|\phi_L\rangle_{12}|\phi_R\rangle_{34}\). Starting from \(|0000\rangle\), the root update opens \(|0101\rangle\) as the representative of the second sector. The modules \(U_L\) and \(U_R\) refine its left and right child factors into \(|\phi_L\rangle\) and \(|\phi_R\rangle\). The displayed angles match the Schmidt and child weights in the text.}
\label{fig:sm-low-rank-controlled-circuit}
\end{figure}

\begin{corollary}[Hierarchical low-rank resources]
For efficiently specified balanced Schmidt trees with \(r=O(1)\) whose sector states on each child block have pairwise disjoint supports in the computational basis at every split, the blockwise construction gives an ancilla-free preparation with depth \(O(n)\) and gate count \(O(n\log n)\).
\end{corollary}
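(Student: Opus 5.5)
The proof has three parts: count the updates in each block, bound the cost of each update, and then sum over the levels of the balanced Schmidt tree, in the same way as in the proof of Theorem~\ref{thm:layerwise-preparation}.

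\emph{Per-block schedule.} Fix a block \(I\) at level \(t\), with \(|I|=m\le\rho^t n\). The block carries at most \(r\) sector states, and each has at most \(r\) Schmidt sectors in its split. So the expansion of every sector state uses at most \(r-1\) two-level updates between product representatives. Over all sector states on \(I\) this gives \(\Lambda(I)\le r(r-1)=O(1)\), hence \(d=0\).

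\emph{Isolation and correctness.} The child-sector supports are pairwise disjoint in the computational basis. The representatives are chosen inside those supports. I would argue that the generic control choice \(C=I\setminus K\), \(\eta=x_C\) from Sec.~\ref{subsec:sm-controlled-updates} isolates the intended pair among all currently occupied basis states on \(I\), since it matches on all of \(I\). This is the step that needs care. Earlier refinements in other sectors have already spread amplitude over their supports, and we must ensure no occupied state coincides with a representative pair. The ordering resolves this: a sector is refined only after its parent update has opened its representative, and disjointness rules out collisions with other sectors' supports. The amplitude-and-phase bookkeeping follows the ordered-chain argument with \(\theta_j=\arcsin\sqrt{R_{j+1}/R_j}\) using the weights \(|\lambda_s|^2\). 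Phases enter through the complex Givens form with first column \((\cos\theta,e^{i\phi}\sin\theta)\), which has the same support and controls. Induction over the levels then gives the target state, as in Eq.~\eqref{eq:sm-unrolled-distribution} with Schmidt weights in place of split probabilities.

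\emph{Resources.} Each update on \(I\) costs \(O(m)\) depth and gates ancilla-free. So block \(I\) costs \(O(m)\) depth and \(O(m)\) gates, since there are only \(O(1)\) updates. Blocks at the same level are disjoint, so we can run them in parallel. The depth is therefore \(\sum_t O(\rho^t n)=O(n)\). For gates, each level has \(\sum_{I}|I|\le n\), giving \(O(n)\) per level over \(O(\log n)\) levels, which is \(O(n\log n)\); this is exactly the \(d=0\) case of the theorem's sum. Adding the \(O(n)\) cost of preparing the seed representative does not change these bounds.

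The main obstacle is the isolation argument. I would state explicitly the invariant that, before a level-\(t\) update on \(I\), every occupied local basis state on \(I\) is either a representative of some opened sector or lies in the support of an already refined sector disjoint from it. From this invariant, the full-pattern controls select exactly the intended pair.
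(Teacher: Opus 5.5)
Your proposal is correct and follows essentially the same route as the paper. Both arguments use $O(r^2)=O(1)$ updates per block, each with ancilla-free cost $O(|I|)$ using controls inside the block, with disjoint sector supports ensuring isolation; both run disjoint blocks in a layer in parallel, sum depths geometrically to $O(n)$, and count $O(n)$ gates per layer over $O(\log n)$ layers to get $O(n\log n)$.

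Your explicit isolation invariant spells out what the paper states in one line. One small correction: in the level-by-level schedule, only product representatives are occupied on \(I\) when its updates run, not full supports. Your invariant still covers this case, since those representatives lie inside the pairwise disjoint sector supports.
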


\begin{proof}
Consider a recursion layer whose blocks have size \(m_j=\Theta(n2^{-j})\), up to rounding. Each block carries at most \(r\) sector states, each with at most \(r\) child sectors, and needs \(O(r^2)=O(1)\) local updates. Disjoint sector supports keep every update within its intended sector. Each update may act on \(O(m_j)\) qubits and has ancilla-free depth and gate count \(O(m_j)\) by Sec.~\ref{subsec:sm-controlled-updates}. Blocks on the same layer are disjoint, so the layer depth is \(O(m_j)\), while its total gate count is \(O(2^jm_j)=O(n)\). Summing the layer depths gives \(O(n)\), and summing the gate counts over \(O(\log n)\) layers gives \(O(n\log n)\).
\end{proof}

\section{UFLP quantum search and speed-up proof}
\label{sec:sm-uflp}

The UFLP construction uses an all-open intermediate representation for state preparation.
Facility opening bits are closed only after the superposition over demand assignments has been prepared.
The search theorem then uses a smaller register over sets of open facilities and a reversible cost evaluation.

\subsection{Inequality encoding and all-open intermediate representation}
\label{subsec:sm-uflp-open}

For \(a\) facilities and \(b\) demands, let \([a]=\{1,\ldots,a\}\) and \([b]=\{1,\ldots,b\}\). The variable \(x_{ij}=1\) means that demand \(i\) is assigned to facility \(j\), and \(y_j=1\) means that facility \(j\) is open. The hard constraints are
\begin{equation}
\begin{aligned}
\sum_{j=1}^{a}x_{ij} &= 1 && \forall\, i\in[b],\\
x_{ij} &\le y_j && \forall\, i\in[b],\ \forall\, j\in[a],\\
\sum_{j=1}^{a}y_j &\le K, && 1\le K\le a.
\end{aligned}
\end{equation}
The opening limit \(K\) bounds the number of facilities that may be open, and \(K=a\) recovers standard UFLP. With nonnegative opening costs \(f_j\ge0\) and service costs \(d_{ij}\ge0\), the objective is \(\min \sum_j f_j y_j+\sum_i\sum_j d_{ij}x_{ij}\). The formulation uses \(a+ab\) problem bits but contains redundant bit strings because unused facilities may remain open. Closing unused facilities never increases the objective. For the recursive stage, we use the temporary all-open representation
\begin{equation}
\mathcal F_{\rm open}
=
\left\{
(x,y):
\sum_{j=1}^{a}x_{ij}=1\ \forall i,\ 
y_j=1\ \forall j
\right\}.
\end{equation}
This set contains exactly one bit string for each assignment matrix \(x\), so \(|\mathcal F_{\rm open}|=a^b\). It is an intermediate circuit representation and need not satisfy the opening limit when \(K<a\).

The closing pass removes the redundant opening bits. For an assignment \(x\), define \(\widetilde y_j(x)=\bigvee_i x_{ij}\), where \(\bigvee_i x_{ij}=1\) exactly when \(x_{ij}=1\) for some \(i\). The resulting set of demand assignment strings is
\begin{equation}
\mathcal F_{\rm asg}
=
\left\{
\bigl(x,\widetilde y(x)\bigr):
\sum_{j=1}^{a}x_{ij}=1\ \forall i
\right\}.
\end{equation}
It also has size \(a^b\). The full set satisfying the assignment constraints and the implications \(x_{ij}\le y_j\) is larger because unused facilities may remain open, but \(\mathcal F_{\rm open}\) and \(\mathcal F_{\rm asg}\) encode the same assignment information. Strings in \(\mathcal F_{\rm asg}\) that use more than \(K\) facilities are rejected by the marking oracle. The recursive stage works with assignment rows only, and the implications \(x_{ij}\le y_j\) are restored deterministically at the end.

Alg.~\ref{alg:uflp-schedule} records the recursive schedule and closing pass. The balanced recursion on \([a]\) defines the subsets \(S\). Let \(e_j\in\{0,1\}^a\) be defined by \((e_j)_r=1\) for \(r=j\) and \((e_j)_r=0\) otherwise. For demand row \(i\), let \(W_i(S)\) be the target weight on facilities in \(S\), and choose representative \(c_{i,S}=e_{\max(S)}\). For the uniform target, \(W_i(S)=|S|\). For a target assignment \(t\), \(W_i(S)=1\) if \(t_i\in S\) and \(0\) otherwise.

\begin{algorithm}[H]
\caption{Recursive preparation schedule for demand assignment strings. The circuit prepares one assignment per demand with all facilities initially open, then closes every facility that serves no demand.}
\label{alg:uflp-schedule}
\begin{algorithmic}[1]
\Statex \textbf{Input.} Facilities \([a]\), demands \([b]\), and either the uniform target or a target assignment \(t=(t_1,\ldots,t_b)\)
\Statex \textbf{Registers.} Assignment bits \(x\), facility opening bits \(y\)
\State Initialize each assignment row \(x_i\gets c_{i,[a]}=e_a\) and set \(y_j\gets1\)
\For{each layer of the balanced recursion on facility subsets}
    \State Execute the row/subset updates below in parallel
    \For{each demand row \(i\) and each subset \(S\) with \(W_i(S)>0\)}
        \State Split \(S\) into child subsets \(S_L\sqcup S_R\), ordered so that \(c_{i,S}=c_{i,S_R}\)
        \State Set \(\sin^2\theta_{i,S}\gets W_i(S_L)/W_i(S)\), where \(W_i(S)=W_i(S_L)+W_i(S_R)\)
        \State Apply the row-\(i\) reassignment update from \(c_{i,S_R}\) to \(c_{i,S_L}\) with angle \(\theta_{i,S}\)
    \EndFor
\EndFor
\For{each facility \(j\in[a]\) in parallel}
    \State Apply the controlled local update \(1\to0\) on \(y_j\), conditioned on \(x_{1j}=\cdots=x_{bj}=0\)
\EndFor
\State Output the demand assignment state with unused facilities closed
\end{algorithmic}
\end{algorithm}

\subsection{Recursive assignment construction}
\label{subsec:sm-uflp-recursion}

The seed assigns every demand to the facility with the largest index and keeps all facilities open.

Fix a demand \(i\). For a facility subset \(S\subseteq[a]\) with \(W_i(S)>0\), the residual constraint requires demand \(i\) to choose one facility in \(S\). The local feasible set is \(F_{i,S}=\{e_j:j\in S\}\), and the representative rule chooses \(c_{i,S}=e_{\max(S)}\).

Split \(S\) into two nearly balanced subsets \(S_L\sqcup S_R=S\), ordered so that \(\max(S)\in S_R\). The recursive decomposition is \(F_{i,S}=F_{i,S_L}\sqcup F_{i,S_R}\). The two child representatives are \(e_{\max(S_L)}\) and \(e_{\max(S_R)}\), and \(e_{\max(S_R)}=e_{\max(S)}\) is the representative already present before this split.

For the uniform superposition over demand assignments, the split weights are \(|S_L|\) and \(|S_R|\), so the angle rule gives \(\sin^2\theta=|S_L|/|S|\).
The source representative \(e_{\max(S_R)}\) keeps probability \(|S_R|/|S|\), and the left representative receives probability \(|S_L|/|S|\).
For a target assignment, the split assigns probability one to the child side containing the target facility.

The nontrivial local reassignment toggles only the two assignment bits attached to the child representatives, \(x_{i,\max(S_R)}:1\to0\) and \(x_{i,\max(S_L)}:0\to1\).
All \(y\)-bits remain fixed at \(1\). Both before and after the update, the inequality \(x_{ij}\le y_j\) holds for every \(j\), so the recursive stage never leaves the all-open intermediate representation.

\begin{lemma}[Locality of UFLP reassignment]
Each recursive reassignment update has constant support inside one demand row and preserves the all-open intermediate representation.
\end{lemma}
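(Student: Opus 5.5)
The argument is a direct inspection of a single reassignment step. Fix a demand row \(i\) and a facility subset \(S\) with \(W_i(S)>0\), split as \(S=S_L\sqcup S_R\) with \(\max(S)\in S_R\). The source representative on row \(i\) is \(e_{\max(S_R)}=e_{\max(S)}\) and the target is \(e_{\max(S_L)}\). These two one-hot strings differ in exactly the positions \(\max(S_R)\) and \(\max(S_L)\), which are distinct because \(S_L\) and \(S_R\) are disjoint. So the oriented support is \(K=\{(i,\max(S_R)),(i,\max(S_L))\}\), with signed pattern \(\mu=(-1,+1)\). I would implement the update with Algorithm~\ref{alg:controlled-local-update} on this two-qubit support, giving \(|K|=2\) inside row \(i\).

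Next I would argue that controls are either unnecessary or of constant size. Within row \(i\), and before the split of \(S\) is processed, the populated local states are one-hot vectors \(e_{\max(S')}\) for the blocks \(S'\) at the current layer. These blocks are disjoint. Hence the pattern \(10\) on the ordered pair \((\max(S_R),\max(S_L))\) occurs only in the occupied representative \(e_{\max(S)}\). The pattern \(01\) cannot occur before the update, because \(\max(S_L)\) is not the maximum of any current block.

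Every other occupied one-hot vector is \(00\) on \(K\), and the two-level generator \(H_{\rm up}(\mu)\) annihilates \(|00\rangle_K\). The update therefore acts as the identity on all other branches without any controls, and it does not touch other rows. This also shows that updates for different subsets \(S\) at the same layer, and for different rows, commute and can run in parallel.

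Finally, I would check that the all-open representation is preserved. The update touches only \(x\)-bits, so every \(y_j\) stays equal to \(1\), and the inequality \(x_{ij}\le y_j\) holds trivially on both endpoints. Each endpoint is a one-hot row, so \(\sum_j x_{ij}=1\) is preserved, and the state stays in \(\mathcal F_{\rm open}\).

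The only delicate step is the isolation claim: the ``no controls'' reading relies on the layer-by-layer invariant that occupied rows are exactly the representatives of the current disjoint blocks. I would establish this invariant by induction on the recursion layers, following Lemma~\ref{lem:structural-recursion}.
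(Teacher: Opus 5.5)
Your proposal is correct and follows the paper's own argument: the update toggles only \(x_{i,\max(S_R)}\) and \(x_{i,\max(S_L)}\) in row \(i\) and keeps the row one-hot with every \(y_j=1\), while disjoint changing pairs within a layer (and disjoint registers across rows) allow the updates to run in parallel. Your explicit isolation check, that every other occupied one-hot row reads \(00\) on the support and is annihilated by \(H_{\rm up}(\mu)\), spells out what the paper leaves implicit when it says no row-spanning controls are needed.
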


\begin{proof}
For a split \(S=S_L\sqcup S_R\), the nontrivial transition acts only on the two bits \(x_{i,\max(S_R)}\) and \(x_{i,\max(S_L)}\). With all facility bits fixed to \(y_j=1\), the two relevant local patterns are
\((1,0)\) before the update and \((0,1)\) after it. The controlled update between the source and target patterns acts only on this pair. The row constraint \(\sum_j x_{ij}=1\) is preserved, since one assignment bit is turned off while another is turned on. Since all facility bits remain fixed at \(y_j=1\), every inequality \(x_{ij}\le y_j\) continues to hold. The update support lies in row \(i\). At a fixed recursion layer, different subsets in the same row have disjoint changing pairs \((\max(S_R),\max(S_L))\), and different demand rows use disjoint \(x\)-registers. The updates in the same layer remain parallel without controls spanning the row to select the subset.
\end{proof}

The construction is applied independently to every demand row. At a fixed recursion layer, the facility subsets \(S\) being split are disjoint within each row, and different rows use disjoint \(x\)-registers. The local reassignments can be scheduled in parallel.

\subsection{Resource bounds and closing pass}
\label{subsec:sm-uflp-resources}

For one demand row, the balanced recursion on \(a\) facilities has \(O(\log a)\) layers and \(a-1\) nonterminal splits. Each split uses one reassignment with constant support. One row costs depth \(O(\log a)\) and gate count \(O(a)\). Since the \(b\) rows are processed in parallel layer by layer, the all-open recursive stage has depth \(O(\log a)\) and gate count \(O(ab)\).

The closing pass acts independently on each facility column. Since the recursive stage starts from \(y_j=1\), a controlled \(1\to0\) update conditioned on \(x_{1j}=\cdots=x_{bj}=0\) sets \(y_j\) to the value \(\widetilde y_j(x)\) defined above.
Column costs multiply by \(a\), while the depth equals that of one column when the listed workspace is available for every column in parallel.

For one column, the resource of this controlled update depends on the available clean workspace.
The implementation without clean ancillas used here gives depth \(O(b)\) and gate count \(O(b)\) for the \(C^bX\) gate with negative controls~\cite{huang2024compiling}.
With \(\Theta(b)\) clean ancilla qubits per column, a reversible balanced OR tree gives depth \(O(\log b)\) and gate count \(O(b)\).

The total UFLP scalings are
\begin{center}
\small
\setlength{\tabcolsep}{8pt}
\renewcommand{\arraystretch}{1.18}
\begin{tabular}{@{}ccc@{}}
\toprule
closing pass implementation & gate count & depth\\
\midrule
without clean ancillas & \(O(ab)\) & \(O(\log a+b)\)\\
\(\Theta(b)\) clean ancilla per column & \(O(ab)\) & \(O(\log a+\log b)\)\\
\bottomrule
\end{tabular}
\end{center}

\begin{lemma}[Deterministic closing pass]
The closing pass maps \(\mathcal F_{\rm open}\) bijectively to \(\mathcal F_{\rm asg}\). The listed regimes in the table above are achieved by the corresponding implementation of the zero test for each column.
\end{lemma}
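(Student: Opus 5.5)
The plan is to treat the closing pass as a classical reversible map on computational basis states that acts as a permutation. I would first check that it sends each element of \(\mathcal F_{\rm open}\) to the corresponding element of \(\mathcal F_{\rm asg}\), then read off injectivity and surjectivity from the common parametrization by assignment matrices \(x\). The resource claims come separately from costing the column zero test.

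\textbf{Action on a single column.} Fix \((x,y)\in\mathcal F_{\rm open}\), so \(y_j=1\) for every \(j\). The closing update on column \(j\) is the controlled local update of Sec.~\ref{subsec:sm-controlled-updates} with support \(\{y_j\}\), sign \(\mu=-1\), angle \(\pi/2\) (a \(C^bX\) with negative controls, up to the sign convention on basis states, which is irrelevant here since amplitudes are carried over unchanged as \(\pm1\) can be absorbed, or directly as a bit flip), and controls \(x_{1j}=\cdots=x_{bj}=0\).

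\begin{itemize}
\item If column \(j\) is unused, \(\bigvee_i x_{ij}=0\), so \(y_j\) flips to \(0=\widetilde y_j(x)\).
\item If column \(j\) is used, the control fails and \(y_j=1=\widetilde y_j(x)\).
\end{itemize}

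The column updates touch disjoint targets \(y_j\) and only read the \(x\)-register, which no update modifies. They therefore commute, and the output is \((x,\widetilde y(x))\in\mathcal F_{\rm asg}\) regardless of order.

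\textbf{Bijectivity.} Both \(\mathcal F_{\rm open}\) and \(\mathcal F_{\rm asg}\) are in bijection with the set of assignment matrices \(x\) with unit row sums, via \(x\mapsto(x,\mathbf 1)\) and \(x\mapsto(x,\widetilde y(x))\). The map preserves \(x\) and realizes exactly the composition of these two bijections, so it is a bijection \(\mathcal F_{\rm open}\to\mathcal F_{\rm asg}\). Because it is a permutation of basis states, it carries the uniform superposition (or any prescribed basis state) prepared by Alg.~\ref{alg:uflp-schedule} to the corresponding state on \(\mathcal F_{\rm asg}\), with amplitudes unchanged.

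\textbf{Resources.} For the ancilla-free row, I would invoke the linear-width \(C^bX\) synthesis~\cite{huang2024compiling}, with negative controls converted to positive ones by \(2b\) \(X\) gates. This gives \(O(b)\) depth and gates per column.

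For the ancilla row, I would compute the OR of each column into \(\Theta(b)\) clean ancillas using a balanced tree of Toffoli-type gates. The steps are: compute the tree in depth \(O(\log b)\) with \(O(b)\) gates, apply a CNOT-controlled flip on \(y_j\), then uncompute the tree to restore the ancillas.

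Running all \(a\) columns in parallel on disjoint \(y_j\) (and, in the second regime, disjoint ancilla blocks) multiplies gate counts by \(a\) and leaves depth unchanged. This gives \(O(ab)\) gates with depth \(O(b)\) or \(O(\log b)\). Adding the \(O(\log a)\)-depth, \(O(ab)\)-gate recursive stage yields the table.

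\textbf{Main obstacle.} Parallelizing across columns is the step that needs care: the columns share no \(x\) bits as targets but each \(x_{ij}\) is read by only one column, so controls are also disjoint. The thing to confirm is that the chosen \(C^bX\) synthesis does not require borrowed qubits from other columns; if it does, I would instead schedule columns in a constant number of groups.
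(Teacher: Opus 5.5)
Your proposal is correct and follows essentially the same route as the paper: bijectivity comes from the closing pass fixing \(x\) and setting \(y=\widetilde y(x)\) deterministically, and the table follows by multiplying the per-column zero-test cost by \(a\) while running the disjoint columns in parallel. Your explicit check of each column's action, the commutation of the column updates, and the OR-tree construction spell out details the paper leaves implicit.
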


\begin{proof}
For each facility \(j\), the output bit \(\widetilde y_j(x)=\bigvee_i x_{ij}\) is uniquely determined by the assignment column \(x_{1j},\ldots,x_{bj}\). The map leaves \(x\) unchanged and sets \(y\) uniquely from \(x\), so it is a bijection from \(\mathcal F_{\rm open}\) to \(\mathcal F_{\rm asg}\). The controlled \(1\to0\) update acts independently on each facility column. Multiplying the gate count per column in the table by \(a\), while running disjoint columns at the same depth under the listed workspace convention, gives the total resource bounds.
\end{proof}

Corollary~\ref{cor:uflp-state-preparation} uses the row without clean ancillas, which gives depth \(O(\log a+b)\) and gate count \(O(ab)\).

\begin{proof}[Proof of Corollary~\ref{cor:uflp-state-preparation}]
The all-open representation separates the assignment recursion from the inequalities \(x_{ij}\le y_j\). The first step is to prepare the desired assignment distribution in \(\mathcal F_{\rm open}\), where all facility bits satisfy \(y_j=1\) and hence every inequality \(x_{ij}\le y_j\) holds automatically.

For one demand row \(i\), write \(|e_j\rangle_i\) for the assignment to facility \(j\). At a subset \(S\) with \(W_i(S)>0\), the representative is \(|e_{\max(S)}\rangle_i\). A split \(S=S_L\sqcup S_R\) uses split weights \(|S_L|\) and \(|S_R|\) for the uniform target, so the angle rule in Sec.~\ref{subsec:sm-split-weight-angles} sets the conditional probabilities to \(|S_L|/|S|\) and \(|S_R|/|S|\). Induction over the balanced splits gives
\begin{equation}
|e_{\max(S)}\rangle_i
\longmapsto
\frac{1}{\sqrt{|S|}}
\sum_{j\in S}|e_j\rangle_i .
\end{equation}
Taking \(S=[a]\) and applying the same layer schedule to all \(b\) rows gives
\begin{equation}
\bigotimes_{i=1}^{b}|e_a\rangle_i
\longmapsto
\frac{1}{\sqrt{a^b}}
\sum_{j_1,\ldots,j_b\in[a]}
|e_{j_1}\rangle_1\cdots |e_{j_b}\rangle_b ,
\end{equation}
which is the exact uniform superposition over the \(a^b\) all-open basis states. For a prescribed demand assignment, each split puts probability one on the unique child subset containing the target facility for that row. The same recursion maps the seed to the corresponding basis state.

The local reassignment updates only \(x\) while leaving \(y_j=1\), so every intermediate state remains in \(\mathcal F_{\rm open}\). The closing pass computes \(\widetilde y_j=\bigvee_i x_{ij}\), which maps \(\mathcal F_{\rm open}\) bijectively to \(\mathcal F_{\rm asg}\) without changing the assignment amplitudes.

The resource bounds for Corollary~\ref{cor:uflp-state-preparation} use the linear-width closing pass. The recursive assignment stage has depth \(O(\log a)\) and gate count \(O(ab)\). The closing pass has depth \(O(b)\) and gate count \(O(ab)\). The total depth is \(O(\log a+b)\) and the total gate count is \(O(ab)\), proving Corollary~\ref{cor:uflp-state-preparation}.
\end{proof}

\subsection{Demand-assignment and facility-set search}
\label{subsec:sm-uflp-search}

First consider search over the demand assignment strings. Let \(A_{\rm asg}\) be the preparation unitary from Corollary~\ref{cor:uflp-state-preparation}, so that
\[
A_{\rm asg}|0\rangle
=
\frac{1}{\sqrt{a^b}}
\sum_{z\in\mathcal F_{\rm asg}} |z\rangle,
\qquad
|\mathcal F_{\rm asg}|=a^b .
\]
This unitary has depth \(O(\log a+b)\) and gate count \(O(ab)\).

For a threshold \(B\), the marking oracle evaluates the UFLP cost, compares it with \(B\), applies the phase and uncomputes the cost~\cite{gilliam2019foundational,gilliam2021grover}. It also enforces the opening limit when \(K<\min(a,b)\). If \(M_B\) strings are marked, Grover search uses \(O(\sqrt{a^b/M_B})\) iterations~\cite{grover1996fast,brassard2000quantum}. The worst case therefore uses \(O(a^{b/2})\) state-preparation and marking-oracle circuit applications. Direct search over the full \(a+ab\) problem bits instead uses \(O(2^{(a+ab)/2})\) circuit applications in the worst case.

For optimization, the assignment bits need not be searched explicitly. Nonnegative opening costs ensure that closing an unused facility never increases the objective. An optimum therefore exists with no unused facility open and with between \(1\) and \(m=\min(K,b)\) open facilities. Define
\[
\mathcal Y_{\le m}
=
\{Y\subseteq[a]:1\le |Y|\le m\},
\qquad
N_{a,b,K}=|\mathcal Y_{\le m}|=\sum_{r=1}^{m}\binom{a}{r}.
\]
For a fixed nonempty set \(Y\) of open facilities, the best assignment sends each demand to its cheapest open facility. The induced cost is
\[
C(Y)
=
\sum_{j\in Y}f_j+\sum_{i=1}^{b}\min_{j\in Y}d_{ij}.
\]
After an optimal \(Y\) is found, an optimal assignment is recovered by the same rule that selects the cheapest open facility, using a fixed tie-breaking convention.

The search register is the \(a\)-bit facility register, with \(y_j=1\) exactly when \(j\in Y\). Treat \(\mathcal Y_{\le m}\) as a bounded Hamming weight SFS. For \(1\le r\le m\), let \(c_{a,r}=0^{a-r}1^r\) be the right-packed representative of the weight-\(r\) layer. Seed the register at \(c_{a,1}\). The top expansion assigns probability \(\binom{a}{r}/N_{a,b,K}\) to \(c_{a,r}\), giving
\[
\sum_{r=1}^{m}
\sqrt{\frac{\binom{a}{r}}{N_{a,b,K}}}\,
|c_{a,r}\rangle .
\]
Within each populated layer, apply the fixed Hamming weight recursion of Sec.~\ref{subsec:sm-hamming} with uniform split weights. On the component generated from \(c_{a,r}\), this refinement performs
\[
|c_{a,r}\rangle
\longmapsto
\frac{1}{\sqrt{\binom{a}{r}}}
\sum_{|Y|=r}|Y\rangle .
\]
The controls of Lemma~\ref{lem:sm-dicke-block-local-controls} isolate the intended packed representative pairs within each layer, and the fixed weight refinements preserve \(|Y|\). The layer mass and within layer normalization cancel, so every \(Y\in\mathcal Y_{\le m}\) has the same amplitude. The resulting unitary \(A_Y\) satisfies
\[
A_Y|0\rangle
=
\frac{1}{\sqrt{N_{a,b,K}}}
\sum_{Y\in\mathcal Y_{\le m}}|Y\rangle.
\]
The top expansion uses \(O(m)\) constant-support updates. The populated weight layers are then expanded with one common layerwise schedule. Counting the updates in their recursive expansion graphs gives a gate count of \(O(m^2a)\) and polynomial depth for \(A_Y\). The required angles can be computed classically in polynomial time.

The marking oracle needs a reversible evaluation of \(C(Y)\). Let \(L\) bound the bit length of each opening or service cost. Since \(C(Y)\) is a sum of at most \(a+b\) such terms, the cost register and threshold use
\[
w=L+\lceil\log_2(a+b+1)\rceil+O(1).
\]
Controlled additions of \(f_j\) give the opening cost term.

For one demand \(i\), load candidates
\[
r_{ij}=
\begin{cases}
d_{ij}, & y_j=1,\\
M_i, & y_j=0,
\end{cases}
\qquad
M_i>\max_j d_{ij}.
\]
A reversible comparator-swap network moves \(\min_j r_{ij}\) to a designated register. Its comparison outcomes are kept as history bits while the network is live. After adding the minimum to the cost register, reverse the network to uncompute these bits, restore the candidates, and unload them.

Using standard ripple-carry reversible arithmetic~\cite{vedral1996quantum,barenco1995elementary}, controlled additions, comparisons and conditional swaps on \(w\)-bit records have conservative bounds of \(O(w)\) for both depth and gate count. The complete cost evaluation has depth and gate count \(O((a+ab)w)\) and uses \(O(aw)\) work qubits, which are reused across demands.

For a threshold \(B\), the marking oracle compares the cost register with \(B\), applies the phase and uncomputes the cost evaluation. Quantum minimum finding returns the exact optimum with bounded error using \(O(\sqrt{N_{a,b,K}})\) applications of \(A_Y\), \(A_Y^\dagger\) and the objective oracle~\cite{durr1996quantum}.

\subsection{Conditional complexity bounds}
\label{subsec:sm-uflp-complexity}

We first consider fixed \(K\). Under SETH, \(K\)-Set Cover on \(a\) sets and a universe of size \(\operatorname{poly}(\log a)\) has no \(O(a^{K-\epsilon})\)-time algorithm for any fixed \(K\geq2\) and \(0<\epsilon<K\)~\cite{patrascu2010possibility}. Since \(a\geq K\), a cover with fewer than \(K\) sets can be extended with arbitrary sets. The same lower bound therefore applies when at most \(K\) sets may be chosen.

Map each set \(S_j\) to facility \(j\) and each universe element \(u_i\) to demand \(i\). Set every opening cost to one and define
\[
d_{ij}=
\begin{cases}
0, & u_i\in S_j,\\
K+1, & u_i\notin S_j .
\end{cases}
\]
Use opening limit \(K\) and ask whether the optimum is at most \(K\). A set cover of size at most \(K\) gives a UFLP solution of cost at most \(K\). Conversely, any solution of cost at most \(K\) uses only zero-cost service edges, so its open facilities form a set cover of size at most \(K\). The reduction preserves \(a\). Repeating universe elements if necessary gives \(b\geq K\) while retaining \(b=\operatorname{poly}(\log a)\). Thus
\[
N_{a,b,K}=\sum_{r=1}^{K}\binom{a}{r}=\Theta(a^K).
\]
Since \(b=\operatorname{polylog}(a)\) and the costs have constant bit length, the input overhead is polylogarithmic in \(a\). It can be absorbed into any fixed power of \(a\), so this rules out deterministic exact UFLP algorithms with running time \(O(a^{K-\epsilon}\operatorname{polylog}(a))\) for every fixed \(0<\epsilon<K\).

On this hard family, \(L=O(1)\), \(m=K\) and \(w=O(\log a)\). State preparation, its inverse, the zero-state reflection and the objective oracle each use \(O(a\operatorname{polylog}(a))\) gates. Finite-gate synthesis to the precision required across \(O(a^{K/2})\) applications preserves this bound~\cite{ross2016optimal}. The total quantum running time is therefore
\[
O\!\left(a^{K/2+1}\operatorname{polylog}(a)\right).
\]
For every fixed \(K\geq3\), the quantum exponent \(K/2+1\) is smaller than the classical exponent \(K\). Across these fixed-\(K\) hard families, the quantum exponent approaches half the classical exponent as \(K\) increases.

\begin{proof}[Proof of Theorem~\ref{thm:uflp-search-advantage}]
Standard UFLP is recovered by setting \(K=a\). In this case, \(N_{a,b,K}\leq2^a\), so quantum minimum finding uses \(O(2^{a/2})\) circuit applications. Approximating the preparation circuit to error \(O(2^{-a/2})\) adds only \(O(a)\) synthesis gates per rotation~\cite{ross2016optimal}. Each circuit application therefore retains polynomial cost, giving total time \(O^*(2^{a/2})\). Here, \(O^*(\cdot)\) suppresses factors polynomial in the input size.

For the classical comparison, consider Hitting Set on a universe \(U=[a]\), with sets \(S_1,\ldots,S_b\) and budget \(1\leq q\leq a\). Under SETH, no deterministic algorithm solves this problem in \(O^*((2-\epsilon)^a)\) time for any fixed \(0<\epsilon<1\)~\cite{impagliazzo2001complexity,cygan2016cnfsat}.

Construct a UFLP instance with one facility per universe element, one demand per set and unit opening costs. Set \(K=a\), ask whether the optimum is at most \(q\), and define
\[
d_{ij}=
\begin{cases}
0, & j\in S_i,\\
a+1, & j\notin S_i .
\end{cases}
\]
A hitting set of size at most \(q\) gives a UFLP solution of cost at most \(q\). Conversely, since \(q\leq a\), every UFLP solution of cost at most \(q\) uses only zero-cost service edges. Its open facilities therefore form a hitting set of size at most \(q\). This polynomial reduction preserves \(a\), so an \(O^*((2-\epsilon)^a)\)-time deterministic exact UFLP algorithm would contradict SETH.

Together, these bounds establish the conditional quadratic speed-up in the leading exponent of \(a\).
\end{proof}

\subsection{End-to-end threshold search trajectories}
\label{subsec:sm-uflp-trajectory}

For a given search register, let \(\Omega\) be its search support, let \(C(z)\) be the UFLP cost of \(z\in\Omega\), and let \(A\) prepare the uniform state over \(\Omega\). The phase oracle checks any feasibility condition not enforced by the support. Figure~\ref{fig:ncs-uflp-scaling} uses \(K=a\), so the displayed circuits require no separate opening-limit check.

At round \(t\), the phase oracle \(S_\chi\) marks \(|z\rangle\) when \(z\) is feasible and \(C(z)<B_t\), and \(S_0\) is the reflection about the zero state. One Grover iterate is \(G=AS_0A^\dagger S_\chi\).
If the marked count
\[
M_t
=
\left|
\left\{
z\in\Omega
\ \middle|\
z\text{ is feasible and }C(z)<B_t
\right\}
\right|
\]
were known and positive, the first success probability peak would be reached by choosing
\[
j_t^\star
=\max\left\{0,\operatorname{round}\!\left(
\frac{\pi}{4\arcsin\sqrt{M_t/|\Omega|}}-\frac12
\right)\right\}
\]
Grover iterates. We set \(j_t^\star=0\) when \(M_t=0\). In an optimization algorithm, however, \(M_t\) is generally not known in advance and changes whenever the threshold changes.
The sampled trajectories in Fig.~\ref{fig:ncs-uflp-scaling}i use the standard adaptive search rule for unknown marked fractions~\cite{boyer1998tight,durr1996quantum,gilliam2021grover}.
Each independent trajectory keeps a real search window \(k_t\), initialized to one. At round \(t\), it draws \(j_t\) uniformly from \(\{0,\ldots,\lceil k_t\rceil-1\}\), applies \(G^{j_t}\), measures once, and updates \(B_{t+1}\) only if the measured string is feasible and has lower cost. A successful round resets \(k_{t+1}=1\). An unsuccessful round sets \(k_{t+1}=\min((8/7)k_t,\sqrt{|\Omega|})\).
Qiskit checks on small instances verify the corresponding state preparation, cost threshold phase, uncomputation and diffusion blocks by exact state vector comparison.

\section{SFS initial states for structural preferences}
\label{sec:sm-sfs-dqi}

SFS-DQI changes only the initial state prepared before the constraint encoding of standard OPI DQI. The subsequent DQI operations remain unchanged. We first derive the modified output amplitude and then specialize the construction to the two-group balance preference.

\subsection{Modified DQI state}
\label{subsec:sm-dqi-general-sector}

Let \(q\) be prime and let \(\mathbb F_q\) denote the corresponding prime field~\cite{jordan2025decoded}. Label the \(m=q-1\) nonzero field elements by \(i=1,\ldots,m\) and use them as evaluation points, with \(1\leq n<m\). An instance specifies an allowed set \(F_i\subset\mathbb F_q\) for each point, with \(|F_i|=r\) for all \(i\) and \(0<r<q\). Each polynomial is \(Q(z)=\sum_{j=0}^{n-1}\xi_jz^j\), with coefficient vector \(\bm\xi=(\xi_0,\ldots,\xi_{n-1})^T\in\mathbb F_q^n\).
Let \(f_i(Q)=1\) when \(Q(i)\in F_i\) and \(f_i(Q)=0\) otherwise. Then \(f(Q)=\sum_{i=1}^{m}f_i(Q)\) counts the satisfied constraints.

To obtain the normalized local Fourier state used below, we center and rescale the constraint indicator. A uniformly sampled \(z\in\mathbb F_q\) lies in \(F_i\) with probability \(\alpha=r/q\), and the centered indicator has \(\ell_2\) norm \(\sigma=\sqrt{r(q-r)/q}\). Define \(g_i(z)=(1-\alpha)/\sigma\) for \(z\in F_i\) and \(g_i(z)=-\alpha/\sigma\) otherwise. It follows that \(\sum_{z\in\mathbb F_q}g_i(z)=0\) and \(\sum_{z\in\mathbb F_q}|g_i(z)|^2=1\). To reduce repeated factors to products with distinct constraint indices, we use the quadratic identity
\begin{equation}
g_i(z)^2=\lambda g_i(z)+q^{-1},
\qquad
\lambda=\frac{q-2r}{\sqrt{qr(q-r)}}.
\label{eq:sm-opi-quadratic-reduction}
\end{equation}
The constraint encoding converts a binary support mask into amplitudes on the \(\mathbb F_q\) error register using the finite-field Fourier expansion of each \(g_i\). Let \(\mathbb F_q^\times=\mathbb F_q\setminus\{0\}\), let \(\omega_q=\exp(2\pi\mathrm i/q)\) and let \(\widetilde g_i(a)=q^{-1/2}\sum_{z\in\mathbb F_q}\omega_q^{az}g_i(z)\). Centering gives \(\widetilde g_i(0)=0\), while Parseval's identity gives \(\sum_{a\ne0}|\widetilde g_i(a)|^2=1\). Fourier inversion gives \(g_i(z)=q^{-1/2}\sum_{a\in\mathbb F_q^\times}\omega_q^{-az}\widetilde g_i(a)\). Each binary mask bit is represented by the \(|0\rangle\) and \(|1\rangle\) levels of a \(q\)-dimensional register. The normalized nonzero Fourier coefficients define the unitary
\begin{equation}
\Gamma_i|0\rangle=|0\rangle,
\qquad
\Gamma_i|1\rangle=\sum_{a\in\mathbb F_q^\times}\widetilde g_i(a)|a\rangle.
\end{equation}
The two images of \(\Gamma_i\) are normalized and orthogonal, so this action admits a unitary extension.

To track the output amplitude produced by a general mask state, let \(B\in\mathbb F_q^{m\times n}\) be the Vandermonde matrix with \(B_{i,j+1}=i^j\). It collects the polynomial evaluations as \((B\bm\xi)_i=Q(i)\). Write \([m]=\{1,\ldots,m\}\). For \(S\subseteq[m]\), write \(g_S(Q)=\prod_{i\in S}g_i(Q(i))\), with \(g_\varnothing(Q)=1\). The subset form ensures that every constraint index appears at most once in each product. Let \(R\) bound the retained mask weight, with \(0\leq R\leq m\), and consider the nonzero amplitude function \(H(Q)=\sum_{S\subseteq[m],\,|S|\leq R}a_Sg_S(Q)\).
Here \(|1_S\rangle\) denotes the binary support mask of \(S\). To compensate for the normalization introduced by the inverse quantum Fourier transform, the amplitude of this mask includes the factor \(q^{(n-|S|)/2}\). The normalized mask state is
\begin{equation}
|\Psi_H\rangle
=
\frac{1}{\sqrt{C_H}}
\sum_{\substack{S\subseteq[m]\\|S|\leq R}}
q^{(n-|S|)/2}a_S|1_S\rangle,
\qquad
C_H=
\sum_{\substack{S\subseteq[m]\\|S|\leq R}}
q^{n-|S|}|a_S|^2.
\label{eq:sm-opi-general-mask-state}
\end{equation}
Applying \(\bigotimes_i\Gamma_i\) gives the error register state
\begin{equation}
|\Phi_H\rangle_E
=
\frac{1}{\sqrt{C_H}}
\sum_{\substack{S\subseteq[m]\\|S|\leq R}}
q^{(n-|S|)/2}a_S
\sum_{\substack{\bm e\in\mathbb F_q^m\\\operatorname{supp}(\bm e)=S}}
\left(\prod_{i\in S}\widetilde g_i(e_i)\right)|\bm e\rangle_E.
\label{eq:sm-opi-general-error-state}
\end{equation}
The unitaries \(\Gamma_i\) map the subset-mask coefficients \(a_S\) to the error-register amplitudes. The coefficients \(a_S\) are subset-basis coefficients and are distinct from the full field-vector Fourier coefficients \(a_y\) used in Methods. The local Fourier factors in Eq.~\eqref{eq:sm-opi-general-error-state} connect the two representations.

First compute \(\bm s=B^T\bm e\) reversibly in the syndrome register. Suppose that \(\bm e\mapsto B^T\bm e\) is injective on the support of the state in Eq.~\eqref{eq:sm-opi-general-error-state}. Reversible decoding then uses \(\bm s\) to uncompute the error register. The inverse quantum Fourier transform maps \(|\bm s\rangle\) to \(q^{-n/2}\sum_{\bm\xi\in\mathbb F_q^n}\omega_q^{-\bm s\cdot\bm\xi}|\bm\xi\rangle\). Since \(\bm s\cdot\bm\xi=\bm e\cdot B\bm\xi=\sum_i e_iQ(i)\), the amplitude of \(|\bm\xi\rangle\) is
\begin{align}
A_H(\bm\xi)
&=
\frac{1}{\sqrt{C_H}}
\sum_{\substack{S\subseteq[m]\\|S|\leq R}}a_S
\prod_{i\in S}
\left(
\frac{1}{\sqrt q}
\sum_{a\in\mathbb F_q^\times}
\widetilde g_i(a)\omega_q^{-aQ(i)}
\right)\nonumber\\
&=
\frac{1}{\sqrt{C_H}}
\sum_{\substack{S\subseteq[m]\\|S|\leq R}}a_Sg_S(Q)
=
\frac{H(Q)}{\sqrt{C_H}}.
\label{eq:sm-opi-amplitude-proof}
\end{align}
The modified DQI circuit therefore produces the polynomial-coefficient state \(C_H^{-1/2}\sum_{\bm\xi}H(Q)|\bm\xi\rangle\). The decoding cost is polynomial whenever the inverse syndrome map is efficiently and reversibly computable on this support.

Standard OPI DQI is recovered by taking \(H(Q)=P(f(Q))\). Let \(\ell\) be the degree of the score-weighting function \(P\), and let \(w_k\) be the normalized amplitudes in the corresponding superposition of weight-\(k\) Dicke states, with \(\sum_{k=0}^{\ell}|w_k|^2=1\). Set \(c_k=w_k/\sqrt{q^{n-k}\binom{m}{k}}\) and \(e_k(g(Q))=\sum_{|S|=k}g_S(Q)\). Then \(P(f(Q))=\sum_{k=0}^{\ell}c_ke_k(g(Q))\), and Eq.~\eqref{eq:sm-opi-general-mask-state} becomes \(\sum_{k=0}^{\ell}w_k|D_{m,k}\rangle\). Here \(|D_{m,k}\rangle\) denotes the uniform superposition over the \(m\)-bit strings of Hamming weight \(k\). Each \(g_i\) takes one value on \(F_i\) and another outside \(F_i\). Hence \(e_k(g(Q))\), and therefore the standard DQI amplitude, depends only on \(f(Q)\).

\subsection{Two-group state preparation and decoding}
\label{subsec:sm-balanced-profile}

To express the two-group balance preference in the centered basis above, partition the constraints into two prescribed groups \(G_1\sqcup G_2=[m]\), with \(m_g=|G_g|\). Define the group score \(S_g(Q)=\sum_{i\in G_g}f_i(Q)\) and its centered counterpart \(U_g(Q)=\sum_{i\in G_g}g_i(Q(i))\). At fixed total score, the product \(J(Q)=S_1(Q)S_2(Q)\) is larger when the satisfied constraints are distributed more evenly between the groups. Since \(S_g(Q)=m_g\alpha+\sigma U_g(Q)\), this preference factor expands as
\begin{equation}
J(Q)
=m_1m_2\alpha^2
+\alpha\sigma\bigl(m_2U_1(Q)+m_1U_2(Q)\bigr)
+\sigma^2U_1(Q)U_2(Q).
\label{eq:sm-opi-J-expansion}
\end{equation}
Equation~\eqref{eq:sm-opi-quadratic-reduction} reduces every repeated index in products of the centered functions. After this reduction, multiplying by \(J\) adds at most two distinct constraint indices, so \(P(f(Q))J(Q)\) contains products over at most \(R=\ell+2\) distinct indices.

To collect support masks by their weights in the two groups, for \(g\in\{1,2\}\) define \(e_a^{(g)}(Q)=\sum_{A\subseteq G_g,\,|A|=a}g_A(Q)\), and set \(e_a^{(g)}=0\) outside \(0\leq a\leq m_g\).
Splitting each subset between the two groups gives
\[
P(f(Q))=\sum_{a=0}^{m_1}\sum_{b=0}^{m_2}t_{a,b}e_a^{(1)}(Q)e_b^{(2)}(Q),
\]
where \(t_{a,b}=c_{a+b}\) for \(a+b\leq\ell\) and \(t_{a,b}=0\) otherwise. Multiplication by \(U_g\) remains in this basis because \(U_ge_a^{(g)}=(a+1)e_{a+1}^{(g)}+\lambda a e_a^{(g)}+q^{-1}(m_g-a+1)e_{a-1}^{(g)}\).
To represent multiplication by \(U_1\) and \(U_2\) directly on a coefficient array \(x=(x_{a,b})\), define
\[
(\mathcal T_1x)_{a,b}=a x_{a-1,b}+\lambda a x_{a,b}+q^{-1}(m_1-a)x_{a+1,b},
\qquad
(\mathcal T_2x)_{a,b}=b x_{a,b-1}+\lambda b x_{a,b}+q^{-1}(m_2-b)x_{a,b+1},
\]
where coefficients outside the allowed ranges are zero. Combining these transformations with Eq.~\eqref{eq:sm-opi-J-expansion} gives the grouped coefficients \(h_{a,b}\) through
\begin{equation}
\begin{aligned}
P(f(Q))J(Q)
&=
\sum_{a=0}^{m_1}\sum_{b=0}^{m_2}
h_{a,b}e_a^{(1)}(Q)e_b^{(2)}(Q),\\
h_{a,b}
&=m_1m_2\alpha^2t_{a,b}
+m_2\alpha\sigma(\mathcal T_1t)_{a,b}
+m_1\alpha\sigma(\mathcal T_2t)_{a,b}
+\sigma^2(\mathcal T_1\mathcal T_2t)_{a,b}.
\end{aligned}
\label{eq:sm-balanced-hab}
\end{equation}
The coefficients \(h_{a,b}\) vanish when \(a+b>R\). The mask state is
\begin{equation}
\begin{aligned}
|\Psi_H\rangle
&=
\frac{1}{\sqrt{C_H}}
\sum_{\substack{0\leq a\leq m_1,\ 0\leq b\leq m_2\\a+b\leq R}}
q^{(n-a-b)/2}h_{a,b}
\sqrt{\binom{m_1}{a}\binom{m_2}{b}}
|D_{m_1,a}\rangle|D_{m_2,b}\rangle,\\
C_H
&=
\sum_{\substack{0\leq a\leq m_1,\ 0\leq b\leq m_2\\a+b\leq R}}
q^{n-a-b}|h_{a,b}|^2
\binom{m_1}{a}\binom{m_2}{b}.
\end{aligned}
\end{equation}
Applying \(\bigotimes_i\Gamma_i\) converts this mask state into the error register state for \(H(Q)=P(f(Q))J(Q)\). For this choice of \(H\), the normalized encoded state is the state denoted by \(|W_{\rm SFS\text{-}DQI}\rangle\) in Methods. Equation~\eqref{eq:sm-opi-amplitude-proof} then gives final amplitude \(P(f(Q))J(Q)/\sqrt{C_H}\).

The preparation begins by placing the amplitudes associated with all admissible weight pairs \((a,b)\) on the packed representatives \(c_{m_1,a}\sqcup c_{m_2,b}\). Their nearest-neighbor graph is connected and contains \(O(R^2)\) vertices. Givens rotations along a spanning tree set the required amplitude magnitudes, and phases controlled by \((a,b)\) supply the phases of \(h_{a,b}\).

These rotations and phases have constant local cost within the packed representative subspace. The two bits adjacent to a \(01\) boundary identify an interior weight, while one bit suffices at an endpoint. A rotation between neighboring weights changes only their common boundary bit. The adjacent bits in that group isolate the pair, and the boundary bits in the other group fix the unchanged weight. Each operation uses one support bit and at most four controls under the local-update convention of Sec.~\ref{subsec:sm-controlled-updates}.

The fixed Hamming weight construction of Sec.~\ref{subsec:sm-hamming} then maps each packed representative to \(|D_{m_1,a}\rangle|D_{m_2,b}\rangle\). The refinements for the two groups act on disjoint registers and run in parallel. Set \(R_g=\min(R,m_g)\). The preparation depth \(D_{\rm sup}\) and gate count \(G_{\rm sup}\) satisfy
\begin{equation}
\begin{aligned}
D_{\rm sup}
&=
O\!\left(
R^2+
\max_{g=1,2}
\left[
R_g\log\frac{m_g}{R_g}+R_g
\right]
\right),\\
G_{\rm sup}
&=
O\!\left(R^2+R_1m_1+R_2m_2\right).
\end{aligned}
\label{eq:sm-opi-support-resources}
\end{equation}
For comparable group sizes and \(R=\Theta(m)\), both bounds are \(O(m^2)\). Equation~\eqref{eq:sm-balanced-hab} computes all coefficients with \(O(R^2)\) classical arithmetic operations at the chosen precision. The unitaries \(\Gamma_i\) are the same as in standard OPI DQI.

Decodability follows from the distance of the dual code. The kernel \(\ker(B^T)=\{\bm e\in\mathbb F_q^m\mid B^T\bm e=0\}\) is the dual generalized Reed--Solomon code and has minimum distance \(n+1\). Two errors of weight at most \(R\) with the same syndrome differ by a codeword of weight at most \(2R\). The syndrome map is injective when \(2R<n+1\). For \(n\geq5\), choose
\begin{equation}
\ell=\left\lfloor\frac{n-5}{2}\right\rfloor,
\qquad
R=\ell+2=\left\lfloor\frac{n-1}{2}\right\rfloor,
\qquad
2R<n+1,
\qquad
2\ell+5\leq n.
\label{eq:sm-reserved-degree}
\end{equation}
A Reed--Solomon syndrome decoder then recovers every error in polynomial time and admits a reversible implementation with polynomial overhead. As in standard OPI DQI, each \(F_i\) is assumed to be explicit or efficiently accessible so that \(\Gamma_i\) has polynomial cost.

\subsection{Output reweighting and score improvement}
\label{subsec:sm-dqi-reweighting}

Let \(\mathcal Q_n=\{Q\in\mathbb F_q[z]\mid\deg Q<n\}\). We write \(\langle\cdot\rangle_{\rm DQI}\) and \(\langle\cdot\rangle_{\rm SFS\text{-}DQI}\) for expectations under the DQI and SFS-DQI output distributions, respectively. We assume \(P(f(Q))J(Q)\) is not identically zero, which is equivalent to \(\langle J^2\rangle_{\rm DQI}>0\).

Relative to DQI, SFS-DQI multiplies the unnormalized amplitude of each polynomial \(Q\) by \(J(Q)\). Its expected score therefore satisfies
\begin{equation}
\langle f\rangle_{\rm SFS\text{-}DQI}
=
\frac{\langle fJ^2\rangle_{\rm DQI}}
{\langle J^2\rangle_{\rm DQI}}.
\end{equation}
For every score layer \(\mathcal L_t=\{Q\in\mathcal Q_n\mid f(Q)=t\}\) with nonzero SFS-DQI probability, conditioning removes the dependence on \(P\). The resulting probability of sampling \(Q\in\mathcal L_t\) is
\begin{equation}
\frac{J(Q)^2}{\sum_{Q'\in\mathcal L_t}J(Q')^2}.
\label{eq:sm-conditional-layer}
\end{equation}
The covariance between \(f\) and \(J^2\) under the DQI output is
\[
\operatorname{Cov}_{\rm DQI}(f,J^2)
=
\langle fJ^2\rangle_{\rm DQI}
-\langle f\rangle_{\rm DQI}\langle J^2\rangle_{\rm DQI}.
\]
The shift in expected score is
\begin{equation}
\langle f\rangle_{\rm SFS\text{-}DQI}
-
\langle f\rangle_{\rm DQI}
=
\frac{\operatorname{Cov}_{\rm DQI}(f,J^2)}
{\langle J^2\rangle_{\rm DQI}}.
\label{eq:sm-score-covariance}
\end{equation}
SFS-DQI preserves or increases the expected score exactly when \(\operatorname{Cov}_{\rm DQI}(f,J^2)\geq0\). For the two-group preference, this covariance is strictly positive.

\begin{lemma}[Strict improvement for grouped OPI]
\label{lem:sm-opi-score-preference}
Let \(n\geq5\), let \(0<r<q\), and let \(G_1\) and \(G_2\) be nonempty. For every nonzero \(P\) of degree at most \(\ell=\lfloor(n-5)/2\rfloor\), the preference factor \(J=S_1S_2\) satisfies
\begin{equation}
\operatorname{Cov}_{\rm DQI}(f,J^2)>0.
\label{eq:sm-opi-positive-covariance}
\end{equation}
SFS-DQI has strictly larger expected values of \(f\) and \(J\) than score-only DQI with the same \(P\). Within every score layer with nonzero SFS-DQI probability, the conditional expected value of \(J\) does not decrease. It increases strictly unless \(J\) is constant on that layer.
\end{lemma}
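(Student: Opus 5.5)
The plan is to reduce every DQI expectation in the lemma to a calculation in a model of independent coin flips, and then use a monotone-coupling and covariance argument on the score layers. By Eq.~\eqref{eq:sm-opi-amplitude-proof} with \(H=P(f)\), the DQI output probability of \(Q\) is proportional to \(|P(f(Q))|^2\). So for any function \(\Phi\),
\[
\langle \Phi\rangle_{\rm DQI}=\frac{\mathbb E_Q[|P(f)|^2\Phi]}{\mathbb E_Q[|P(f)|^2]},
\]
where \(\mathbb E_Q\) is uniform over \(\mathcal Q_n\). The key structural input is that polynomials of degree \(<n\) evaluated at distinct points are \(n\)-wise independent: for any \(k\le n\) distinct evaluation points, the values \(Q(i)\) are independent and uniform on \(\mathbb F_q\). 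Each of \(|P(f)|^2\), \(|P(f)|^2 f\), \(|P(f)|^2J^2\) and \(|P(f)|^2fJ^2\) is a multilinear polynomial in the indicators \(f_i\) of degree at most \(2\ell+5\le n\), after reducing \(f_i^2=f_i\). This is exactly where the choice \(\ell=\lfloor (n-5)/2\rfloor\) in Eq.~\eqref{eq:sm-reserved-degree} enters. Hence each of these expectations equals its value in the model where \(f_1,\dots,f_m\) are i.i.d.\ Bernoulli\((\alpha)\) with \(\alpha=r/q\in(0,1)\). From here on I work in that model, with \(f\) tilted by the weight \(w(t)=|P(t)|^2\).

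Next I condition on the total score. Given \(f=t\), the satisfied set is a uniformly random \(t\)-subset of \([m]\), so \(S_1\) is hypergeometric and \(J=S_1(t-S_1)\). Define \(\varphi(t)=\mathbb E[J^2\mid f=t]\). Then \(\operatorname{Cov}_{\rm DQI}(f,J^2)=\operatorname{Cov}_w(f,\varphi(f))\), where \(f\) has law proportional to \(w(t)\binom{m}{t}\alpha^t(1-\alpha)^{m-t}\). Writing this covariance as \(\tfrac12\mathbb E[(f-f')(\varphi(f)-\varphi(f'))]\) with \(f,f'\) i.i.d., it is nonnegative as soon as \(\varphi\) is nondecreasing. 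It is strictly positive if the law puts mass on two values \(t\ne t'\) with \(\varphi(t)\ne\varphi(t')\).

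Monotonicity comes from a coupling. Obtain a uniform \((t+1)\)-subset by adding a uniformly random unsatisfied index to a uniform \(t\)-subset. This raises \(S_1\) or \(S_2\) by one, so \(J\) cannot decrease. For \(1\le t\le m-1\) it increases strictly with positive probability: for \(t\ge2\) the configuration with \(S_2>0\) and the added index in \(G_1\) has positive probability because both groups are nonempty, and for \(t=1\) the step \(\varphi(1)=0<\varphi(2)\) is immediate. Thus \(\varphi(0)=\varphi(1)=0<\varphi(2)<\cdots<\varphi(m)\).

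For the support of the tilted law: a nonzero \(P\) of degree \(\ell\) vanishes on at most \(\ell\) of the \(m+1\) values \(t\), and every binomial weight is positive. So the law charges at least \(m+1-\ell\ge3\) values, hence at least two distinct values \(t,t'\ge1\), on which \(\varphi\) differs. This gives Eq.~\eqref{eq:sm-opi-positive-covariance}, and Eq.~\eqref{eq:sm-score-covariance} then yields the strict increase of \(\langle f\rangle\). The main obstacle I expect is exactly this strictness bookkeeping: \(\varphi\) is flat on \(\{0,1\}\), so I must check that the DQI mass is not concentrated there.

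For \(J\), use \(\langle J\rangle_{\rm SFS\text{-}DQI}-\langle J\rangle_{\rm DQI}=\operatorname{Cov}_{\rm DQI}(J,J^2)/\langle J^2\rangle_{\rm DQI}\). This covariance is nonnegative because \(J\ge0\) and \(x\mapsto x^2\) is increasing on \([0,\infty)\), using the same i.i.d.-copy identity. It is strict unless \(J\) is DQI-a.s.\ constant. That is excluded because some charged layer with \(2\le t\le m-1\) contains polynomials of different \(J\), and layer membership and \(J\) are exactly distributed as in the Bernoulli model, since computing them only involves determining the full set of satisfied constraints per \(Q\). Here I would argue directly via positivity of the probabilities of the relevant satisfied patterns rather than via degree bounds. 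If that is delicate, I would instead fall back on the same covariance with \(\Phi=J\) inside the degree budget. The within-layer statement needs no independence at all: by Eq.~\eqref{eq:sm-conditional-layer}, the conditional SFS-DQI mean of \(J\) on \(\mathcal L_t\) is \(\sum J^3/\sum J^2\). This is at least the uniform layer mean \(\sum J/|\mathcal L_t|\), which is the DQI conditional mean since DQI is uniform on each layer, by the same Chebyshev-sum inequality for the nonnegative \(J\). Equality holds only when \(J\) is constant on the layer.
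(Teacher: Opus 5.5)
Your proposal follows the paper's proof essentially step for step. The shared steps are: the moment transfer to i.i.d.\ Bernoulli indicators via \(n\)-wise independence and the budget \(2\ell+5\le n\); conditioning on \(f=t\); the add-one-index coupling, giving \(h(0)=h(1)=0\) and strict increase for \(1\le t<m\); the root count for \(P\), giving at least three charged scores; and the i.i.d.-copy identities for \((f,h(f))\) and \((J,J^2)\). Your Chebyshev-sum argument on \(\mathcal L_t\) is the paper's ``same identity after conditioning on a score layer.''

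The one step that fails as written is your exclusion of a DQI-a.s.\ constant \(J\). You claim that layer membership and \(J\) are exactly distributed as in the Bernoulli model, and this is false. Under uniform \(Q\) the indicators are only \(n\)-wise independent, and with \(m>n\) even the law of \(f\) differs. For example, \(\Pr(f=m)\) is \((r/q)^m\) in the product model, but under uniform \(Q\) it is an integer multiple of \(q^{-n}\), and the two cannot agree since \(q\nmid r\). For the same reason you cannot assert positive probability for a prescribed full satisfied pattern.

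Your fallback is also unsafe if it means evaluating \(\operatorname{Cov}_{\rm DQI}(J,J^2)\) in the Bernoulli model. That requires \(\langle J^3|P(f)|^2\rangle\), whose degree is \(2\ell+6=n+1\) when \(n\) is odd, outside the budget.

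None of this is needed. You have already shown \(\operatorname{Cov}_{\rm DQI}(f,J^2)>0\), and a DQI-a.s.\ constant \(J\) would make that covariance vanish. This is exactly how the paper closes the step, and with it your strictness argument for \(\langle J\rangle\) goes through.
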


\begin{proof}
Let \(X_i=f_i(Q)\) for a uniformly sampled \(Q\in\mathcal Q_n\), and set \(\mathcal W=|P(f)|^2\). Evaluations at any set of at most \(n\) distinct points are independent and uniform over \(\mathbb F_q\). The variables \(X_i\) are \(n\)-wise independent Bernoulli variables with mean \(r/q\).

After multilinear reduction in the \(X_i\), the degrees of \(\mathcal W\), \(f\mathcal W\), \(J^2\mathcal W\) and \(fJ^2\mathcal W\) are at most \(2\ell\), \(2\ell+1\), \(2\ell+4\) and \(2\ell+5\). By Eq.~\eqref{eq:sm-reserved-degree}, all four degrees are at most \(n\). Their expectations equal those for fully independent Bernoulli variables with the same mean. These four moments determine \(\operatorname{Cov}_{\rm DQI}(f,J^2)\).

In the fully independent model, conditioning on \(f=t\) makes the satisfied positions a uniform \(t\)-subset of \([m]\). Couple this subset to a uniform \((t+1)\)-subset by adding a uniformly chosen unsatisfied position. Adding a position in \(G_1\) changes \(J\) by \(S_2\), while adding one in \(G_2\) changes \(J\) by \(S_1\). Hence \(h(t)=\langle J^2\mid f=t\rangle\) satisfies \(h(0)=h(1)=0\) and increases strictly for \(1\leq t<m\).

The weight \(\mathcal W\) depends only on \(f\), so it does not change the conditional distribution at fixed \(t\). In this model, every score has positive probability before reweighting. A nonzero polynomial \(P\) has at most \(\ell<m-1\) roots among the \(m+1\) scores. The reweighted distribution contains at least three score values on which \(h\) is not constant. If \(f'\) is an independent copy under this distribution, then
\[
\operatorname{Cov}(f,h(f))
=\frac{1}{2}\left\langle(f-f')(h(f)-h(f'))\right\rangle>0.
\]
Conditioning on \(f\) identifies this covariance with \(\operatorname{Cov}(f,J^2)\). The moment equalities transfer the strict inequality to the DQI output distribution, proving Eq.~\eqref{eq:sm-opi-positive-covariance}.

Equation~\eqref{eq:sm-score-covariance} now proves the strict increase in expected score. For the preference factor, let \(J'\) be an independent copy under the DQI output distribution. Then
\[
\begin{aligned}
\langle J\rangle_{\rm SFS\text{-}DQI}-\langle J\rangle_{\rm DQI}
&=\frac{\operatorname{Cov}_{\rm DQI}(J,J^2)}{\langle J^2\rangle_{\rm DQI}},\\
\operatorname{Cov}_{\rm DQI}(J,J^2)
&=\frac{1}{2}\left\langle(J-J')^2(J+J')\right\rangle_{\rm DQI}.
\end{aligned}
\]
The last expression is positive because \(J\geq0\) and Eq.~\eqref{eq:sm-opi-positive-covariance} rules out a constant \(J\). The same identity after conditioning on a score layer proves the final claim.
\end{proof}

The strict improvement concerns the expected score. It does not imply that the probability of exceeding every score threshold increases.

\subsection{Asymptotic benchmark and speed-up}
\label{subsec:sm-dqi-speedup-inheritance}

Equation~\eqref{eq:sm-conditional-layer} establishes the \(J(Q)^2\) preference within each score layer, while Lemma~\ref{lem:sm-opi-score-preference} proves a larger expected score for the same \(P\). The degree comparison below relates this degree-reduced choice of \(P\) to the standard asymptotic choice.

The asymptotic OPI DQI score benchmark is expressed in terms of the normalized degree \(\mu=\ell/m\) of \(P\) as~\cite{jordan2025decoded}
\begin{equation}
\rho(\alpha,\mu)
=
\left(
\sqrt{\mu(1-\alpha)}
+\sqrt{\alpha(1-\mu)}
\right)^2
\label{eq:sm-opi-semicircle-benchmark}
\end{equation}
when \(\alpha\leq1-\mu\), and it equals one otherwise. Let \(L_0=\lfloor(n-1)/2\rfloor\) be the largest integer degree allowed by the standard OPI DQI expected-score theorem, satisfying \(2L_0+1<n+1\). Set \(\mu_0=L_0/m\) and \(\mu_\ell=\ell/m\). Equation~\eqref{eq:sm-reserved-degree} gives \(L_0-\ell=2\). When \(r/q\to1/2\) and \(n/q\to\nu\in(0,1)\), both \((\alpha,\mu_0)\) and \((\alpha,\mu_\ell)\) eventually lie in the first branch of Eq.~\eqref{eq:sm-opi-semicircle-benchmark}. The mean value theorem gives \(\rho(\alpha,\mu_0)-\rho(\alpha,\mu_\ell)=O(1/m)\).
The two-degree reduction accommodates the quadratic preference, which raises the error-weight bound from \(\ell\) to \(R=\ell+2=L_0\). At the level of the expected-score moments, \(J^2\) adds four degrees and the same reduction gives \(2\ell+5=2L_0+1\leq n\). The reduction lowers the asymptotic DQI benchmark by \(O(1/m)\), which vanishes as \(m\to\infty\).

\begin{corollary}[Asymptotic guarantee for grouped OPI]
\label{cor:sm-dqi-speedup-inheritance}
Grouped OPI asks for the standard OPI requirement on the asymptotic expected score. Within every score layer with nonzero output probability, it also requires conditional sampling probabilities proportional to \(J(Q)^2\). Consider an OPI family with \(m=q-1\), \(r/q\to1/2\), \(n/q\to\nu\in(0,1)\) and two groups of comparable size. Let \(P\) be the optimal degree-\(\ell\) score-weighting function for DQI, with \(\ell\) given in Eq.~\eqref{eq:sm-reserved-degree}. Assume that each \(F_i\) is efficiently accessible. Then
\begin{equation}
\liminf_{q\to\infty}
\frac{\langle f\rangle_{\rm SFS\text{-}DQI}}{m}
\geq
\frac{1}{2}
+\sqrt{\frac{\nu}{2}\left(1-\frac{\nu}{2}\right)}.
\label{eq:sm-opi-asymptotic-score}
\end{equation}
Within every score layer with nonzero output probability, SFS-DQI assigns conditional probabilities in proportion to \(J(Q)^2\) and therefore realizes the balance preference. Under the standard DQI comparison assumption, the reported superpolynomial speed-up over known classical algorithms extends to grouped OPI.
\end{corollary}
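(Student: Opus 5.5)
The corollary has three parts: the asymptotic score bound, the within-layer $J(Q)^2$ preference, and the inheritance of the speed-up. I will prove them in that order.

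\textbf{Layer preference.} This part is immediate. By Eq.~\eqref{eq:sm-opi-amplitude-proof} with $H(Q)=P(f(Q))J(Q)$, the SFS-DQI output probability is proportional to $|P(f(Q))|^2J(Q)^2$. On a score layer $\mathcal L_t$ the factor $|P(t)|^2$ is constant, so conditioning gives Eq.~\eqref{eq:sm-conditional-layer}. This needs the syndrome map to be injective on error weight at most $R$. That holds by Eq.~\eqref{eq:sm-reserved-degree} and the dual Reed--Solomon distance argument, since $2R<n+1$.

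\textbf{Asymptotic score.} I will chain three steps.
\begin{enumerate}
\item Lemma~\ref{lem:sm-opi-score-preference} gives $\langle f\rangle_{\rm SFS\text{-}DQI}>\langle f\rangle_{\rm DQI}$ for the same nonzero $P$ of degree $\ell$. It applies because $n\ge5$ for large $q$, the groups are nonempty, and $0<r<q$.
\item For the optimal degree-$\ell$ $P$, I invoke the standard OPI DQI expected-score theorem of~\cite{jordan2025decoded}. It gives $\lim\langle f\rangle_{\rm DQI}/m=\rho(\alpha,\mu_\ell)$, with the hypothesis $2\ell+1<n+1$ satisfied.
\item The mean-value estimate above gives $\rho(\alpha,\mu_\ell)=\rho(\alpha,\mu_0)-O(1/m)$.
\end{enumerate}
It remains to evaluate the limit. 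Since $\alpha=r/q\to1/2$ and $\mu_0=\lfloor (n-1)/2\rfloor/(q-1)\to\nu/2$, continuity gives $\rho\to\bigl(\sqrt{\tfrac{\nu}{2}\cdot\tfrac12}+\sqrt{\tfrac12(1-\tfrac{\nu}{2})}\bigr)^2$. Expanding the square yields $\tfrac12+\sqrt{\tfrac{\nu}{2}(1-\tfrac{\nu}{2})}$. This also uses that the first branch applies, i.e.\ $1/2\le 1-\nu/2$, which holds because $\nu<1$. Taking $\liminf$ gives Eq.~\eqref{eq:sm-opi-asymptotic-score}.

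\textbf{Speed-up inheritance.} The argument is a resource count. The SFS mask preparation costs polynomial depth and gate count by Eq.~\eqref{eq:sm-opi-support-resources}, and the coefficients $h_{a,b}$ are computed classically in $O(R^2)$ operations. The $\Gamma_i$ are polynomial because each $F_i$ is efficiently accessible, and the reversible Reed--Solomon decoding and QFT are unchanged from standard DQI. So the algorithm runs in polynomial time and attains the standard OPI expected-score benchmark, with the added layer preference. Under the DQI comparison assumption, no known classical algorithm achieves this benchmark in subexponential time. Grouped OPI only adds requirements, so a classical solver for it also meets the OPI benchmark, and the superpolynomial separation transfers.

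\textbf{Main obstacle.} The main obstacle is step 2. The cited theorem gives an expected score for the optimal $P$ at each degree, but the semicircle formula may be stated only asymptotically, for a limiting normalized degree $\mu$. I plan to apply it along the sequence $\mu_\ell=\ell/m\to\nu/2$ and use continuity of $\rho$, after checking that the theorem's error terms are uniform in the degree. If they are not, I would fall back on the exact finite-$m$ eigenvalue expression for the optimal weights, namely the leading eigenvalue of the tridiagonal matrix. That eigenvalue is monotone in the degree, which would give the same limit.
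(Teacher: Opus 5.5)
Your proposal is correct and follows the paper's proof. It combines the strict improvement from Lemma~\ref{lem:sm-opi-score-preference} over score-only DQI with the same degree-\(\ell\) \(P\), the semicircle benchmark of Eq.~\eqref{eq:sm-opi-semicircle-benchmark} evaluated as \(\alpha\to1/2\) and \(\ell/m\to\nu/2\), the polynomial SFS and decoding overhead, and the observation that grouped OPI is at least as hard as the ungrouped score task. Your detour through \(\mu_0\) via the mean-value estimate is harmless but unnecessary, since \(\mu_\ell=\ell/m\) already tends to \(\nu/2\); the paper additionally makes explicit the positive gap \(\sqrt{\frac{\nu}{2}(1-\frac{\nu}{2})}-\frac{\nu}{2}\) over the classical Prange benchmark \(\frac12+\frac{\nu}{2}\).
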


\begin{proof}
The reserved degree satisfies \(\ell/m\to\nu/2\). Equation~\eqref{eq:sm-opi-semicircle-benchmark} gives the right-hand side of Eq.~\eqref{eq:sm-opi-asymptotic-score} for score-only DQI with \(P\) of degree \(\ell\). Lemma~\ref{lem:sm-opi-score-preference} shows that SFS-DQI strictly increases that expected score. Equation~\eqref{eq:sm-opi-support-resources} gives polynomial preparation cost, and Eq.~\eqref{eq:sm-reserved-degree} preserves polynomial Reed--Solomon decoding.

The reported classical Prange benchmark is \(\rho_{\rm C}(\nu)=1/2+\nu/2\). For every fixed \(0<\nu<1\), the gap \(\sqrt{\nu(1-\nu/2)/2}-\nu/2\) is constant and positive. Jordan et al.~\cite{jordan2025decoded} report a superpolynomial speed-up for standard OPI DQI under the assumption that no classical polynomial-time algorithm reaches the asymptotic DQI score benchmark.

A fixed partition into two groups of comparable size can be added to any OPI instance without changing its score. Any classical algorithm meeting both grouped OPI requirements also reaches the original OPI score benchmark after the partition is ignored. Grouped OPI is therefore at least as hard as the original score task under this benchmark. Together with the polynomial SFS overhead, the same comparison assumption gives the stated extension.
\end{proof}

\subsection{Other structural preferences}

More generally, set \(H(Q)=P(f(Q))J(Q)\), where \(J\) encodes a structural preference. This product is represented by the finite-field coefficient convolution described in Methods. Let \(E_{\rm conf}\) be a set of unordered conflict pairs, let \(E_{\rm pre}\) be a set of ordered precedence pairs, and let \(\mathcal T_{\rm syn}\) be a collection of nonempty constraint subsets assigned synergy rewards. The corresponding preference contributions are
\begin{align*}
J_{\rm conf}(Q)
&=
-\sum_{\{i,j\}\in E_{\rm conf}}
f_i(Q)f_j(Q),\\
J_{\rm pre}(Q)
&=
-\sum_{(i,j)\in E_{\rm pre}}
\bigl(1-f_i(Q)\bigr)f_j(Q),\\
J_{\rm syn}(Q)
&=
\sum_{T\in\mathcal T_{\rm syn}}
\prod_{i\in T}f_i(Q).
\end{align*}
Here \((i,j)\in E_{\rm pre}\) means that satisfying constraint \(j\) without satisfying constraint \(i\) is a precedence violation. The conflict score decreases when conflicting constraints are satisfied together. The precedence score decreases with each violation, while the synergy score increases when every constraint in a rewarded subset is satisfied.

The preference factor applied in SFS-DQI is

\begin{equation}
J(Q)=c+J_{\rm conf}(Q)+J_{\rm pre}(Q)+J_{\rm syn}(Q)\geq0
\quad\text{for every polynomial }Q.
\end{equation}
For any sum of the unit-weight terms above, \(c=|E_{\rm conf}|+|E_{\rm pre}|\) is a simple sufficient choice. Terms not included in a chosen preference are omitted. Within each fixed-score layer, reweighting by \(J(Q)^2\) therefore preserves the ordering defined by the chosen contributions. Adding \(c\) does not increase the polynomial degree. If \(J\) has degree \(d_J\) in the indicators \(f_i\), multilinear reduction gives the mask-weight bound \(R\leq\ell+d_J\). The syndrome decoder remains valid provided \(2R<n+1\). For SFS preparation to have polynomial cost, the resulting mask state must also admit a polynomial-size recursive description. The two-group balance preference above is the explicit case for which preparation, decoding and preservation of the DQI score benchmark are established.

\renewcommand{\bibsection}{\section*{Supplementary References}}
\bibliographycontrols
\bibliographystyle{apsrev4-2}
\bibliography{refs}

\begin{thebibliography}{30}%
\makeatletter
\providecommand \@ifxundefined [1]{%
 \@ifx{#1\undefined}
}%
\providecommand \@ifnum [1]{%
 \ifnum #1\expandafter \@firstoftwo
 \else \expandafter \@secondoftwo
 \fi
}%
\providecommand \@ifx [1]{%
 \ifx #1\expandafter \@firstoftwo
 \else \expandafter \@secondoftwo
 \fi
}%
\providecommand \natexlab [1]{#1}%
\providecommand \enquote  [1]{``#1''}%
\providecommand \bibnamefont  [1]{#1}%
\providecommand \bibfnamefont [1]{#1}%
\providecommand \citenamefont [1]{#1}%
\providecommand \href@noop [0]{\@secondoftwo}%
\providecommand \href [0]{\begingroup \@sanitize@url \@href}%
\providecommand \@href[1]{\@@startlink{#1}\@@href}%
\providecommand \@@href[1]{\endgroup#1\@@endlink}%
\providecommand \@sanitize@url [0]{\catcode `\\12\catcode `\$12\catcode `\&12\catcode `\#12\catcode `\^12\catcode `\_12\catcode `\%12\relax}%
\providecommand \@@startlink[1]{}%
\providecommand \@@endlink[0]{}%
\providecommand \url  [0]{\begingroup\@sanitize@url \@url }%
\providecommand \@url [1]{\endgroup\@href {#1}{\urlprefix }}%
\providecommand \urlprefix  [0]{URL }%
\providecommand \Eprint [0]{\href }%
\providecommand \doibase [0]{https://doi.org/}%
\providecommand \selectlanguage [0]{\@gobble}%
\providecommand \bibinfo  [0]{\@secondoftwo}%
\providecommand \bibfield  [0]{\@secondoftwo}%
\providecommand \translation [1]{[#1]}%
\providecommand \BibitemOpen [0]{}%
\providecommand \bibitemStop [0]{}%
\providecommand \bibitemNoStop [0]{.\EOS\space}%
\providecommand \EOS [0]{\spacefactor3000\relax}%
\providecommand \BibitemShut  [1]{\csname bibitem#1\endcsname}%
\let\auto@bib@innerbib\@empty
\bibitem [{\citenamefont {Shor}(1994)}]{shor1994algorithms}%
  \BibitemOpen
  \bibfield  {author} {\bibinfo {author} {\bibfnamefont {P.~W.}\ \bibnamefont {Shor}},\ }\bibfield  {title} {\bibinfo {title} {Algorithms for quantum computation: discrete logarithms and factoring},\ }in\ \href@noop {} {\emph {\bibinfo {booktitle} {Proceedings 35th annual symposium on foundations of computer science}}}\ (\bibinfo {organization} {Ieee},\ \bibinfo {year} {1994})\ pp.\ \bibinfo {pages} {124--134}\BibitemShut {NoStop}%
\bibitem [{\citenamefont {Grover}(1996)}]{grover1996fast}%
  \BibitemOpen
  \bibfield  {author} {\bibinfo {author} {\bibfnamefont {L.~K.}\ \bibnamefont {Grover}},\ }\bibfield  {title} {\bibinfo {title} {A fast quantum mechanical algorithm for database search},\ }in\ \href@noop {} {\emph {\bibinfo {booktitle} {Proceedings of the twenty-eighth annual ACM symposium on Theory of computing}}}\ (\bibinfo {year} {1996})\ pp.\ \bibinfo {pages} {212--219}\BibitemShut {NoStop}%
\bibitem [{\citenamefont {Jordan}\ \emph {et~al.}(2025)\citenamefont {Jordan}, \citenamefont {Shutty}, \citenamefont {Wootters}, \citenamefont {Zalcman}, \citenamefont {Schmidhuber}, \citenamefont {King}, \citenamefont {Isakov}, \citenamefont {Khattar},\ and\ \citenamefont {Babbush}}]{jordan2025decoded}%
  \BibitemOpen
  \bibfield  {author} {\bibinfo {author} {\bibfnamefont {S.~P.}\ \bibnamefont {Jordan}}, \bibinfo {author} {\bibfnamefont {N.}~\bibnamefont {Shutty}}, \bibinfo {author} {\bibfnamefont {M.}~\bibnamefont {Wootters}}, \bibinfo {author} {\bibfnamefont {A.}~\bibnamefont {Zalcman}}, \bibinfo {author} {\bibfnamefont {A.}~\bibnamefont {Schmidhuber}}, \bibinfo {author} {\bibfnamefont {R.}~\bibnamefont {King}}, \bibinfo {author} {\bibfnamefont {S.~V.}\ \bibnamefont {Isakov}}, \bibinfo {author} {\bibfnamefont {T.}~\bibnamefont {Khattar}},\ and\ \bibinfo {author} {\bibfnamefont {R.}~\bibnamefont {Babbush}},\ }\bibfield  {title} {\bibinfo {title} {Optimization by decoded quantum interferometry},\ }\href@noop {} {\bibfield  {journal} {\bibinfo  {journal} {Nature}\ }\textbf {\bibinfo {volume} {646}},\ \bibinfo {pages} {831} (\bibinfo {year} {2025})}\BibitemShut {NoStop}%
\bibitem [{\citenamefont {Stoudenmire}\ and\ \citenamefont {Waintal}(2024)}]{stoudenmire2024opening}%
  \BibitemOpen
  \bibfield  {author} {\bibinfo {author} {\bibfnamefont {E.}~\bibnamefont {Stoudenmire}}\ and\ \bibinfo {author} {\bibfnamefont {X.}~\bibnamefont {Waintal}},\ }\bibfield  {title} {\bibinfo {title} {Opening the black box inside grover’s algorithm},\ }\href@noop {} {\bibfield  {journal} {\bibinfo  {journal} {Physical Review X}\ }\textbf {\bibinfo {volume} {14}},\ \bibinfo {pages} {041029} (\bibinfo {year} {2024})}\BibitemShut {NoStop}%
\bibitem [{\citenamefont {Biamonte}\ \emph {et~al.}(2017)\citenamefont {Biamonte}, \citenamefont {Wittek}, \citenamefont {Pancotti}, \citenamefont {Rebentrost}, \citenamefont {Wiebe},\ and\ \citenamefont {Lloyd}}]{biamonte2017quantum}%
  \BibitemOpen
  \bibfield  {author} {\bibinfo {author} {\bibfnamefont {J.}~\bibnamefont {Biamonte}}, \bibinfo {author} {\bibfnamefont {P.}~\bibnamefont {Wittek}}, \bibinfo {author} {\bibfnamefont {N.}~\bibnamefont {Pancotti}}, \bibinfo {author} {\bibfnamefont {P.}~\bibnamefont {Rebentrost}}, \bibinfo {author} {\bibfnamefont {N.}~\bibnamefont {Wiebe}},\ and\ \bibinfo {author} {\bibfnamefont {S.}~\bibnamefont {Lloyd}},\ }\bibfield  {title} {\bibinfo {title} {Quantum machine learning},\ }\href@noop {} {\bibfield  {journal} {\bibinfo  {journal} {Nature}\ }\textbf {\bibinfo {volume} {549}},\ \bibinfo {pages} {195} (\bibinfo {year} {2017})}\BibitemShut {NoStop}%
\bibitem [{\citenamefont {B{\"a}rtschi}\ and\ \citenamefont {Eidenbenz}(2019)}]{bartschi2019deterministic}%
  \BibitemOpen
  \bibfield  {author} {\bibinfo {author} {\bibfnamefont {A.}~\bibnamefont {B{\"a}rtschi}}\ and\ \bibinfo {author} {\bibfnamefont {S.}~\bibnamefont {Eidenbenz}},\ }\bibfield  {title} {\bibinfo {title} {Deterministic preparation of dicke states},\ }in\ \href@noop {} {\emph {\bibinfo {booktitle} {International Symposium on Fundamentals of Computation Theory}}}\ (\bibinfo {organization} {Springer},\ \bibinfo {year} {2019})\ pp.\ \bibinfo {pages} {126--139}\BibitemShut {NoStop}%
\bibitem [{\citenamefont {B{\"a}rtschi}\ and\ \citenamefont {Eidenbenz}(2022)}]{bartschi2022short}%
  \BibitemOpen
  \bibfield  {author} {\bibinfo {author} {\bibfnamefont {A.}~\bibnamefont {B{\"a}rtschi}}\ and\ \bibinfo {author} {\bibfnamefont {S.}~\bibnamefont {Eidenbenz}},\ }\bibfield  {title} {\bibinfo {title} {Short-depth circuits for dicke state preparation},\ }in\ \href@noop {} {\emph {\bibinfo {booktitle} {2022 IEEE International Conference on Quantum Computing and Engineering (QCE)}}}\ (\bibinfo {organization} {IEEE},\ \bibinfo {year} {2022})\ pp.\ \bibinfo {pages} {87--96}\BibitemShut {NoStop}%
\bibitem [{\citenamefont {Gleinig}\ and\ \citenamefont {Hoefler}(2021)}]{gleinig2021efficient}%
  \BibitemOpen
  \bibfield  {author} {\bibinfo {author} {\bibfnamefont {N.}~\bibnamefont {Gleinig}}\ and\ \bibinfo {author} {\bibfnamefont {T.}~\bibnamefont {Hoefler}},\ }\bibfield  {title} {\bibinfo {title} {An efficient algorithm for sparse quantum state preparation},\ }in\ \href@noop {} {\emph {\bibinfo {booktitle} {2021 58th ACM/IEEE Design Automation Conference (DAC)}}}\ (\bibinfo {organization} {IEEE},\ \bibinfo {year} {2021})\ pp.\ \bibinfo {pages} {433--438}\BibitemShut {NoStop}%
\bibitem [{\citenamefont {Mozafari}\ \emph {et~al.}(2022)\citenamefont {Mozafari}, \citenamefont {De~Micheli},\ and\ \citenamefont {Yang}}]{mozafari2022decision}%
  \BibitemOpen
  \bibfield  {author} {\bibinfo {author} {\bibfnamefont {F.}~\bibnamefont {Mozafari}}, \bibinfo {author} {\bibfnamefont {G.}~\bibnamefont {De~Micheli}},\ and\ \bibinfo {author} {\bibfnamefont {Y.}~\bibnamefont {Yang}},\ }\bibfield  {title} {\bibinfo {title} {Efficient deterministic preparation of quantum states using decision diagrams},\ }\href@noop {} {\bibfield  {journal} {\bibinfo  {journal} {Physical Review A}\ }\textbf {\bibinfo {volume} {106}},\ \bibinfo {pages} {022617} (\bibinfo {year} {2022})}\BibitemShut {NoStop}%
\bibitem [{\citenamefont {Araujo}\ \emph {et~al.}(2023)\citenamefont {Araujo}, \citenamefont {Blank}, \citenamefont {Ara{\'u}jo},\ and\ \citenamefont {da~Silva}}]{araujo2024lowrank}%
  \BibitemOpen
  \bibfield  {author} {\bibinfo {author} {\bibfnamefont {I.~F.}\ \bibnamefont {Araujo}}, \bibinfo {author} {\bibfnamefont {C.}~\bibnamefont {Blank}}, \bibinfo {author} {\bibfnamefont {I.~C.}\ \bibnamefont {Ara{\'u}jo}},\ and\ \bibinfo {author} {\bibfnamefont {A.~J.}\ \bibnamefont {da~Silva}},\ }\bibfield  {title} {\bibinfo {title} {Low-rank quantum state preparation},\ }\href@noop {} {\bibfield  {journal} {\bibinfo  {journal} {IEEE Transactions on Computer-Aided Design of Integrated Circuits and Systems}\ }\textbf {\bibinfo {volume} {43}},\ \bibinfo {pages} {161} (\bibinfo {year} {2023})}\BibitemShut {NoStop}%
\bibitem [{\citenamefont {Malz}\ \emph {et~al.}(2024)\citenamefont {Malz}, \citenamefont {Styliaris}, \citenamefont {Wei},\ and\ \citenamefont {Cirac}}]{malz2024mps}%
  \BibitemOpen
  \bibfield  {author} {\bibinfo {author} {\bibfnamefont {D.}~\bibnamefont {Malz}}, \bibinfo {author} {\bibfnamefont {G.}~\bibnamefont {Styliaris}}, \bibinfo {author} {\bibfnamefont {Z.-Y.}\ \bibnamefont {Wei}},\ and\ \bibinfo {author} {\bibfnamefont {J.~I.}\ \bibnamefont {Cirac}},\ }\bibfield  {title} {\bibinfo {title} {Preparation of matrix product states with log-depth quantum circuits},\ }\href@noop {} {\bibfield  {journal} {\bibinfo  {journal} {Physical Review Letters}\ }\textbf {\bibinfo {volume} {132}},\ \bibinfo {pages} {040404} (\bibinfo {year} {2024})}\BibitemShut {NoStop}%
\bibitem [{\citenamefont {Zhang}\ \emph {et~al.}(2026)\citenamefont {Zhang}, \citenamefont {Rattew}, \citenamefont {Wu}, \citenamefont {Styliaris}, \citenamefont {Sun}, \citenamefont {Koczor},\ and\ \citenamefont {Yuan}}]{zhang2026bits}%
  \BibitemOpen
  \bibfield  {author} {\bibinfo {author} {\bibfnamefont {X.-M.}\ \bibnamefont {Zhang}}, \bibinfo {author} {\bibfnamefont {A.~G.}\ \bibnamefont {Rattew}}, \bibinfo {author} {\bibfnamefont {B.}~\bibnamefont {Wu}}, \bibinfo {author} {\bibfnamefont {G.}~\bibnamefont {Styliaris}}, \bibinfo {author} {\bibfnamefont {X.}~\bibnamefont {Sun}}, \bibinfo {author} {\bibfnamefont {B.}~\bibnamefont {Koczor}},\ and\ \bibinfo {author} {\bibfnamefont {X.}~\bibnamefont {Yuan}},\ }\bibfield  {title} {\bibinfo {title} {From bits to qubits: The theory and practice of quantum data encoding},\ }\href@noop {} {\bibfield  {journal} {\bibinfo  {journal} {arXiv preprint arXiv:2609.08058}\ } (\bibinfo {year} {2026})}\BibitemShut {NoStop}%
\bibitem [{\citenamefont {Bernien}\ \emph {et~al.}(2017)\citenamefont {Bernien}, \citenamefont {Schwartz}, \citenamefont {Keesling}, \citenamefont {Levine}, \citenamefont {Omran}, \citenamefont {Pichler}, \citenamefont {Choi}, \citenamefont {Zibrov}, \citenamefont {Endres}, \citenamefont {Greiner} \emph {et~al.}}]{bernien2017probing}%
  \BibitemOpen
  \bibfield  {author} {\bibinfo {author} {\bibfnamefont {H.}~\bibnamefont {Bernien}}, \bibinfo {author} {\bibfnamefont {S.}~\bibnamefont {Schwartz}}, \bibinfo {author} {\bibfnamefont {A.}~\bibnamefont {Keesling}}, \bibinfo {author} {\bibfnamefont {H.}~\bibnamefont {Levine}}, \bibinfo {author} {\bibfnamefont {A.}~\bibnamefont {Omran}}, \bibinfo {author} {\bibfnamefont {H.}~\bibnamefont {Pichler}}, \bibinfo {author} {\bibfnamefont {S.}~\bibnamefont {Choi}}, \bibinfo {author} {\bibfnamefont {A.~S.}\ \bibnamefont {Zibrov}}, \bibinfo {author} {\bibfnamefont {M.}~\bibnamefont {Endres}}, \bibinfo {author} {\bibfnamefont {M.}~\bibnamefont {Greiner}}, \emph {et~al.},\ }\bibfield  {title} {\bibinfo {title} {Probing many-body dynamics on a 51-atom quantum simulator},\ }\href@noop {} {\bibfield  {journal} {\bibinfo  {journal} {Nature}\ }\textbf {\bibinfo {volume} {551}},\ \bibinfo {pages} {579} (\bibinfo {year} {2017})}\BibitemShut {NoStop}%
\bibitem [{\citenamefont {Ebadi}\ \emph {et~al.}(2022)\citenamefont {Ebadi}, \citenamefont {Keesling}, \citenamefont {Cain}, \citenamefont {Wang}, \citenamefont {Levine}, \citenamefont {Bluvstein}, \citenamefont {Semeghini}, \citenamefont {Omran}, \citenamefont {Liu}, \citenamefont {Samajdar} \emph {et~al.}}]{ebadi2022quantum}%
  \BibitemOpen
  \bibfield  {author} {\bibinfo {author} {\bibfnamefont {S.}~\bibnamefont {Ebadi}}, \bibinfo {author} {\bibfnamefont {A.}~\bibnamefont {Keesling}}, \bibinfo {author} {\bibfnamefont {M.}~\bibnamefont {Cain}}, \bibinfo {author} {\bibfnamefont {T.~T.}\ \bibnamefont {Wang}}, \bibinfo {author} {\bibfnamefont {H.}~\bibnamefont {Levine}}, \bibinfo {author} {\bibfnamefont {D.}~\bibnamefont {Bluvstein}}, \bibinfo {author} {\bibfnamefont {G.}~\bibnamefont {Semeghini}}, \bibinfo {author} {\bibfnamefont {A.}~\bibnamefont {Omran}}, \bibinfo {author} {\bibfnamefont {J.-G.}\ \bibnamefont {Liu}}, \bibinfo {author} {\bibfnamefont {R.}~\bibnamefont {Samajdar}}, \emph {et~al.},\ }\bibfield  {title} {\bibinfo {title} {Quantum optimization of maximum independent set using rydberg atom arrays},\ }\href@noop {} {\bibfield  {journal} {\bibinfo  {journal} {Science}\ }\textbf {\bibinfo {volume} {376}},\ \bibinfo {pages} {1209} (\bibinfo {year} {2022})}\BibitemShut {NoStop}%
\bibitem [{\citenamefont {Pardo}\ \emph {et~al.}(2023)\citenamefont {Pardo}, \citenamefont {Greenberg}, \citenamefont {Fortinsky}, \citenamefont {Katz},\ and\ \citenamefont {Zohar}}]{pardo2023resource}%
  \BibitemOpen
  \bibfield  {author} {\bibinfo {author} {\bibfnamefont {G.}~\bibnamefont {Pardo}}, \bibinfo {author} {\bibfnamefont {T.}~\bibnamefont {Greenberg}}, \bibinfo {author} {\bibfnamefont {A.}~\bibnamefont {Fortinsky}}, \bibinfo {author} {\bibfnamefont {N.}~\bibnamefont {Katz}},\ and\ \bibinfo {author} {\bibfnamefont {E.}~\bibnamefont {Zohar}},\ }\bibfield  {title} {\bibinfo {title} {Resource-efficient quantum simulation of lattice gauge theories in arbitrary dimensions: Solving for gauss's law and fermion elimination},\ }\href@noop {} {\bibfield  {journal} {\bibinfo  {journal} {Physical Review Research}\ }\textbf {\bibinfo {volume} {5}},\ \bibinfo {pages} {023077} (\bibinfo {year} {2023})}\BibitemShut {NoStop}%
\bibitem [{\citenamefont {Sharma}\ and\ \citenamefont {Mueller}(2024)}]{sharma2024gauss}%
  \BibitemOpen
  \bibfield  {author} {\bibinfo {author} {\bibfnamefont {V.}~\bibnamefont {Sharma}}\ and\ \bibinfo {author} {\bibfnamefont {E.~J.}\ \bibnamefont {Mueller}},\ }\bibfield  {title} {\bibinfo {title} {One-dimensional z 2 lattice gauge theory in periodic gauss-law sectors},\ }\href@noop {} {\bibfield  {journal} {\bibinfo  {journal} {Physical Review A}\ }\textbf {\bibinfo {volume} {110}},\ \bibinfo {pages} {033314} (\bibinfo {year} {2024})}\BibitemShut {NoStop}%
\bibitem [{\citenamefont {Li}\ and\ \citenamefont {Luo}(2024)}]{li2024nearly}%
  \BibitemOpen
  \bibfield  {author} {\bibinfo {author} {\bibfnamefont {L.}~\bibnamefont {Li}}\ and\ \bibinfo {author} {\bibfnamefont {J.}~\bibnamefont {Luo}},\ }\bibfield  {title} {\bibinfo {title} {Nearly optimal circuit size for sparse quantum state preparation},\ }\href@noop {} {\bibfield  {journal} {\bibinfo  {journal} {arXiv preprint arXiv:2406.16142}\ } (\bibinfo {year} {2024})}\BibitemShut {NoStop}%
\bibitem [{\citenamefont {Mao}\ \emph {et~al.}(2024)\citenamefont {Mao}, \citenamefont {Tian},\ and\ \citenamefont {Sun}}]{mao2024toward}%
  \BibitemOpen
  \bibfield  {author} {\bibinfo {author} {\bibfnamefont {R.}~\bibnamefont {Mao}}, \bibinfo {author} {\bibfnamefont {G.}~\bibnamefont {Tian}},\ and\ \bibinfo {author} {\bibfnamefont {X.}~\bibnamefont {Sun}},\ }\bibfield  {title} {\bibinfo {title} {Toward optimal circuit size for sparse quantum state preparation},\ }\href@noop {} {\bibfield  {journal} {\bibinfo  {journal} {Physical Review A}\ }\textbf {\bibinfo {volume} {110}},\ \bibinfo {pages} {032439} (\bibinfo {year} {2024})}\BibitemShut {NoStop}%
\bibitem [{\citenamefont {Melo}\ \emph {et~al.}(2009)\citenamefont {Melo}, \citenamefont {Nickel},\ and\ \citenamefont {Saldanha-Da-Gama}}]{melo2009facility}%
  \BibitemOpen
  \bibfield  {author} {\bibinfo {author} {\bibfnamefont {M.~T.}\ \bibnamefont {Melo}}, \bibinfo {author} {\bibfnamefont {S.}~\bibnamefont {Nickel}},\ and\ \bibinfo {author} {\bibfnamefont {F.}~\bibnamefont {Saldanha-Da-Gama}},\ }\bibfield  {title} {\bibinfo {title} {Facility location and supply chain management--a review},\ }\href@noop {} {\bibfield  {journal} {\bibinfo  {journal} {European journal of operational research}\ }\textbf {\bibinfo {volume} {196}},\ \bibinfo {pages} {401} (\bibinfo {year} {2009})}\BibitemShut {NoStop}%
\bibitem [{\citenamefont {Impagliazzo}\ and\ \citenamefont {Paturi}(2001)}]{impagliazzo2001complexity}%
  \BibitemOpen
  \bibfield  {author} {\bibinfo {author} {\bibfnamefont {R.}~\bibnamefont {Impagliazzo}}\ and\ \bibinfo {author} {\bibfnamefont {R.}~\bibnamefont {Paturi}},\ }\bibfield  {title} {\bibinfo {title} {On the complexity of k-sat},\ }\href@noop {} {\bibfield  {journal} {\bibinfo  {journal} {Journal of Computer and System Sciences}\ }\textbf {\bibinfo {volume} {62}},\ \bibinfo {pages} {367} (\bibinfo {year} {2001})}\BibitemShut {NoStop}%
\bibitem [{\citenamefont {Cygan}\ \emph {et~al.}(2016)\citenamefont {Cygan}, \citenamefont {Dell}, \citenamefont {Lokshtanov}, \citenamefont {Marx}, \citenamefont {Nederlof}, \citenamefont {Okamoto}, \citenamefont {Paturi}, \citenamefont {Saurabh},\ and\ \citenamefont {Wahlstr{\"o}m}}]{cygan2016cnfsat}%
  \BibitemOpen
  \bibfield  {author} {\bibinfo {author} {\bibfnamefont {M.}~\bibnamefont {Cygan}}, \bibinfo {author} {\bibfnamefont {H.}~\bibnamefont {Dell}}, \bibinfo {author} {\bibfnamefont {D.}~\bibnamefont {Lokshtanov}}, \bibinfo {author} {\bibfnamefont {D.}~\bibnamefont {Marx}}, \bibinfo {author} {\bibfnamefont {J.}~\bibnamefont {Nederlof}}, \bibinfo {author} {\bibfnamefont {Y.}~\bibnamefont {Okamoto}}, \bibinfo {author} {\bibfnamefont {R.}~\bibnamefont {Paturi}}, \bibinfo {author} {\bibfnamefont {S.}~\bibnamefont {Saurabh}},\ and\ \bibinfo {author} {\bibfnamefont {M.}~\bibnamefont {Wahlstr{\"o}m}},\ }\bibfield  {title} {\bibinfo {title} {On problems as hard as cnf-sat},\ }\href@noop {} {\bibfield  {journal} {\bibinfo  {journal} {ACM Transactions on Algorithms (TALG)}\ }\textbf {\bibinfo {volume} {12}},\ \bibinfo {pages} {1} (\bibinfo {year} {2016})}\BibitemShut {NoStop}%
\bibitem [{\citenamefont {P{\u{a}}tra{\c{s}}cu}\ and\ \citenamefont {Williams}(2010)}]{patrascu2010possibility}%
  \BibitemOpen
  \bibfield  {author} {\bibinfo {author} {\bibfnamefont {M.}~\bibnamefont {P{\u{a}}tra{\c{s}}cu}}\ and\ \bibinfo {author} {\bibfnamefont {R.}~\bibnamefont {Williams}},\ }\bibfield  {title} {\bibinfo {title} {On the possibility of faster sat algorithms},\ }in\ \href@noop {} {\emph {\bibinfo {booktitle} {Proceedings of the twenty-first annual ACM-SIAM symposium on Discrete Algorithms}}}\ (\bibinfo {organization} {SIAM},\ \bibinfo {year} {2010})\ pp.\ \bibinfo {pages} {1065--1075}\BibitemShut {NoStop}%
\bibitem [{\citenamefont {Durr}\ and\ \citenamefont {Hoyer}(1996)}]{durr1996quantum}%
  \BibitemOpen
  \bibfield  {author} {\bibinfo {author} {\bibfnamefont {C.}~\bibnamefont {Durr}}\ and\ \bibinfo {author} {\bibfnamefont {P.}~\bibnamefont {Hoyer}},\ }\bibfield  {title} {\bibinfo {title} {A quantum algorithm for finding the minimum},\ }\href@noop {} {\bibfield  {journal} {\bibinfo  {journal} {arXiv preprint quant-ph/9607014}\ } (\bibinfo {year} {1996})}\BibitemShut {NoStop}%
\bibitem [{\citenamefont {Gilliam}\ \emph {et~al.}(2021)\citenamefont {Gilliam}, \citenamefont {Woerner},\ and\ \citenamefont {Gonciulea}}]{gilliam2021grover}%
  \BibitemOpen
  \bibfield  {author} {\bibinfo {author} {\bibfnamefont {A.}~\bibnamefont {Gilliam}}, \bibinfo {author} {\bibfnamefont {S.}~\bibnamefont {Woerner}},\ and\ \bibinfo {author} {\bibfnamefont {C.}~\bibnamefont {Gonciulea}},\ }\bibfield  {title} {\bibinfo {title} {Grover adaptive search for constrained polynomial binary optimization},\ }\href@noop {} {\bibfield  {journal} {\bibinfo  {journal} {Quantum}\ }\textbf {\bibinfo {volume} {5}},\ \bibinfo {pages} {428} (\bibinfo {year} {2021})}\BibitemShut {NoStop}%
\bibitem [{\citenamefont {Bu}\ \emph {et~al.}(2026)\citenamefont {Bu}, \citenamefont {Gu},\ and\ \citenamefont {Li}}]{bu2026multivariate}%
  \BibitemOpen
  \bibfield  {author} {\bibinfo {author} {\bibfnamefont {K.}~\bibnamefont {Bu}}, \bibinfo {author} {\bibfnamefont {W.}~\bibnamefont {Gu}},\ and\ \bibinfo {author} {\bibfnamefont {X.}~\bibnamefont {Li}},\ }\bibfield  {title} {\bibinfo {title} {Multivariate decoded quantum interferometry for weighted optimization},\ }\href@noop {} {\bibfield  {journal} {\bibinfo  {journal} {arXiv preprint arXiv:2605.10666}\ } (\bibinfo {year} {2026})}\BibitemShut {NoStop}%
\bibitem [{\citenamefont {Harrow}\ \emph {et~al.}(2009)\citenamefont {Harrow}, \citenamefont {Hassidim},\ and\ \citenamefont {Lloyd}}]{harrow2009quantum}%
  \BibitemOpen
  \bibfield  {author} {\bibinfo {author} {\bibfnamefont {A.~W.}\ \bibnamefont {Harrow}}, \bibinfo {author} {\bibfnamefont {A.}~\bibnamefont {Hassidim}},\ and\ \bibinfo {author} {\bibfnamefont {S.}~\bibnamefont {Lloyd}},\ }\bibfield  {title} {\bibinfo {title} {Quantum algorithm for linear systems of equations},\ }\href@noop {} {\bibfield  {journal} {\bibinfo  {journal} {Physical review letters}\ }\textbf {\bibinfo {volume} {103}},\ \bibinfo {pages} {150502} (\bibinfo {year} {2009})}\BibitemShut {NoStop}%
\bibitem [{\citenamefont {Chia}\ \emph {et~al.}(2022)\citenamefont {Chia}, \citenamefont {Gily{\'e}n}, \citenamefont {Li}, \citenamefont {Lin}, \citenamefont {Tang},\ and\ \citenamefont {Wang}}]{chia2022sampling}%
  \BibitemOpen
  \bibfield  {author} {\bibinfo {author} {\bibfnamefont {N.-H.}\ \bibnamefont {Chia}}, \bibinfo {author} {\bibfnamefont {A.~P.}\ \bibnamefont {Gily{\'e}n}}, \bibinfo {author} {\bibfnamefont {T.}~\bibnamefont {Li}}, \bibinfo {author} {\bibfnamefont {H.-H.}\ \bibnamefont {Lin}}, \bibinfo {author} {\bibfnamefont {E.}~\bibnamefont {Tang}},\ and\ \bibinfo {author} {\bibfnamefont {C.}~\bibnamefont {Wang}},\ }\bibfield  {title} {\bibinfo {title} {Sampling-based sublinear low-rank matrix arithmetic framework for dequantizing quantum machine learning},\ }\href@noop {} {\bibfield  {journal} {\bibinfo  {journal} {Journal of the ACM}\ }\textbf {\bibinfo {volume} {69}},\ \bibinfo {pages} {1} (\bibinfo {year} {2022})}\BibitemShut {NoStop}%
\bibitem [{\citenamefont {Bryant}(1986)}]{bryant1986graph}%
  \BibitemOpen
  \bibfield  {author} {\bibinfo {author} {\bibfnamefont {R.~E.}\ \bibnamefont {Bryant}},\ }\bibfield  {title} {\bibinfo {title} {Graph-based algorithms for boolean function manipulation},\ }\href@noop {} {\bibfield  {journal} {\bibinfo  {journal} {Computers, IEEE Transactions on}\ }\textbf {\bibinfo {volume} {C-35}},\ \bibinfo {pages} {677} (\bibinfo {year} {1986})}\BibitemShut {NoStop}%
\bibitem [{\citenamefont {Javadi-Abhari}\ \emph {et~al.}(2024)\citenamefont {Javadi-Abhari}, \citenamefont {Treinish}, \citenamefont {Krsulich}, \citenamefont {Wood}, \citenamefont {Lishman}, \citenamefont {Gacon}, \citenamefont {Martiel}, \citenamefont {Nation}, \citenamefont {Bishop}, \citenamefont {Cross} \emph {et~al.}}]{qiskit2024}%
  \BibitemOpen
  \bibfield  {author} {\bibinfo {author} {\bibfnamefont {A.}~\bibnamefont {Javadi-Abhari}}, \bibinfo {author} {\bibfnamefont {M.}~\bibnamefont {Treinish}}, \bibinfo {author} {\bibfnamefont {K.}~\bibnamefont {Krsulich}}, \bibinfo {author} {\bibfnamefont {C.~J.}\ \bibnamefont {Wood}}, \bibinfo {author} {\bibfnamefont {J.}~\bibnamefont {Lishman}}, \bibinfo {author} {\bibfnamefont {J.}~\bibnamefont {Gacon}}, \bibinfo {author} {\bibfnamefont {S.}~\bibnamefont {Martiel}}, \bibinfo {author} {\bibfnamefont {P.~D.}\ \bibnamefont {Nation}}, \bibinfo {author} {\bibfnamefont {L.~S.}\ \bibnamefont {Bishop}}, \bibinfo {author} {\bibfnamefont {A.~W.}\ \bibnamefont {Cross}}, \emph {et~al.},\ }\bibfield  {title} {\bibinfo {title} {Quantum computing with qiskit},\ }\href@noop {} {\bibfield  {journal} {\bibinfo  {journal} {arXiv preprint arXiv:2405.08810}\ } (\bibinfo {year} {2024})}\BibitemShut {NoStop}%
\bibitem [{\citenamefont {Jiang}\ \emph {et~al.}(2026)\citenamefont {Jiang}, \citenamefont {Zhang}, \citenamefont {Xiang}, \citenamefont {Yuan}, \citenamefont {Lu},\ and\ \citenamefont {Yin}}]{jiang_2026_23042152}%
  \BibitemOpen
  \bibfield  {author} {\bibinfo {author} {\bibfnamefont {Q.}~\bibnamefont {Jiang}}, \bibinfo {author} {\bibfnamefont {X.-M.}\ \bibnamefont {Zhang}}, \bibinfo {author} {\bibfnamefont {D.}~\bibnamefont {Xiang}}, \bibinfo {author} {\bibfnamefont {X.}~\bibnamefont {Yuan}}, \bibinfo {author} {\bibfnamefont {L.}~\bibnamefont {Lu}},\ and\ \bibinfo {author} {\bibfnamefont {J.}~\bibnamefont {Yin}},\ }\href {https://doi.org/10.5281/zenodo.23042152} {\bibinfo {title} {{Harnessing} problem structure for end-to-end quantum speed-ups}},\ \bibinfo {howpublished} {\emph{Zenodo} \url{https://doi.org/10.5281/zenodo.23042152}} (\bibinfo {year} {2026})\BibitemShut {NoStop}%
\end{thebibliography}%


\begin{thebibliography}{18}%
\makeatletter
\providecommand \@ifxundefined [1]{%
 \@ifx{#1\undefined}
}%
\providecommand \@ifnum [1]{%
 \ifnum #1\expandafter \@firstoftwo
 \else \expandafter \@secondoftwo
 \fi
}%
\providecommand \@ifx [1]{%
 \ifx #1\expandafter \@firstoftwo
 \else \expandafter \@secondoftwo
 \fi
}%
\providecommand \natexlab [1]{#1}%
\providecommand \enquote  [1]{``#1''}%
\providecommand \bibnamefont  [1]{#1}%
\providecommand \bibfnamefont [1]{#1}%
\providecommand \citenamefont [1]{#1}%
\providecommand \href@noop [0]{\@secondoftwo}%
\providecommand \href [0]{\begingroup \@sanitize@url \@href}%
\providecommand \@href[1]{\@@startlink{#1}\@@href}%
\providecommand \@@href[1]{\endgroup#1\@@endlink}%
\providecommand \@sanitize@url [0]{\catcode `\\12\catcode `\$12\catcode `\&12\catcode `\#12\catcode `\^12\catcode `\_12\catcode `\%12\relax}%
\providecommand \@@startlink[1]{}%
\providecommand \@@endlink[0]{}%
\providecommand \url  [0]{\begingroup\@sanitize@url \@url }%
\providecommand \@url [1]{\endgroup\@href {#1}{\urlprefix }}%
\providecommand \urlprefix  [0]{URL }%
\providecommand \Eprint [0]{\href }%
\providecommand \doibase [0]{https://doi.org/}%
\providecommand \selectlanguage [0]{\@gobble}%
\providecommand \bibinfo  [0]{\@secondoftwo}%
\providecommand \bibfield  [0]{\@secondoftwo}%
\providecommand \translation [1]{[#1]}%
\providecommand \BibitemOpen [0]{}%
\providecommand \bibitemStop [0]{}%
\providecommand \bibitemNoStop [0]{.\EOS\space}%
\providecommand \EOS [0]{\spacefactor3000\relax}%
\providecommand \BibitemShut  [1]{\csname bibitem#1\endcsname}%
\let\auto@bib@innerbib\@empty
\bibitem [{\citenamefont {Huang}\ and\ \citenamefont {Palsberg}(2024)}]{huang2024compiling}%
  \BibitemOpen
  \bibfield  {author} {\bibinfo {author} {\bibfnamefont {K.}~\bibnamefont {Huang}}\ and\ \bibinfo {author} {\bibfnamefont {J.}~\bibnamefont {Palsberg}},\ }\bibfield  {title} {\bibinfo {title} {Compiling conditional quantum gates without using helper qubits},\ }\href@noop {} {\bibfield  {journal} {\bibinfo  {journal} {Proceedings of the ACM on Programming Languages}\ }\textbf {\bibinfo {volume} {8}},\ \bibinfo {pages} {1463} (\bibinfo {year} {2024})}\BibitemShut {NoStop}%
\bibitem [{\citenamefont {Araujo}\ \emph {et~al.}(2023)\citenamefont {Araujo}, \citenamefont {Blank}, \citenamefont {Ara{\'u}jo},\ and\ \citenamefont {da~Silva}}]{araujo2024lowrank}%
  \BibitemOpen
  \bibfield  {author} {\bibinfo {author} {\bibfnamefont {I.~F.}\ \bibnamefont {Araujo}}, \bibinfo {author} {\bibfnamefont {C.}~\bibnamefont {Blank}}, \bibinfo {author} {\bibfnamefont {I.~C.}\ \bibnamefont {Ara{\'u}jo}},\ and\ \bibinfo {author} {\bibfnamefont {A.~J.}\ \bibnamefont {da~Silva}},\ }\bibfield  {title} {\bibinfo {title} {Low-rank quantum state preparation},\ }\href@noop {} {\bibfield  {journal} {\bibinfo  {journal} {IEEE Transactions on Computer-Aided Design of Integrated Circuits and Systems}\ }\textbf {\bibinfo {volume} {43}},\ \bibinfo {pages} {161} (\bibinfo {year} {2023})}\BibitemShut {NoStop}%
\bibitem [{\citenamefont {Malz}\ \emph {et~al.}(2024)\citenamefont {Malz}, \citenamefont {Styliaris}, \citenamefont {Wei},\ and\ \citenamefont {Cirac}}]{malz2024mps}%
  \BibitemOpen
  \bibfield  {author} {\bibinfo {author} {\bibfnamefont {D.}~\bibnamefont {Malz}}, \bibinfo {author} {\bibfnamefont {G.}~\bibnamefont {Styliaris}}, \bibinfo {author} {\bibfnamefont {Z.-Y.}\ \bibnamefont {Wei}},\ and\ \bibinfo {author} {\bibfnamefont {J.~I.}\ \bibnamefont {Cirac}},\ }\bibfield  {title} {\bibinfo {title} {Preparation of matrix product states with log-depth quantum circuits},\ }\href@noop {} {\bibfield  {journal} {\bibinfo  {journal} {Physical Review Letters}\ }\textbf {\bibinfo {volume} {132}},\ \bibinfo {pages} {040404} (\bibinfo {year} {2024})}\BibitemShut {NoStop}%
\bibitem [{\citenamefont {Jin}\ and\ \citenamefont {Jing}(2025)}]{jin2025ghz}%
  \BibitemOpen
  \bibfield  {author} {\bibinfo {author} {\bibfnamefont {Z.-y.}\ \bibnamefont {Jin}}\ and\ \bibinfo {author} {\bibfnamefont {J.}~\bibnamefont {Jing}},\ }\bibfield  {title} {\bibinfo {title} {Preparing greenberger-horne-zeilinger states on ground levels of neutral atoms},\ }\href@noop {} {\bibfield  {journal} {\bibinfo  {journal} {Physical Review A}\ }\textbf {\bibinfo {volume} {112}},\ \bibinfo {pages} {022602} (\bibinfo {year} {2025})}\BibitemShut {NoStop}%
\bibitem [{\citenamefont {Cao}\ \emph {et~al.}(2024)\citenamefont {Cao}, \citenamefont {Hansen}, \citenamefont {Giorgino}, \citenamefont {Carosini}, \citenamefont {Zah{\'a}lka}, \citenamefont {Zilk}, \citenamefont {Loredo},\ and\ \citenamefont {Walther}}]{cao2024ghz}%
  \BibitemOpen
  \bibfield  {author} {\bibinfo {author} {\bibfnamefont {H.}~\bibnamefont {Cao}}, \bibinfo {author} {\bibfnamefont {L.}~\bibnamefont {Hansen}}, \bibinfo {author} {\bibfnamefont {F.}~\bibnamefont {Giorgino}}, \bibinfo {author} {\bibfnamefont {L.}~\bibnamefont {Carosini}}, \bibinfo {author} {\bibfnamefont {P.}~\bibnamefont {Zah{\'a}lka}}, \bibinfo {author} {\bibfnamefont {F.}~\bibnamefont {Zilk}}, \bibinfo {author} {\bibfnamefont {J.}~\bibnamefont {Loredo}},\ and\ \bibinfo {author} {\bibfnamefont {P.}~\bibnamefont {Walther}},\ }\bibfield  {title} {\bibinfo {title} {Photonic source of heralded greenberger-horne-zeilinger states},\ }\href@noop {} {\bibfield  {journal} {\bibinfo  {journal} {Physical Review Letters}\ }\textbf {\bibinfo {volume} {132}},\ \bibinfo {pages} {130604} (\bibinfo {year} {2024})}\BibitemShut {NoStop}%
\bibitem [{\citenamefont {Gilliam}\ \emph {et~al.}(2019)\citenamefont {Gilliam}, \citenamefont {Venci}, \citenamefont {Muralidharan}, \citenamefont {Dorum}, \citenamefont {May}, \citenamefont {Narasimhan},\ and\ \citenamefont {Gonciulea}}]{gilliam2019foundational}%
  \BibitemOpen
  \bibfield  {author} {\bibinfo {author} {\bibfnamefont {A.}~\bibnamefont {Gilliam}}, \bibinfo {author} {\bibfnamefont {C.}~\bibnamefont {Venci}}, \bibinfo {author} {\bibfnamefont {S.}~\bibnamefont {Muralidharan}}, \bibinfo {author} {\bibfnamefont {V.}~\bibnamefont {Dorum}}, \bibinfo {author} {\bibfnamefont {E.}~\bibnamefont {May}}, \bibinfo {author} {\bibfnamefont {R.}~\bibnamefont {Narasimhan}},\ and\ \bibinfo {author} {\bibfnamefont {C.}~\bibnamefont {Gonciulea}},\ }\bibfield  {title} {\bibinfo {title} {Foundational patterns for efficient quantum computing},\ }\href@noop {} {\bibfield  {journal} {\bibinfo  {journal} {arXiv preprint arXiv:1907.11513}\ } (\bibinfo {year} {2019})}\BibitemShut {NoStop}%
\bibitem [{\citenamefont {Gilliam}\ \emph {et~al.}(2021)\citenamefont {Gilliam}, \citenamefont {Woerner},\ and\ \citenamefont {Gonciulea}}]{gilliam2021grover}%
  \BibitemOpen
  \bibfield  {author} {\bibinfo {author} {\bibfnamefont {A.}~\bibnamefont {Gilliam}}, \bibinfo {author} {\bibfnamefont {S.}~\bibnamefont {Woerner}},\ and\ \bibinfo {author} {\bibfnamefont {C.}~\bibnamefont {Gonciulea}},\ }\bibfield  {title} {\bibinfo {title} {Grover adaptive search for constrained polynomial binary optimization},\ }\href@noop {} {\bibfield  {journal} {\bibinfo  {journal} {Quantum}\ }\textbf {\bibinfo {volume} {5}},\ \bibinfo {pages} {428} (\bibinfo {year} {2021})}\BibitemShut {NoStop}%
\bibitem [{\citenamefont {Grover}(1996)}]{grover1996fast}%
  \BibitemOpen
  \bibfield  {author} {\bibinfo {author} {\bibfnamefont {L.~K.}\ \bibnamefont {Grover}},\ }\bibfield  {title} {\bibinfo {title} {A fast quantum mechanical algorithm for database search},\ }in\ \href@noop {} {\emph {\bibinfo {booktitle} {Proceedings of the twenty-eighth annual ACM symposium on Theory of computing}}}\ (\bibinfo {year} {1996})\ pp.\ \bibinfo {pages} {212--219}\BibitemShut {NoStop}%
\bibitem [{\citenamefont {Brassard}\ \emph {et~al.}(2000)\citenamefont {Brassard}, \citenamefont {Hoyer}, \citenamefont {Mosca},\ and\ \citenamefont {Tapp}}]{brassard2000quantum}%
  \BibitemOpen
  \bibfield  {author} {\bibinfo {author} {\bibfnamefont {G.}~\bibnamefont {Brassard}}, \bibinfo {author} {\bibfnamefont {P.}~\bibnamefont {Hoyer}}, \bibinfo {author} {\bibfnamefont {M.}~\bibnamefont {Mosca}},\ and\ \bibinfo {author} {\bibfnamefont {A.}~\bibnamefont {Tapp}},\ }\bibfield  {title} {\bibinfo {title} {Quantum amplitude amplification and estimation},\ }\href@noop {} {\bibfield  {journal} {\bibinfo  {journal} {arXiv preprint quant-ph/0005055}\ } (\bibinfo {year} {2000})}\BibitemShut {NoStop}%
\bibitem [{\citenamefont {Vedral}\ \emph {et~al.}(1996)\citenamefont {Vedral}, \citenamefont {Barenco},\ and\ \citenamefont {Ekert}}]{vedral1996quantum}%
  \BibitemOpen
  \bibfield  {author} {\bibinfo {author} {\bibfnamefont {V.}~\bibnamefont {Vedral}}, \bibinfo {author} {\bibfnamefont {A.}~\bibnamefont {Barenco}},\ and\ \bibinfo {author} {\bibfnamefont {A.}~\bibnamefont {Ekert}},\ }\bibfield  {title} {\bibinfo {title} {Quantum networks for elementary arithmetic operations},\ }\href@noop {} {\bibfield  {journal} {\bibinfo  {journal} {Physical Review A}\ }\textbf {\bibinfo {volume} {54}},\ \bibinfo {pages} {147} (\bibinfo {year} {1996})}\BibitemShut {NoStop}%
\bibitem [{\citenamefont {Barenco}\ \emph {et~al.}(1995)\citenamefont {Barenco}, \citenamefont {Bennett}, \citenamefont {Cleve}, \citenamefont {DiVincenzo}, \citenamefont {Margolus}, \citenamefont {Shor}, \citenamefont {Sleator}, \citenamefont {Smolin},\ and\ \citenamefont {Weinfurter}}]{barenco1995elementary}%
  \BibitemOpen
  \bibfield  {author} {\bibinfo {author} {\bibfnamefont {A.}~\bibnamefont {Barenco}}, \bibinfo {author} {\bibfnamefont {C.~H.}\ \bibnamefont {Bennett}}, \bibinfo {author} {\bibfnamefont {R.}~\bibnamefont {Cleve}}, \bibinfo {author} {\bibfnamefont {D.~P.}\ \bibnamefont {DiVincenzo}}, \bibinfo {author} {\bibfnamefont {N.}~\bibnamefont {Margolus}}, \bibinfo {author} {\bibfnamefont {P.}~\bibnamefont {Shor}}, \bibinfo {author} {\bibfnamefont {T.}~\bibnamefont {Sleator}}, \bibinfo {author} {\bibfnamefont {J.~A.}\ \bibnamefont {Smolin}},\ and\ \bibinfo {author} {\bibfnamefont {H.}~\bibnamefont {Weinfurter}},\ }\bibfield  {title} {\bibinfo {title} {Elementary gates for quantum computation},\ }\href@noop {} {\bibfield  {journal} {\bibinfo  {journal} {Physical review A}\ }\textbf {\bibinfo {volume} {52}},\ \bibinfo {pages} {3457} (\bibinfo {year} {1995})}\BibitemShut {NoStop}%
\bibitem [{\citenamefont {Durr}\ and\ \citenamefont {Hoyer}(1996)}]{durr1996quantum}%
  \BibitemOpen
  \bibfield  {author} {\bibinfo {author} {\bibfnamefont {C.}~\bibnamefont {Durr}}\ and\ \bibinfo {author} {\bibfnamefont {P.}~\bibnamefont {Hoyer}},\ }\bibfield  {title} {\bibinfo {title} {A quantum algorithm for finding the minimum},\ }\href@noop {} {\bibfield  {journal} {\bibinfo  {journal} {arXiv preprint quant-ph/9607014}\ } (\bibinfo {year} {1996})}\BibitemShut {NoStop}%
\bibitem [{\citenamefont {P{\u{a}}tra{\c{s}}cu}\ and\ \citenamefont {Williams}(2010)}]{patrascu2010possibility}%
  \BibitemOpen
  \bibfield  {author} {\bibinfo {author} {\bibfnamefont {M.}~\bibnamefont {P{\u{a}}tra{\c{s}}cu}}\ and\ \bibinfo {author} {\bibfnamefont {R.}~\bibnamefont {Williams}},\ }\bibfield  {title} {\bibinfo {title} {On the possibility of faster sat algorithms},\ }in\ \href@noop {} {\emph {\bibinfo {booktitle} {Proceedings of the twenty-first annual ACM-SIAM symposium on Discrete Algorithms}}}\ (\bibinfo {organization} {SIAM},\ \bibinfo {year} {2010})\ pp.\ \bibinfo {pages} {1065--1075}\BibitemShut {NoStop}%
\bibitem [{\citenamefont {Ross}\ and\ \citenamefont {Selinger}(2016)}]{ross2016optimal}%
  \BibitemOpen
  \bibfield  {author} {\bibinfo {author} {\bibfnamefont {N.~J.}\ \bibnamefont {Ross}}\ and\ \bibinfo {author} {\bibfnamefont {P.}~\bibnamefont {Selinger}},\ }\bibfield  {title} {\bibinfo {title} {Optimal ancilla-free clifford+ t approximation of z-rotations.},\ }\href@noop {} {\bibfield  {journal} {\bibinfo  {journal} {Quantum Inf. Comput.}\ }\textbf {\bibinfo {volume} {16}},\ \bibinfo {pages} {901} (\bibinfo {year} {2016})}\BibitemShut {NoStop}%
\bibitem [{\citenamefont {Impagliazzo}\ and\ \citenamefont {Paturi}(2001)}]{impagliazzo2001complexity}%
  \BibitemOpen
  \bibfield  {author} {\bibinfo {author} {\bibfnamefont {R.}~\bibnamefont {Impagliazzo}}\ and\ \bibinfo {author} {\bibfnamefont {R.}~\bibnamefont {Paturi}},\ }\bibfield  {title} {\bibinfo {title} {On the complexity of k-sat},\ }\href@noop {} {\bibfield  {journal} {\bibinfo  {journal} {Journal of Computer and System Sciences}\ }\textbf {\bibinfo {volume} {62}},\ \bibinfo {pages} {367} (\bibinfo {year} {2001})}\BibitemShut {NoStop}%
\bibitem [{\citenamefont {Cygan}\ \emph {et~al.}(2016)\citenamefont {Cygan}, \citenamefont {Dell}, \citenamefont {Lokshtanov}, \citenamefont {Marx}, \citenamefont {Nederlof}, \citenamefont {Okamoto}, \citenamefont {Paturi}, \citenamefont {Saurabh},\ and\ \citenamefont {Wahlstr{\"o}m}}]{cygan2016cnfsat}%
  \BibitemOpen
  \bibfield  {author} {\bibinfo {author} {\bibfnamefont {M.}~\bibnamefont {Cygan}}, \bibinfo {author} {\bibfnamefont {H.}~\bibnamefont {Dell}}, \bibinfo {author} {\bibfnamefont {D.}~\bibnamefont {Lokshtanov}}, \bibinfo {author} {\bibfnamefont {D.}~\bibnamefont {Marx}}, \bibinfo {author} {\bibfnamefont {J.}~\bibnamefont {Nederlof}}, \bibinfo {author} {\bibfnamefont {Y.}~\bibnamefont {Okamoto}}, \bibinfo {author} {\bibfnamefont {R.}~\bibnamefont {Paturi}}, \bibinfo {author} {\bibfnamefont {S.}~\bibnamefont {Saurabh}},\ and\ \bibinfo {author} {\bibfnamefont {M.}~\bibnamefont {Wahlstr{\"o}m}},\ }\bibfield  {title} {\bibinfo {title} {On problems as hard as cnf-sat},\ }\href@noop {} {\bibfield  {journal} {\bibinfo  {journal} {ACM Transactions on Algorithms (TALG)}\ }\textbf {\bibinfo {volume} {12}},\ \bibinfo {pages} {1} (\bibinfo {year} {2016})}\BibitemShut {NoStop}%
\bibitem [{\citenamefont {Boyer}\ \emph {et~al.}(1998)\citenamefont {Boyer}, \citenamefont {Brassard}, \citenamefont {H{\o}yer},\ and\ \citenamefont {Tapp}}]{boyer1998tight}%
  \BibitemOpen
  \bibfield  {author} {\bibinfo {author} {\bibfnamefont {M.}~\bibnamefont {Boyer}}, \bibinfo {author} {\bibfnamefont {G.}~\bibnamefont {Brassard}}, \bibinfo {author} {\bibfnamefont {P.}~\bibnamefont {H{\o}yer}},\ and\ \bibinfo {author} {\bibfnamefont {A.}~\bibnamefont {Tapp}},\ }\bibfield  {title} {\bibinfo {title} {Tight bounds on quantum searching},\ }\href@noop {} {\bibfield  {journal} {\bibinfo  {journal} {Fortschritte der Physik: Progress of Physics}\ }\textbf {\bibinfo {volume} {46}},\ \bibinfo {pages} {493} (\bibinfo {year} {1998})}\BibitemShut {NoStop}%
\bibitem [{\citenamefont {Jordan}\ \emph {et~al.}(2025)\citenamefont {Jordan}, \citenamefont {Shutty}, \citenamefont {Wootters}, \citenamefont {Zalcman}, \citenamefont {Schmidhuber}, \citenamefont {King}, \citenamefont {Isakov}, \citenamefont {Khattar},\ and\ \citenamefont {Babbush}}]{jordan2025decoded}%
  \BibitemOpen
  \bibfield  {author} {\bibinfo {author} {\bibfnamefont {S.~P.}\ \bibnamefont {Jordan}}, \bibinfo {author} {\bibfnamefont {N.}~\bibnamefont {Shutty}}, \bibinfo {author} {\bibfnamefont {M.}~\bibnamefont {Wootters}}, \bibinfo {author} {\bibfnamefont {A.}~\bibnamefont {Zalcman}}, \bibinfo {author} {\bibfnamefont {A.}~\bibnamefont {Schmidhuber}}, \bibinfo {author} {\bibfnamefont {R.}~\bibnamefont {King}}, \bibinfo {author} {\bibfnamefont {S.~V.}\ \bibnamefont {Isakov}}, \bibinfo {author} {\bibfnamefont {T.}~\bibnamefont {Khattar}},\ and\ \bibinfo {author} {\bibfnamefont {R.}~\bibnamefont {Babbush}},\ }\bibfield  {title} {\bibinfo {title} {Optimization by decoded quantum interferometry},\ }\href@noop {} {\bibfield  {journal} {\bibinfo  {journal} {Nature}\ }\textbf {\bibinfo {volume} {646}},\ \bibinfo {pages} {831} (\bibinfo {year} {2025})}\BibitemShut {NoStop}%
\end{thebibliography}%

\end{document}